\title{Generative Social Choice: The Next Generation}
\date{\vspace{-1.5cm}}
\author[1]{Niclas Boehmer}
\author[2]{Sara Fish}
\author[2]{Ariel D. Procaccia}
\affil[1]{Hasso Plattner Institute, Germany} 
\affil[2]{Harvard University, USA}
\affil[ ]{\texttt{niclas.boehmer@hpi.de,sfish@g.harvard.edu,arielpro@seas.harvard.edu}} 
\newcommand{\restatehere}[1]{%
	\marginline{\vspace{0.6cm}\footnotesize \hyperlink{original#1}{\hypertarget{restated#1}{[Main]}}}%
	\csname #1\endcsname*%
}
\newtheorem{theorem}{Theorem}[section]
\newtheorem{observation}[theorem]{Observation}
\newtheorem{lemma}[theorem]{Lemma}
\theoremstyle{definition}
\newtheorem{definition}[theorem]{Definition}
\newcommand{\restateSectionHere}[1]{%
    \marginline{\vspace{0.6cm}\footnotesize \hyperlink{original#1}{\hypertarget{restated#1}{[Main]}}}%
    \section*{\csname #1\endcsname}%
}
\newcommand{\emdash}{\,---\,}
\newcommand{\supporters}{\mathrm{sup}}
\newcommand{\Gen}{\textsc{Gen}}
\newcommand{\Disc}{\textsc{Disc}}
\providecommand*{\cupdot}{%
  \mathbin{%
    \mathpalette\@cupdot{}%
  }%
}
\newcommand*{\@cupdot}[2]{%
  \ooalign{%
    $\m@th#1\cup$\cr
    \hidewidth$\m@th#1\cdot$\hidewidth
  }%
}
\DeclareMathOperator*{\argmax}{arg\,max}
\DeclareMathOperator{\argmin}{argmin}
\newtcolorbox[auto counter]{conversation}[2][]
  {
   colback=gray!5.5!white,
   colframe=black!65!black, 
   fonttitle=\bfseries,
   fontupper=\sffamily\fontsize{7.5pt}{10.5pt}\selectfont,
   colbacktitle=gray!5.5!white, enhanced,
   coltitle=black,
   attach boxed title to top left={yshift=-2.5mm, xshift=4mm},
   title=#2, boxrule=0.3pt, #1,
   rounded corners, arc=2mm,
   boxed title style={boxrule=0.3pt, rounded corners, arc=2mm},
   label type=table
   }
\DeclarePairedDelimiter\ceil{\lceil}{\rceil}
\DeclarePairedDelimiter\floor{\lfloor}{\rfloor}
\pgfplotsset{compat=1.17}
\pgfplotsset{
    legend entry/.initial=,
    every axis plot post/.code={%
        \pgfkeysgetvalue{/pgfplots/legend entry}\tempValue
        \ifx\tempValue\empty
            \pgfkeysalso{/pgfplots/forget plot}%
        \else
            \expandafter\addlegendentry\expandafter{\tempValue}%
        \fi
    },
}
\begin{document}

\maketitle

\bigskip
{\footnotesize\tableofcontents}

\newpage 
\begin{abstract}
	\begin{center}
		\textbf{\textsf{Abstract}} \smallskip
	\end{center}
	A key task in certain democratic processes is to produce a concise slate of statements that proportionally represents the full spectrum of user opinions. This task is similar to committee elections, but unlike traditional settings, the candidate set comprises all possible statements of varying lengths, and so it can only be accessed through specific queries. Combining social choice and large language models, prior work has approached this challenge through a framework of generative social choice. We extend the framework in two fundamental ways, providing theoretical guarantees even in the face of approximately optimal queries and a budget limit on the overall length of the slate. Using GPT-4o to implement queries, we showcase our approach on datasets related to city improvement measures and drug reviews, demonstrating its effectiveness in generating representative slates from unstructured user opinions.
\end{abstract}

\section{Introduction}

In the realm of AI and democracy, one of the most widely discussed systems is \emph{Polis}~\citep{SBES+21}. It enables online participants to submit statements about a given policy question and vote on the statements submitted by others. These inputs are then aggregated into a report that highlights a subset of statements seen as representative, in that they capture different salient viewpoints expressed by participants. Polis has been famously used in Taiwan, Australia, and elsewhere for national-level policymaking. A closely related system, \emph{Remesh}, has been deployed by the United Nations for peacebuilding activities~\citep{AWIK22}.\footnote{Polis and Remesh are both examples of \emph{collective response systems}~\citep{Ovad23}.}

\citet{HKPT+23} observe that the task performed by these systems\emdash selecting a representative subset of statements based on votes\emdash can be viewed as a \emph{social choice} problem. Specifically, it is an instance of \emph{committee elections}, where the statements play the role of candidates. It is therefore possible to formalize what it means for a selection to be ``representative'' using powerful notions of proportionality developed in the computational social choice literature~\citep{ABCE+17}.

Our starting point is the work of \citet{FGPP+24}, who propose taking a broader viewpoint: instead of restricting the set of candidates to the specific statements submitted by participants, we view every possible (well-formed and reasonably concise) statement as a potential candidate. 
This gives rise to two related challenges when creating ``representative'' sets of statements: \emph{generating} consensus statements and \emph{predicting} the preferences of participants over newly generated statements. In practice, both of these challenges, \citet{FGPP+24} argue, can be effectively addressed using large language models (LLMs). 

To simultaneously tap the theoretical rigor of social choice and the (notoriously inscrutable) power of LLMs, \citet{FGPP+24} put forward a two-step framework: \emph{generative social choice}. In Step 1, they seek to design a democratic process that converts survey responses into a provably representative slate of $k$ statements, \emph{assuming perfect answers to certain queries}, which constitute the building blocks of their process. In Step 2, they implement and validate those queries, showing that modern LLMs (specifically, GPT-4o) can reliably (albeit imperfectly) realize them. In an end-to-end pilot with hundreds of participants, the LLM-based democratic process of \citet{FGPP+24} generated a slate of $5$ statements that faithfully represents the views of the US population on the topic of chatbot personalization.

\subsection{Our Contributions}

In our opinion, the framework of \citet{FGPP+24} provides a compelling vision for how AI can support democratic innovation. But their implementation is, at this point, largely a proof of concept; there is much more work to do before generative social choice can be put into practice. In this paper, we make progress in this direction by enhancing generative social choice along two distinct dimensions. 

\begin{enumerate}
\item \emph{Costs and budgets.} 
The number $k$ of statements put in the slate can be chosen to match the attention span of the consumer, with larger values of $k$ allowing for the representation of more nuanced positions but imposing a greater cognitive burden. However, previously $k$ needed to be set upfront, and there was no way to control the length of statements, implying that even a small $k$ could potentially lead to a ``long'' slate.
We view the \emph{cost} of a statement as its length and require our slates to adhere to an overall \emph{budget}. In this way, for example, one can request a slate with an overall length of no more than 100 words, which the algorithm can split among an arbitrary number of statements. The addition of costs and budgets to our problem is also natural in that it mirrors the extension of committee elections to \emph{participatory budgeting}~\citep{Cab04}\emdash the task of selecting alternatives with costs subject to a budget~constraint.

\item\emph{Approximate queries.} As mentioned earlier, the theoretical model of \citet{FGPP+24}, which underlies the design of their democratic process, assumes perfect answers to certain queries. Specifically, they rely on two queries: \emph{discriminative queries}, which predict a given participant's utility for a given statement, and \emph{generative queries}, which (slightly reinterpreted) generate a statement maximizing the number of participants from a given subset that have at least a given utility for it. If these queries do not perform optimally, their theoretical guarantees break down. 

To address this shortcoming, we introduce approximate queries (also accounting for costs): Our discriminative query simply predicts utilities up to an additive error of $\beta$. Our generative query has three possible sources of error and hence three parameters $\gamma$, $\delta$, and $\mu$; it returns a statement that is optimal up to a multiplicative factor of $\gamma$, while relaxing the allowed cost by a factor of $\mu$ and the desired utility by an additive $\delta$ term. 
\end{enumerate}

Our main theoretical result is a new democratic process subject to a total budget constraint, relying on approximate queries, which guarantees approximate proportional representation. In more detail, assuming perfect answers to queries, the democratic process of \citet{FGPP+24} guarantees \emph{balanced justified representation (BJR)}, which (informally speaking) means that every subset of participants with sufficiently cohesive preferences that is large enough to deserve one statement must be represented in the output slate.
Our process extends this guarantee by assuming only approximate queries and taking costs and budgets into account; we achieve an approximate version of BJR adapted to the cost setting, which gracefully degrades as a function of the error parameters $\beta,\gamma,\delta$ and $\mu$. We also establish several lower bounds, showing that the dependence of our guarantee on the error parameters is close to optimal. Notably, our process and its guarantees are agnostic to the specific implementation of the underlying queries, and our process does not require knowledge of the magnitude of the errors in the query responses when the algorithm is executed.

In the second part of the paper, we introduce and evaluate the \textbf{Pro}portional \textbf{S}late \textbf{E}ngine (PROSE), our practical implementation of our proposed democratic process. PROSE leverages GPT-4o when answering discriminative or generative queries. 
In contrast to the implementation of \citet{FGPP+24}, PROSE is applicable to a variety of datasets, as it only requires unstructured textual user data and a target slate length as input. We evaluate PROSE on four instances drawn from drug reviews and a deliberation hosted on Polis. In each case, PROSE outperforms four baseline approaches with respect to both user satisfaction and proportionality.

Additional material, including full proofs and additional experimental results, can be found in the appendix. The code for PROSE and our other experiments is available at \url{github.com/sara-fish/gen-soc-choice-next-gen}.

\subsection{Related Work}
Our paper contributes to the study of leveraging AI to scale democratic processes \citep{KLNP+19,devine2023recursive,FGGP20,gudino2024large,landemore2022can}.
Within this literature, the emergence of LLMs has fueled efforts to create tools supporting deliberation processes at scale \citep{tessler2024ai,BCST+22,DBLP:journals/corr/abs-2411-06116,konya2023democratic,DBLP:conf/pricai/DingI23}. 
Although LLMs and AI can support various stages of a deliberation process, one key challenge is identifying consensus between participants and producing a representative summary of different opinions \citep{SVDH+23}.
A common approach in previous work has been to craft a single consensus statement that every participant endorses  \citep{tessler2024ai,BCST+22,DBLP:journals/corr/abs-2411-06116}. 
By contrast, \citet{FGPP+24} and our work aim to generate a set of (possibly conflicting) statements, with each statement representing a distinct subset of the participants. 
Another differentiating aspect of our contribution is the focus on deriving formal 
proportionality guarantees that explicitly depend on the accuracy of the underlying LLM in responding to queries.

Our work draws inspiration from and contributes to the study of committee elections in social choice \citep{lackner2023multi,faliszewski2017multiwinner}. 
Within social choice, committee elections have been extensively studied with an emphasis on defining and optimizing various measures of proportionality. 
This effort has resulted in a refined and well-understood hierarchy of proportionality axioms \citep{ABCE+17,lackner2023multi}, to which balanced justified representation, as considered here, is in some sense orthogonal~\citep{FGPP+24}.
Subsequent work has extended results on committee elections to participatory budgeting, where candidates have varying costs\emdash a model capturing, for instance, real-world applications in which a city lets voters decide on how to allocate a portion of its budget~\citep{rey2023computational,PPS21,aziz2021participatory}.
Within participatory budgeting, various preference models and ballot formats have been explored, with our model aligning closest to participatory budgeting under additive utility functions \citep{PPS21,FGM16,FMS18}. 
However, unlike prior work, we do not assume a given, finite candidate set; instead, we must explore potential candidates through our query model, introducing new algorithmic and conceptual~challenges.

\section{Model}
For two integers $x,y\in \mathbb{N}$, we let $[x,y]:=\{x,x+1,\dots, y\}$ and $[y]:=[1,y]$. 
Let $\mathcal{U}$ denote a (potentially infinite) universe of \emph{statements}, and let $c: \mathcal{U}\to \mathbb{N}_{0}$ be a cost function mapping each statement $\alpha\in \mathcal{U}$ to its cost $c(\alpha)$. We refer to a subset of statements $W\subseteq \mathcal{U}$ as a \emph{slate} and write its total cost as $c(W):=\sum_{\alpha\in W} c(\alpha)$.
Furthermore, let $N$ be a set of $n$ agents, where each agent $i\in N$ has a utility function $u_i:\mathcal{U}\to [r]$, mapping each statement $\alpha\in \mathcal{U}$ to one of $r$ possible utility values.\footnote{Utility values can represent varying degrees of agreement with a statement, e.g., levels might correspond to ``strongly approve'', ``approve'', ``indifferent'', ``disapprove'', and ``strongly disapprove''.}
For an agent $i\in N$, a statement $\alpha\in \mathcal{U}$, and a utility value $\ell\in [r]$, we say that \emph{$i$ approves $\alpha$ at level $\ell$} if $u_i(\alpha)\geq \ell$.
For a statement $\alpha \in \mathcal{U}$, a group of agents $S\subseteq N$, and a utility level $\ell\in [r]$, let $\supporters(\alpha,S,\ell):=|\{i\in S\mid u_i(\alpha)\geq \ell\}|$ denote the number of agents from $S$ approving $\alpha$ at level $\ell$.

\paragraph{General Problem Statement}
Given a budget $B\in \mathbb{N}$, our task is to select a slate $W\subseteq \mathcal{U}$ of statements of summed cost at most $B$, which proportionally represents the agents (roughly speaking, $x\%$ of the slate should represent $x\%$ of the agents; see \Cref{sub:axiom}). 
We are given the set of agents $N$ and the budget $B$, but the universe $\mathcal{U}$ and the utility functions of the agents $(u_i)_{i\in N}$ are unknown.\footnote{To simplify some arguments, we assume that for each $x\in [B]$, there is a statement $\alpha\in \mathcal{U}$ with $c(\alpha)=x$.} 
Instead, access to $\mathcal{U}$ and $(u_i)_{i\in N}$ is restricted to a set of queries (see \Cref{sec:quer}). 
We assume that the cost $c(\alpha)$  of a statement $\alpha\in \mathcal{U}$ can be computed in constant time. 
In our implementation,  statements will be texts, and the cost of a statement is determined by its length in words.
For each agent, we are provided with textual information reflecting their opinions, and the
queries will be executed with the help of LLMs. 

\subsection{Queries} \label{sec:quer}

We assume access to two types of queries. First, a \emph{discriminative query $\Disc(i,\alpha)$}  which takes an agent $i\in N$ and a statement $\alpha\in \mathcal{U}$ as input and returns $u_i(\alpha)$. Second, a \emph{generative query $\Gen(S,\ell,x)$}, which, given a set $S\subseteq N$ of agents, a utility value $\ell\in [r]$ and a cost $x$, returns the statement approved by the largest number of agents in $S$ at level $\ell$ which costs at most $x$, i.e., $\argmax_{\alpha\in \mathcal{U}: c(\alpha)\leq x} \mathrm{sup}(\alpha,S,\ell)$.
We refer to these queries as \emph{exact}. 

To account for errors that naturally arise in the implementation of queries, we introduce approximate versions: For the discriminative query, we allow for an additive error. Specifically, for $\beta\in [r]$, discriminative queries are $\beta$-accurate, if for each agent $i\in N$ and statement $\alpha\in \mathcal{U}$, $\Disc(i,\alpha)\in [u_i(\alpha)-\beta,u_i(\alpha)+\beta]$. 
For the generative query, we account for three types of errors:
\begin{enumerate*}[label=(\roman*)]
  \item a multiplicative error $\gamma$ in the number of supporters of the returned statement,
  \item a misjudgment of statements' cost by a multiplicative factor $\mu$\footnote{We introduce $\mu$ because we observed that GPT-4o often undershoots the specified word budget, suggesting that it internally searches within a more conservative space (i.e., shorter statements). Consequently, we can only expect it to identify the best statement among those with length at most $z$ for some $z \leq x$. The parameter $\mu$ captures this tendency to undershoot the budget.}, and 
  \item a misjudgment of the utility agents derive from the selected statement by an additive error $\delta$. 
\end{enumerate*}
Formally, for $0\leq \gamma,\mu\leq 1$ and $\delta\in [r]$, generative queries are \emph{$(\gamma,\delta,\mu)$-accurate} if for the statement $\alpha^*$ returned by $\Gen(S,\ell,x)$, its cost $c(\alpha^*)\leq x$ is at most $x$ and the following regarding the support of $\alpha^*$ holds: \begin{equation}\label{eq:approx-gen}
    \frac{\supporters(\alpha^*,S,\ell-\delta)}{\max_{\alpha\in \mathcal{U}: c(\alpha)\leq \ceil{\mu x}} \mathrm{sup}(\alpha,S,\ell)}\geq \gamma
\end{equation}
     for every $S\subseteq N$, $\ell\in [r]$, and $x\in [B]$. 
For $\beta=\delta=0$ and $\mu=\gamma=1$, we recover the exact queries.

\subsection{Axiom}\label{sub:axiom}
To evaluate whether a slate adequately represents the agents,  we adopt an axiomatic framework following the study of proportionality in social choice.
The key principle behind proportionality axioms is that a group of $x$ agents should have control over an $\nicefrac{x}{n}$ fraction of the budget. A violation occurs if a group can propose a ``better allocation'' of their share of the budget.
We introduce an approximate version of the \emph{balanced justified representation} (BJR) axiom \citep{FGPP+24}, adapted to our setting with statement costs:

\begin{definition}[\textbf{$(b,d)$-}\textbf{c}ost\textbf{BJR}]
    For $b\in \mathbb{N}_0$ and $d\in \mathbb{R}_{\geq 1}$, a slate $W$ satisfies $(b,d)$-costBJR if there is a function $\omega:N \to W$ matching agents to statements in a balanced way,\footnote{An assignment is balanced if for each $\alpha\in W$, exactly $\ceil{\frac{c(\alpha)\cdot n}{B}}$ or $\floor{\frac{c(\alpha)\cdot n}{B}} $ agents are assigned to $\alpha$. $\omega$ maps each agent to a statement \emph{representing} the agent in the slate.} such that no coalition $S\subseteq N$, statement $\alpha\in \mathcal{U}$ and utility threshold $\theta\in [r]$ satisfies \begin{enumerate*}[label=(\roman*)]
  \item  $|S|\geq d\cdot\ceil{\frac{c(\alpha)\cdot n}{B}}$,
  \item $u_i(\alpha)\geq \theta$ for all $i\in S$, and \item $u_i(\omega(i))<\theta-b$ for all $i\in S$. 
    \end{enumerate*}
\end{definition}
Setting $b=0$ and $d=1$ yields the  ``exact'' version of the axiom, which we call \emph{cBJR} and which can be guaranteed for correct query responses (see \Cref{th:BJR-exists}). 
Under cBJR, no coalition $S$ should exist that can control enough budget to ``buy'' a statement $\alpha$ for which all agents in $S$ derive utility above a threshold $\theta$, while their assigned statement in $W$ provides each of them a utility below $\theta$.\footnote{One might wonder why we do not simply require that all agents in $S$ prefer $\alpha$ to their assigned statement in $W$, removing the utility threshold $\theta$. The resulting axiom, a version of \emph{local stability}, has severe practical limitations: it may not always be satisfiable, and determining whether a slate satisfying the axiom exists is NP-hard \citep{AEFLS17}.}
In the approximate version, $d$ specifies how much larger a coalition must be to constitute a violation, while $b$ quantifies how much the utility threshold $\theta$ must be preferred over the assigned statement in $W$.

If we were to drop the balancedness constraint on  $\omega$, we arrive at a weaker version of the axiom to which we refer as \emph{cost-justified representation (cJR)} in which condition (iii) becomes  $u_i(\alpha)<\theta-b$ for all $\alpha\in W$ and $i\in S$. 
To illustrate the differences between cJR and cBJR, consider an instance with $9$ agents sharing opinion $X$ and $1$ agent holding the opposite opinion $\neg{X}$. Suppose there are statements $\alpha_X$ (resp.\ $\alpha_{\neg{X}}$) of length $\frac{B}{10}$ that provide maximum utility for agents with opinion $X$ (resp.\ $\neg{X}$). The slate $\{\alpha_X,\alpha_{\neg{X}}\}$ satisfies cJR but can be seen as highly misleading, as both opinions occupy equal space in the slate despite their unequal prevalence. 
In contrast, cBJR ensures proportional representation: the fraction of slate space devoted to an opinion reflects its prevalence in the agent set. For instance, in this example, cBJR would require that statements covering opinion $X$ have a 
total length of $\frac{9B}{10}$. 

At first glance, a possible limitation of cBJR is that it assumes that each agent ``cares'' only about one statement in the slate. 
However, as discussed above, larger groups of agents can still exert control over multiple statements. Moreover, as we argue in \Cref{app:EJR}, under certain assumptions in our context, cBJR implies a variant of \emph{extended justified representation (EJR)}~\citep{PPS21}, a strong proportionality axiom that explicitly models the influence of larger groups over multiple statements.

\section{Algorithm and Theoretical Guarantees}\label{sec:theory}

We now introduce our new democratic process, establish its proportionality guarantees under approximate queries, and demonstrate that these guarantees are almost optimal by establishing complementary impossibility results.

\Cref{main_alg} presents our main parameterized algorithm.
The algorithm iteratively adds statements to the slate $W$ if a sufficient number of agents ``support'' the statement.
The outer loop (Line \ref{outerouter-while}) iterates over utility levels required for agents to support a statement during the current round. 
In the inner loop (Line \ref{outer-while}), we iterate over different possible costs of the statement to be added, where the cost values we try are given by the list $C$. For each cost value $C[j]$ and utility value $\ell$, we call the generative query, where the function $f$ controls the utility levels with which we make the call (Line \ref{Gens}). 
For each generated statement, the algorithm uses the discriminative query to identify the agents who approve the statement at level $\ell$ (Line \ref{approvers}). 
For the most approved statement $\alpha^*$ (Line \ref{best-approver}), we check whether its approvers $S_{\alpha^*}$ can afford the costs of the statement ($|S_{\alpha^*}| \geq \lceil \nicefrac{c(\alpha^*)n}{B} \rceil$; Line \ref{if-cond}). If the condition is satisfied, $\alpha^*$ is added to the slate, and the agents that will be represented by $\alpha^*$ are removed from the population~(Lines~\ref{delete}–\ref{add}).

The runtime of the algorithm can be controlled by setting $C$ and $f$. We show in subsequent sections that:
\begin{enumerate*}[label=(\roman*)]
  \item  Using \(C = \{\floor{j \cdot \frac{B}{n}} \mid j \in [n]\}\) and $f(\ell) = \{\ell\}$ (referred to as \emph{Fast-DemocraticProcess}) is sufficient to achieve cBJR when queries are exact.
  \item If queries are approximate,  $C = [B]$ and $f(\ell) = [\ell, r]$  (referred to as \emph{Complex-DemocraticProcess}) leads the best guarantees.
    \end{enumerate*}
    Depending on the application, the budget $B$ can be considerably larger than all other parameters, implying that the selection of $C$ might have a substantial effect on the practical running time and cost of the process.

\paragraph{Differences from \citet{FGPP+24}}  \Cref{main_alg} builds on the democratic process proposed by \citet{FGPP+24} but differs in several key aspects. First, our algorithm accounts for statements with varying costs (e.g., word lengths), whereas the original process assumes uniform costs. Second, there is a fundamental difference in the generative query: While \citet{FGPP+24} generate a statement that maximizes the $r$-th highest utility among a given set of agents, our query takes a utility level $\ell$ and identifies a statement approved by the largest number of agents from a given set at level $\ell$. This allows our algorithm to iteratively consider decreasing utility levels for the next statement to be added, a key feature for establishing proportionality guarantees under approximate queries, an aspect not addressed by \citet{FGPP+24}.

\begin{algorithm}[t]
\caption{$\text{DemocraticProcess}_{C,f}(N,B,r)$}
\textbf{Parameters} List $C$ of cost values and function $f:[r]\to 2^{[r]}$  mapping utility values to subsets of values.
\begin{algorithmic}[1]
\STATE $S \gets N$, $W \gets \emptyset$, $\ell \gets r$
\WHILE{$\ell \geq 1$ \AND $S \neq \emptyset$}\label{outerouter-while}
    \STATE $j \gets 1$ 
    \WHILE{$B - c(W) \geq C[j]$ 
    \AND  $j\leq |C|$} \label{outer-while}
        \STATE $U \gets  \bigcup_{\ell'\in f(\ell)} \{ \Gen(S,\ell',C[j])\}$  \label{Gens}
        \STATE $S_\alpha\gets \{i \in S \mid \Disc(i,\alpha) \geq \ell\}$ for all $\alpha\in U$ \label{approvers}
        \STATE $\alpha^* \gets \argmax_{\alpha \in U} |S_{\alpha}|$ \label{best-approver}
        \IF{$|S_{\alpha^*}| \geq \lceil \nicefrac{c(\alpha^*)n}{B} \rceil$} \label{if-cond}
            \STATE $S \gets S \setminus \{\lceil \nicefrac{c(\alpha^*)\cdot n}{B} \rceil$  agents $i$ from  $S_{\alpha^*}$  with highest  $\Disc(i, \alpha^*)$  return value$\}$ \label{delete}
            \STATE $W \gets W \cup \{\alpha^*\}$\label{add}
        \ELSE
            \STATE $j \gets j + 1$
        \ENDIF
    \ENDWHILE
    \STATE $\ell \gets \ell - 1$
\ENDWHILE
\STATE \textbf{Return} $W$
\end{algorithmic}
\label{main_alg}
\end{algorithm}

\subsection{Warm-Up}
We begin by analyzing \emph{Fast-DemocraticProcess}, which satisfies cBJR when queries are exact.
The key idea is that if a statement causing a cBJR violation existed, a statement with overlapping supporters would have been added to the slate.  Notably, it is unnecessary to iterate over all possible cost values to rule out statements inducing cBJR violations, as the definition of cBJR depends only on the number of agents required to afford the statement (whether a statement costs $\floor{j \cdot \frac{B}{n}}+1$ for some $j\in [n]$ or $\floor{(j+1) \cdot \frac{B}{n}}$ is irrelevant).
In fact, the algorithm also has a favorable guarantee with respect to the supporter error $\gamma$ in the generative query:

\begin{theorem}\label{th:BJR-exists}
    For exact discriminative queries and $(\gamma,0,1)$-accurate generative queries, $\text{DemocraticProcess}_{C,f}$ satisfies $(0,\nicefrac{1}{\gamma})$-cBJR if $\{\floor{j\cdot \frac{B}{n}} \mid j\in [n]\}\subseteq C$ and $\ell\in f(\ell) $  for all $\ell\in[r]$. 
\end{theorem}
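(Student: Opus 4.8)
The plan is to take $\omega$ to be the assignment induced by the algorithm itself: whenever a statement $\alpha^*$ is added in Line~\ref{add}, I assign to it exactly the $\ceil{c(\alpha^*)n/B}$ agents deleted for it in Line~\ref{delete}. My first task is to check that this $\omega$ is a well-defined, total, balanced assignment. Totality is the only nontrivial point: I must show the algorithm terminates with $S=\emptyset$. The key observation is that every agent has $u_i(\alpha)\ge 1$ for all $\alpha$ (utilities live in $[r]$), so at the final level $\ell=1$ the generative query (available since $1\in f(1)$) returns a statement approved at level $1$ by all of $S$; with exact discriminative queries the cheapest cost value $\floor{B/n}\in C$ then lets the algorithm represent agents one at a time as long as $S\neq\emptyset$ and the budget permits. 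A short counting argument---any slate $W$ has deleted $\sum_{\alpha\in W}\ceil{c(\alpha)n/B}\ge \tfrac nB c(W)$ agents---shows that if the remaining budget ever drops below $\floor{B/n}$ then already $n$ agents have been deleted, so in every case $S=\emptyset$ at termination. Hence each statement receives exactly its ceiling, the total is $n$, and $\omega$ is balanced. (A degenerate regime with $B<n$ or generated statements of cost $0$ would need a side assumption and I would dispatch it separately.)

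Next I argue no coalition violates $(0,\nicefrac{1}{\gamma})$-cBJR with respect to this $\omega$. Suppose for contradiction that $S^*,\alpha,\theta$ form a violation. If $c(\alpha)>B$ then condition (i) forces $|S^*|\ge \tfrac1\gamma\ceil{c(\alpha)n/B}>n$, impossible, so $c(\alpha)\le B$. Write $t:=\ceil{c(\alpha)n/B}\in[n]$ and $x:=\floor{tB/n}$; then $x\in C$ by hypothesis, and since $tB/n\ge c(\alpha)$ with $c(\alpha)$ integral we get $x\ge c(\alpha)$. The structural heart of the argument is that no agent of $S^*$ is ever deleted at an outer-loop level $\ell\ge\theta$: if agent $i$ is deleted for $\beta$ at level $\ell$, then $\Disc(i,\beta)\ge\ell$, so by exactness $u_i(\omega(i))=u_i(\beta)\ge\ell$, and for $\ell\ge\theta$ this contradicts condition (iii), $u_i(\omega(i))<\theta$. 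Consequently $S^*\subseteq S$ throughout the entire level-$\theta$ round.

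The contradiction now comes from the budget. Whenever the level-$\theta$ round considers cost $x$ with remaining budget at least $x$, the query $\Gen(S,\theta,x)$ (available since $\theta\in f(\theta)$) returns, by \eqref{eq:approx-gen} with $\delta=0$ and $\mu=1$, a statement whose support at level $\theta$ is at least $\gamma\cdot\max_{c(\alpha')\le x}\supporters(\alpha',S,\theta)\ge\gamma\cdot\supporters(\alpha,S,\theta)\ge\gamma|S^*|\ge t$, using $c(\alpha)\le x$ and $S^*\subseteq S$. As this statement has cost at most $x$, it already passes the affordability test in Line~\ref{if-cond} (since $t\ge\ceil{xn/B}\ge\ceil{c(\cdot)n/B}$), and the statement $\alpha^*$ actually chosen in Line~\ref{best-approver} is only better, so a statement is added. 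Thus the inner loop can never advance past cost $x$ while the budget is at least $x$; it therefore runs until the remaining budget falls below $x$, giving $c(W_\theta)>B-x\ge B-tB/n$ for the slate $W_\theta$ at the end of the level-$\theta$ round. But then the number of deleted agents is $\sum_{\alpha'\in W_\theta}\ceil{c(\alpha')n/B}\ge\tfrac nB c(W_\theta)>n-t$, leaving at most $t-1$ agents undeleted---whereas all of $S^*$, with $|S^*|\ge t/\gamma\ge t$, is still present, a contradiction.

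The main obstacle, I expect, is precisely this budget/counting step and its interaction with the cost schedule $C$: one must choose the probe cost $x=\floor{tB/n}$ so that simultaneously $x\in C$, the statement $\alpha$ stays feasible ($x\ge c(\alpha)$), the affordability test is no stronger than ``$t$ supporters'' ($\ceil{xn/B}\le t$), and consuming the budget below $x$ forces more than $n-t$ agents to have been represented. Establishing that the inner loop genuinely cannot skip cost $x$---rather than incrementing $j$ past it---requires care about the order in which $C$ is traversed and is where the argument is most delicate; by contrast, the structural claim that $S^*$ survives through level $\theta$ is comparatively routine once exactness of $\Disc$ is invoked.
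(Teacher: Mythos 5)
Your proof is correct and takes essentially the same route as the paper's: the same algorithm-induced balanced assignment $\omega$ (\Cref{le:balanced}), the same structural claim that no member of a violating coalition is ever deleted at a level $\ell \geq \theta$, and the same invocation of $\Gen(S,\theta,\floor{tB/n})$ together with the ceiling arithmetic to show that an affordable statement with at least $t$ supporters would be added. Your only departure is the closing bookkeeping: where the paper contradicts the definition of the last visit at the pair $(\theta,\floor{tB/n})$, you let the level-$\theta$ round run to budget exhaustion and recount deleted agents, which is just an inlined form of the paper's invariant (\Cref{obs:budgetLeft}); both arguments also rest on the same implicit assumption, which you rightly flag, that $C$ is traversed in increasing order of cost.
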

\begin{proof}
    Let $W^*$ be the returned slate. 
    For each $i\in N$, let $\omega(i)$ be the statement added to $W^*$ during the round in which $i$ was removed from $S$.
    By \Cref{le:balanced}, $\omega:N\to W^*$ and $\omega$ is balanced.
    For contradiction, assume there exists a group $T$  of agents witnessing a $(0,\nicefrac{1}{\gamma})$-cBJR violation for $W^*$ at threshold $\theta$, caused by statement $\alpha'$.
    Let $t:=\ceil{\frac{c(\alpha')\cdot n}{B}}$ and observe that $|T|\geq\frac
    {1}{\gamma}t$. 
    Consider the first agent $\tilde{i}\in T$ removed from $S$ during the algorithm. 
    Let $\tilde{\ell}$ and $\tilde{\alpha^*}$ be the values of the respective variables $\ell$ and $\alpha^*$ during the iteration when $\tilde{i}$ is removed from $S$.
    By definition, $\omega(\tilde{i})=\tilde{\alpha^*}$ and $u_{\tilde{i}}(\tilde{\alpha^*})\geq \tilde{\ell}$. Therefore, it suffices to show that $\tilde{\ell} \geq \theta$, contradicting $\tilde{i}\in T$. 

    Assume for contradiction that $\tilde{\ell}< \theta$. 
    Consider the last iteration in which Line \ref{outer-while} is visited for $\ell=\theta$ and $C[j]=\floor{t\cdot \frac{B}{n}}$ (no statement is added in this iteration). 
    Note that as $\tilde{i}$ is the first agent from $T$ that gets removed from $S$, we have $T\subseteq S$ in this iteration, implying that the body of the loop is visited, as $|S|\frac{B}{n}\geq t\frac{B}{n}\geq C[j]$ (cf. \Cref{obs:budgetLeft}).
    In Line \ref{Gens}, we will call $\Gen(S,\theta,\floor{t\cdot \frac{B}{n}})$ and let $\zeta$ be the returned statement. 
   As generative queries are $(\gamma,0,1)$-accurate, it holds that $\supporters(\zeta,S,\theta)\geq t$ (since  $\supporters(\alpha',T,\theta)=|T|\geq \frac{1}{\gamma}t$ and  $\floor{t\cdot \frac{B}{n}}\geq \floor{\frac{c(\alpha')\cdot n}{B}\cdot \frac{B}{n}}=c(\alpha')$).
    As a result, we will have $|S_{\alpha^*}|\geq t$ in Line \ref{if-cond} with $c(\alpha^*)\leq \floor{t\cdot \frac{B}{n}}$, implying that $\ceil{\frac{c(\alpha^*)\cdot n}{B}}\leq \ceil{\frac{\floor{t\frac{B}{n}}\cdot n}{B}}\leq \ceil{\frac{t\frac{B}{n}\cdot n}{B}}=t\leq |S_{\alpha^*}|$. Thus, $\alpha^*$ will be added to $W$, a~contradiction. 
\end{proof}

\subsection{Approximate Queries: Guarantees}

When additional sources of error ($\beta$, $\delta$, and $\mu$) are introduced into our queries, \emph{Fast-DemocraticProcess} no longer provides (approximate) proportionality guarantees.
To understand this, consider $\beta$-accurate discriminative queries for some $\beta>0$. The issue in \emph{Fast-DemocraticProcess} arises in Line \ref{Gens}, where a statement $\alpha$ is generated that is supported by agents at utility level $\ell$. However, due to errors in the discriminative query, none of these agents may end up being included in $S_\alpha$. As a result, the algorithm may ultimately produce a slate in which all agents have a utility of only $1$.
In turn, with cost errors ($\mu$), we can no longer restrict our focus to querying for specific statement costs, as statements with these costs might be missed given the overestimation of costs in the generative query.

To address these challenges, we consider a more expensive version of $\text{DemocraticProcess}_{C,f}$, \emph{Complex-DemocraticProcess}, where $C=[B]$ and $f(\ell)=[\ell,r] $. Using an extended version of the strategy from \Cref{th:BJR-exists}, we can establish the following bound: 
\begin{restatable}{theorem}{fullGuarantees}
\label{th:fullGuarantees}
    For $\beta$-accurate discriminative queries and $(\gamma,\delta,\mu)$-accurate generative queries,  $\text{DemocraticProcess}_{C,f}$ satisfies $(2\beta+\delta,\frac{1}{\gamma\mu})$-cBJR if $C=[B]$ and $[\ell,r]\subseteq f(\ell) $ for all $ \ell\in [r]$.
\end{restatable}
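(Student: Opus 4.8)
The plan is to mirror the proof of \Cref{th:BJR-exists}, but with each inequality loosened to absorb the three new error sources. As before, I would define $\omega(i)$ to be the statement added during the round in which agent $i$ was removed from $S$; by \Cref{le:balanced} this is a balanced assignment into the returned slate $W^*$. I then assume for contradiction that some coalition $T$ witnesses a $(2\beta+\delta,\frac{1}{\gamma\mu})$-cBJR violation at threshold $\theta$ via statement $\alpha'$, so that $u_i(\alpha')\geq\theta$ for all $i\in T$, $u_i(\omega(i))<\theta-(2\beta+\delta)$ for all $i\in T$, and $|T|\geq\frac{1}{\gamma\mu}\ceil{\frac{c(\alpha')\cdot n}{B}}=:\frac{1}{\gamma\mu}t$. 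Taking $\tilde i\in T$ to be the first coalition member removed from $S$, I let $\tilde\ell,\tilde{\alpha^*}$ be the relevant variable values at that moment. The goal, as before, is to show $\tilde\ell$ is large enough to contradict membership in $T$ — but now the conclusion will only be $u_{\tilde i}(\omega(\tilde i))\geq\tilde\ell-2\beta$ rather than an exact bound, because the discriminative query used to select $\tilde\ell$-approvers and to prune in Line~\ref{delete} is only $\beta$-accurate. So I would aim to prove $\tilde\ell\geq\theta-\delta$, which combined with the $2\beta$ slack yields $u_{\tilde i}(\omega(\tilde i))\geq\theta-\delta-2\beta$, contradicting condition (iii).

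To establish $\tilde\ell\geq\theta-\delta$, I again argue by contradiction: suppose $\tilde\ell<\theta-\delta$. Then consider the last iteration of the inner loop with $\ell=\theta-\delta$ (note $\theta-\delta$ may need clamping to $[r]$, a detail I'd handle) and with $C[j]$ equal to an appropriate cost value. Because $C=[B]$ is now the full range, I can select the exact cost index I need rather than rounding to a multiple of $\frac{B}{n}$; and because $f(\ell)=[\ell,r]\supseteq\{\theta\}$ whenever $\ell\le\theta$, the outer loop at level $\tilde\ell<\theta-\delta\le\theta$ will in fact issue a generative call at the higher level $\theta$. The crucial move is to invoke $(\gamma,\delta,\mu)$-accuracy on this call: the benchmark $\max_{\alpha:c(\alpha)\le\ceil{\mu x}}\supporters(\alpha,S,\theta)$ must be at least $t$, which forces the chosen $x$ to satisfy $\ceil{\mu x}\geq c(\alpha')$, i.e. $x\approx c(\alpha')/\mu$; this is exactly why the budget relaxation $\mu$ shows up multiplied into the coalition-size threshold $\frac{1}{\gamma\mu}$. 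Since $T\subseteq S$ at this point and $\supporters(\alpha',T,\theta)=|T|\geq\frac{1}{\gamma\mu}t$, inequality \eqref{eq:approx-gen} gives that the returned statement $\zeta$ has $\supporters(\zeta,S,\theta-\delta)\geq\gamma\cdot\frac{t}{\gamma}=t$. Finally I translate this true-utility support count into the observed count $|S_\zeta|$ computed by the $\beta$-accurate discriminative query in Line~\ref{approvers}: agents with $u_i(\zeta)\geq\theta-\delta$ report $\Disc(i,\zeta)\geq\theta-\delta-\beta\ge\tilde\ell$ only if the level tested is low enough, so I must be careful to compare against the round's level $\tilde\ell$ rather than $\theta-\delta$, which is where the second $\beta$ of slack is consumed.

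The main obstacle I anticipate is the bookkeeping around the two independent uses of $\beta$ and the interaction between the generative level $\theta$, the utility relaxation $\delta$, and the round level $\ell$ over which Line~\ref{approvers} filters. In \Cref{th:BJR-exists} the level tested in the query, the level tested in $S_\alpha$, and the threshold $\theta$ all coincide, so a single clean support bound $\supporters(\zeta,S,\theta)\geq t$ suffices. Here they genuinely differ: the generative call is at level $\theta$, accuracy only guarantees support at the degraded level $\theta-\delta$, the discriminative filter in Line~\ref{approvers} uses the round level $\ell$, and discriminative noise can shift each reported value by up to $\beta$. I would need to verify that the inner-loop cost index with $C=[B]$ genuinely reaches a value $x$ with $\ceil{\mu x}\geq c(\alpha')$ while still respecting the remaining-budget guard $B-c(W)\geq C[j]$ (using an analogue of \Cref{obs:budgetLeft}), and that the affordability test in Line~\ref{if-cond} together with the rounding $\ceil{\frac{c(\alpha^*)n}{B}}\le t\le|S_{\alpha^*}|$ still forces an addition. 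The pruning step in Line~\ref{delete}, which removes the $t$ agents of \emph{highest reported} value, is the delicate point: I must ensure that after $\tilde i$'s round the agent $\tilde i$ receives a statement whose true utility is within $\beta$ of $\tilde\ell$, accounting for the possibility that $\tilde i$ was retained or pruned based on noisy scores, which is precisely what accumulates into the $2\beta$ additive term in the final $(2\beta+\delta,\frac{1}{\gamma\mu})$-cBJR guarantee.
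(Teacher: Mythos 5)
Your proposal follows the same skeleton as the paper's proof\,---\,the balanced assignment from \Cref{le:balanced}, the first-removed agent $\tilde i\in T$, the generative call at level $\theta\in f(\ell)$ inside a lower-level round with cost index $C[j]=\floor{c(\alpha')/\mu}$\,---\,but your error accounting breaks at the crucial step, and the break is fatal to the plan as written. Your intermediate target $\tilde\ell\geq\theta-\delta$ cannot be established and is in fact false in general. In the round at level $\ell=\theta-\delta$, the generative query only guarantees $\supporters(\zeta,S,\theta-\delta)$ is large, i.e., \emph{true} utility at least $\theta-\delta$; those agents may report $\Disc(i,\zeta)$ as low as $\theta-\delta-\beta<\ell$, so every one of them can fail the filter in Line~\ref{approvers}, the if-condition in Line~\ref{if-cond} can fail, and the round can end without adding any statement (an adversarial discriminative query that always underestimates by exactly $\beta$ realizes this, pushing $\tilde\ell$ down to $\theta-(\beta+\delta)$). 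Your proposed repair\,---\,``compare against the round's level $\tilde\ell$ rather than $\theta-\delta$''\,---\,does not parse: the round you chose to analyze has filter level $\theta-\delta$ by construction. The paper instead targets $\tilde\ell\geq\theta-(\beta+\delta)$, so that true $(\theta-\delta)$-supporters of $\zeta$ satisfy $\Disc(i,\zeta)\geq\theta-\delta-\beta=\ell$ and survive the filter; the second $\beta$ is then spent only at the very end, converting $\Disc(\tilde i,\tilde\alpha^*)\geq\tilde\ell$ (which holds simply because $\tilde i\in S_{\tilde\alpha^*}$) into $u_{\tilde i}(\tilde\alpha^*)\geq\tilde\ell-\beta$. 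Relatedly, your claim that the pruning in Line~\ref{delete} costs an extra $\beta$ is mistaken: pruning keeps the agents with \emph{highest} reported scores, so membership in $S_{\tilde\alpha^*}$ already gives the needed bound and no additional slack is consumed there.

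There is a second, independent slip in the $\mu$ arithmetic. From $|T|\geq\frac{1}{\gamma\mu}t$ with $t=\ceil{\frac{c(\alpha')\cdot n}{B}}$, inequality \eqref{eq:approx-gen} yields $\supporters(\zeta,S,\theta-\delta)\geq\gamma\cdot\frac{1}{\gamma\mu}t=\frac{t}{\mu}$, not $t$ as you computed (your ``$\gamma\cdot\frac{t}{\gamma}=t$'' silently drops $\mu$). The stronger bound is not optional: the statement $\alpha^*$ selected in Line~\ref{best-approver} need not be $\zeta$ and can have cost as large as $C[j]=\floor{c(\alpha')/\mu}\geq c(\alpha')$, so the affordability test in Line~\ref{if-cond} may demand roughly $\ceil{\frac{C[j]\cdot n}{B}}\approx\frac{t}{\mu}$ supporters, which a bound of $t$ does not clear when $\mu<1$. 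The paper closes exactly this gap via the chain $|S_\zeta|\geq\frac{1}{\mu}\ceil{\frac{c(\alpha')\cdot n}{B}}$ together with $\ceil[\big]{\frac{1}{\mu}\ceil{\frac{c(\alpha')\cdot n}{B}}}\geq\ceil[\big]{\frac{C[j]\cdot n}{B}}$, using \Cref{le:simple} to certify that $\alpha'$ is eligible for the benchmark, i.e., $\ceil{\mu\cdot C[j]}\geq c(\alpha')$. So while your high-level route is the paper's route, both the level split and the supporter-count bound need to be repaired before the argument goes through.
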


This result demonstrates the different influence that error sources have on the approximation guarantees. 
Upon closer examination, these dependencies become intuitive: \begin{enumerate*}[label=(\roman*)]
  \item
The multiplicative error $\gamma$ in the number of supporters implies that a statement $\alpha$ must be approved by $\frac{1}{\gamma}\ceil{\nicefrac{c(\alpha)\cdot n}{B}}$ agents (at some level) to ensure a statement approved by $\ceil{\nicefrac{c(\alpha)\cdot n}{B}}$ agents is generated.
\item For the cost error $\mu$, the generative query effectively misjudges the cost of statements by a factor of $\mu$. This requires the group of supporters to be larger by the same factor for the generative query to recognize the statement as ``admissible''.
\item The discriminative error $\beta$ leads to errors in Lines \ref{approvers} and \ref{delete}, leading to incorrect selection of agents and a reduction in the utility agents derive from the statements they are matched to.
\item Similarly, the utility error $\delta$ in the generative query reduces the utility agents have for the statement returned by the generative query by up to $\delta$.
\end{enumerate*}

\subsection{Approximate Queries: Impossibility Results}

In this section, we demonstrate that the approximate guarantees of \emph{Complex-DemocraticProcess} cannot be significantly improved. We begin by establishing a lower bound on the impact of errors in utility judgments, which exactly matches the guarantee provided by \Cref{th:fullGuarantees}:

\begin{restatable}{theorem}{discerror}
\label{th:disc_error}
    Fix $\beta,\delta\in [r]$ and $\epsilon>0$. No algorithm with access to $\beta$-accurate discriminative queries and $(1,\delta,1)$-accurate generative queries can guarantee $(2\beta+\delta-\epsilon,1)$-cBJR [or $(2\beta+\delta-\epsilon,1)$-cJR], even when all statements have unit cost.
\end{restatable}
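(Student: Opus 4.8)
The plan is to prove this impossibility result by constructing an explicit adversarial instance together with an adversarial query-answering strategy, and then showing that \emph{no} slate produced by \emph{any} algorithm can avoid a $(2\beta+\delta-\epsilon,1)$-cBJR violation against the true underlying utilities. Since the algorithm only sees the world through the (approximate) queries, the key is that the adversary can keep several distinct ``true'' utility profiles consistent with the observed query answers, and then choose whichever true profile makes the output slate violate the axiom. The errors $\beta$ (discriminative) and $\delta$ (generative) stack additively exactly as in the guarantee of \Cref{th:fullGuarantees}: intuitively, the adversary can report a statement as having utility $\theta$ to an agent when the truth is $\theta-\delta$ (exploiting the generative slack), and then further shift the discriminative reports by $\beta$ in each direction, hiding a true gap of $2\beta+\delta$ between what the agent is told it receives and what it actually receives.

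First I would fix the simplest possible instance that makes the arithmetic transparent: unit-cost statements (as permitted by the statement), a single coalition of the relevant size, and a target utility threshold $\theta$ chosen large enough that the additive gaps $\beta$ and $\delta$ ``fit'' inside $[r]$. Concretely, I would take a group $S$ of exactly $\ceil{c(\alpha)n/B}=1$ worth of agents (with $d=1$, the coalition just needs to be large enough to afford one unit-cost statement) who all truly have high utility $u_i(\alpha')\ge\theta$ for some statement $\alpha'$, but whose assigned statement $\omega(i)$ in any reasonable slate gives them true utility below $\theta-(2\beta+\delta)$. The crux is designing the query oracle so that the algorithm cannot distinguish this bad-for-it world from a benign one. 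For the discriminative query I would have the oracle always report values shifted toward the ``wrong'' direction within the $\beta$-window: report $u_i(\omega(i))$ as $\beta$ higher than the truth and report $u_i(\alpha')$ as $\beta$ lower than the truth, so that the apparent gap the algorithm sees is $2\beta$ smaller than the real gap. For the generative query, since it is $(1,\delta,1)$-accurate, I would have it return statements whose \emph{claimed} support at level $\ell$ is backed only by true utility $\ell-\delta$, so the algorithm believes it is securing utility $\theta$ for agents while truly securing only $\theta-\delta$.

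The main technical step, and the expected obstacle, is to verify that the adversarial oracle is genuinely \emph{consistent}: every reported answer must lie within the allowed error window of \emph{some} fixed true profile simultaneously, so that the whole transcript is a legal $(\beta)$- and $(1,\delta,1)$-accurate response pattern, not merely legal query-by-query. I would therefore commit to one concrete true utility function per agent up front, check that the discriminative reports land in $[u_i(\alpha)-\beta,u_i(\alpha)+\beta]$ for every statement the algorithm might probe, and check that the returned generative statements satisfy inequality \eqref{eq:approx-gen} with $\gamma=\mu=1$ against that same profile. Once consistency is established, the argument closes cleanly: whatever slate $W$ and assignment $\omega$ the algorithm commits to, the coalition $S$ together with $\alpha'$ and threshold $\theta$ witnesses $u_i(\alpha')\ge\theta$ and $u_i(\omega(i))<\theta-(2\beta+\delta-\epsilon)$, which is exactly a $(2\beta+\delta-\epsilon)$-cBJR violation (the $\epsilon$ slack absorbs boundary/integrality issues). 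Finally I would note that the identical instance also defeats cJR, since the violating coalition's members have low true utility for \emph{every} statement the algorithm could afford to place, not just for their assigned one; the bracketed ``or cJR'' claim then requires only observing that condition (iii) of cJR is met for all $\alpha\in W$ by the same construction.
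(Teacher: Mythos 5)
There is a genuine gap. Your concrete plan fixes a single coalition $S$ in advance and commits to one true profile up front, and neither piece survives an adaptive algorithm. First, the mechanism you propose for creating the deficit is wrong: if coalition members end up assigned to the statements your generative oracle returns, their true utility is $\theta-\delta=2\beta+1$ (taking $\theta=2\beta+\delta+1$), which is \emph{not} below the violation threshold $\theta-(2\beta+\delta-\epsilon)=1+\epsilon$, since $\beta\geq 1$. So the algorithm defeats your adversary simply by adding one returned statement and assigning the coalition to it. The violating coalition must instead end up with statements of bottom utility $1$; the query errors can only serve to \emph{blind} the algorithm, not to directly degrade the utility of assigned statements. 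Second, with a single known coalition the blinding fails anyway: a $(1,\delta,1)$-accurate generative query called on a set consisting mostly of coalition members is forced to return a statement that many of them truly approve at level $\theta-\delta$, and your role-based discriminative shifting (``report $\omega(i)$ high, report $\alpha'$ low'') is circular, because which statement becomes $\omega(i)$ is decided by the algorithm only at the end. There is also an internal tension: choosing ``whichever true profile makes the slate violate'' at the end is incompatible with committing to one profile up front, and resolving that tension is exactly the hard part of the theorem.

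The paper's proof supplies the missing machinery. It creates, for \emph{every} subset of agents, a hidden statement (utility $2\beta+\delta+1$ on the subset) together with exponentially many indistinguishable visible copies (utility $2\beta+1$); the generative oracle only ever returns fresh visible copies (legal, because a visible statement's support at level $\ell-\delta$ equals the hidden optimum's support at level $\ell$), and the discriminative oracle answers the constant $1+\beta$ on every agent--statement pair (legal, because every returned statement has true utility in $\{1,2\beta+1\}$, both within $\beta$ of $1+\beta$). The transcript is therefore completely uninformative, so the adversary can fix the true profile \emph{after} the run: it chooses a size-$\nicefrac{n}{B}$ set $X$ satisfying the half condition $|N_i\cap X|\leq \nicefrac{|N_i|}{2}$ for every query set $N_i$ whose statement was selected, which makes every slate statement truly serve only agents outside $X$ and give $X$ utility $1$; a counting argument (e.g., $n=2B$, $B\geq 8$) shows such an $X$ exists no matter which $B$ statements the algorithm selected. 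Then $X$, the hidden statement $\alpha^*_X$, and $\theta=2\beta+\delta+1$ witness the $(2\beta+\delta-\epsilon,1)$-cBJR (and cJR) violation. Without this many-coalitions-plus-counting structure, and without the constant-answer discriminative oracle that makes supporters and non-supporters indistinguishable, the impossibility does not go through.
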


Turning to errors in the generative query regarding the number of supporters and cost, we establish the following impossibility result:  

\begin{restatable}{theorem}{generror}
\label{th:gen_error}
    Fix $\gamma,\mu\in \mathbb{Q}_{1 > x\geq0}$, and $p\in \mathbb{N}_0$. No algorithm with access to exact discriminative queries and $(\gamma,0,\mu)$-accurate generative queries can guarantee $(p,\frac{1}{\mu}\frac{|W|}{|W|\gamma+1})$-cBJR [and $(p,\frac{1}{\mu}\frac{|W|}{|W|\gamma+1})$-cJR], where $W$ it the slate returned by the algorithm.\footnote{Technically, our proof shows the following: For a given instance with budget $B$, where $c$ is the minimum cost of a statement that is approved at level $2$ by at least one agent and returned by the generative query, no algorithm can guarantee $(p,\frac{1}{\mu}\frac{\nicefrac{B}{c}}{\nicefrac{B}{c}\gamma+1})$-cBJR.}
\end{restatable}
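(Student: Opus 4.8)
The plan is to prove this impossibility by exhibiting, for each choice of $\gamma,\mu,p$, a single hard instance together with a fixed way of answering the queries that is provably $(\gamma,0,\mu)$-accurate, and then showing that whatever slate $W$ the algorithm commits to must admit a $(p,d)$-cBJR (resp.\ cJR) violation for $d=\frac{1}{\mu}\frac{m}{m\gamma+1}$, where $m=B/c$ as in the footnote. Since the algorithm only ever learns about $\mathcal{U}$ and the utilities through query answers, its output slate can only contain statements previously returned by a generative query (together with trivially reconstructible ones such as the empty statement); the argument will control this reachable pool.

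First I would set up the utility gadget that realizes the additive slack $p$. Take $r=p+2$ and work with the threshold $\theta=p+2$, so that a $(p,\cdot)$-violation at $\theta$ requires every deviating agent to be assigned a statement of utility below $\theta-p=2$, i.e.\ of utility exactly $1$. I would then designate a deserving coalition and a common ``good'' statement $\alpha^*$ of cost $c$ that all of its members approve at level $p+2$, while every agent has utility $1$ for all ``bad'' statements. The role of the two generative-query parameters is then isolated as follows. The factor $\mu$ is exploited by making all statements that actually help the coalition cost $c$: since the accuracy of $\Gen(S,\ell,x)$ is only measured against $\max_{\alpha:\,c(\alpha)\le\ceil{\mu x}}\supporters(\alpha,S,\ell)$, such a statement is ``invisible'' to the query unless $\ceil{\mu x}\ge c$, i.e.\ unless roughly $x\ge c/\mu$ budget is requested, which inflates the cost the algorithm must pay per useful statement and scales the usable count $B/c$ down by a factor $\mu$. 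The factor $\gamma$ is exploited by always answering with a statement whose true level-$\theta$ support is only a $\gamma$-fraction of the optimum, so that every statement the algorithm can reach is ``weak'': it represents at most a $\gamma$-fraction of the group the algorithm was probing.

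The core of the proof is a counting argument bounding how many agents of the deserving coalition can be simultaneously represented. For cBJR I would use the balanced-assignment accounting: each statement of cost $c$ can be credited to only $\ceil{cn/B}$ agents of $\omega$, so representing the coalition well requires many distinct reached statements, and combining the $\gamma$-weakness with the $\mu$-inflated budget bounds both the number and the strength of reachable statements, leaving a sub-coalition of size at least $\frac{1}{\mu}\frac{m}{m\gamma+1}\ceil{cn/B}$ that every balanced $\omega$ must leave at utility $1$ while all of them approve $\alpha^*$ at level $\theta$; this is exactly the claimed violation. For the bracketed cJR statement the balancedness of $\omega$ is unavailable, so I would instead bound the number of distinct good statements the slate can contain (at most $B/c=m$) and design a family of good statements whose reachable members necessarily overlap, so that any $m$ of them leave a common-statement coalition of the same size entirely uncovered. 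In both cases the constant $\frac{m}{m\gamma+1}$ — in particular the ``$+1$'' — should emerge from equating the total representable mass, inflated by $1/\mu$ and diluted by $\gamma$, against the coalition threshold.

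The main obstacle I anticipate is maintaining $(\gamma,0,\mu)$-accuracy against an \emph{adaptive} algorithm: it may probe any sub-coalition $S$, and if it probes exactly the still-unrepresented agents, $\gamma$-accuracy forces the answer to reveal a fresh statement covering a $\gamma$-fraction of them, which threatens to let the algorithm cover the coalition piece by piece. The construction must therefore pin down the universe of good statements (their approver sets and costs) so that these forced reveals are necessarily redundant under the balanced-assignment (cBJR) or the slate-size (cJR) accounting, and I must verify that one fixed instance is consistent with all the answers simultaneously, so that the bound is genuinely information-theoretic rather than tied to a particular algorithm. Getting this overlap structure to yield the exact factor $\frac{m}{m\gamma+1}$, and checking the rounding in the definitions of balancedness and of $\ceil{\mu x}$, is where the real work lies; the footnote's sharpening from $|W|$ to $B/c$ then follows because $\tfrac{x}{x\gamma+1}$ is increasing in $x$ and $W$ contains at most $B/c$ statements of cost at least $c$.
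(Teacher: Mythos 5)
Your high-level strategy---fix an adversarial but consistent way of answering queries, bound the number of agents that the statements reachable by the algorithm can cover, and let an uncovered coalition witness the violation---is the same as the paper's, and your treatment of $\gamma$ (every answer supports only a $\gamma$-fraction of the probed group) is exactly right. The genuine gap is in how you extract the factor $\nicefrac{1}{\mu}$. You make \emph{all} statements that help the coalition cost $c$, and argue that since such statements are visible to $\Gen(S,\ell,x)$ only when $\lceil \mu x\rceil\geq c$, the algorithm must pay roughly $c/\mu$ per useful statement, scaling the usable count down by $\mu$. This conflates the query budget $x$ with the cost charged to the slate: the feasibility constraint is $c(W)\leq B$, consumed by the \emph{actual} costs of the selected statements, not by the budgets requested in queries. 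In your construction, whenever $x\geq c/\mu$, $\gamma$-accuracy forces the adversary to return a statement with positive support, and every such statement costs $c$ (no others exist); hence the slate can still hold $m=B/c$ useful statements, total coverage is at most $m\gamma g$, and choosing $g=n/(m\gamma+1)$ optimally, the uncovered coalition of size $g$ satisfies only $g=\frac{m}{m\gamma+1}\lceil cn/B\rceil$, which is a factor of $\mu$ short of the required $\frac{1}{\mu}\frac{m}{m\gamma+1}\lceil cn/B\rceil$. As described, your instance only rules out $(p,\frac{m}{m\gamma+1})$-cBJR, and the claimed factor $\nicefrac{1}{\mu}$ never materializes.

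The missing idea is the paper's two-tier cost structure. Write $\mu=a/b$ and, for every group $S$ of size at most $g$, put \emph{two} statements approved exactly by $S$ into the universe: a hidden cheap one $\alpha^*_S$ of cost $a$ and a visible expensive one $\alpha_S$ of cost $b$. The adversarial generative query returns only expensive statements (for an arbitrary subgroup of size $\min(|S|,\gamma g)$), and returns a worthless dummy when $x<b$; this is $(\gamma,0,\mu)$-accurate precisely because accuracy is measured against statements of cost at most $\lceil\mu x\rceil$, which excludes even the cheap statements whenever $x<b$. The two costs then play different roles: the slate holds at most $k=B/b$ returned statements, so at most $k\gamma g$ agents are covered and, with $n=g(k\gamma+1)$, at least $g$ agents approve nothing in the slate; but this coalition's deservingness is measured against its \emph{cheap} witness $\alpha^*_S$, i.e., against $\lceil an/B\rceil$. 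The ratio $b/a=\nicefrac{1}{\mu}$ between the cost of returned statements and the cost of the witness is exactly where the factor comes from: $g=\frac{1}{\mu}\frac{k}{k\gamma+1}\lceil an/B\rceil$. Two smaller remarks: your anticipated obstacle about adaptive probing requiring an engineered ``overlap structure'' is a non-issue---once statements exist for \emph{every} group of size at most $g$, the global count $k\gamma g$ bounds coverage regardless of how the algorithm probes; and no separate cJR argument is needed, since the $g$ uncovered agents have utility $1$ for every statement in the slate, which witnesses the cBJR violation (under any assignment $\omega$) and the cJR violation simultaneously.
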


Note that both \Cref{th:disc_error} and \Cref{th:gen_error}  assume that certain aspects of the queries are error-free---specifically, \Cref{th:disc_error} assumes  $\gamma=\mu=1$ and \Cref{th:gen_error} assumes $\beta = \delta = 0$. 
However, allowing for additional error will never lead to an improvement in the worst case. Therefore, the above impossibility results continue to hold even when these parameters are relaxed, i.e., \Cref{th:disc_error} holds for arbitrary $\gamma, \mu \in \mathbb{Q}_{1 > x \geq 0}$, and \Cref{th:gen_error} holds for arbitrary $\beta, \delta \in [r]$.

Note that our algorithm from \Cref{th:fullGuarantees} does not precisely match the lower bound from \Cref{th:gen_error}, yet $\frac{|W|}{|W|\gamma+1}=\frac{1}{\gamma+\nicefrac{1}{|W|}}$ converges to our guarantee $\frac{1}{\gamma}$ as the number of statements in the slate $W$ increases. 
Intuitively, the reason why $\text{DemocraticProcess}_{C,f}$ does not match this bound lies in the condition $|S_{\alpha^*}| \geq \lceil \nicefrac{c(\alpha^*)n}{B} \rceil$ in Line \ref{if-cond}. 
Assume there exists a statement $\alpha$ approved by $\frac
{1}{\gamma}(\ceil{\frac{c(\alpha)\cdot n}{B}}-1)$ agents at some level $\ell$. Then, the generative query might only return statements approved by $\ceil{\frac{c(\alpha)\cdot n}{B}}-1$ agents at level $\ell$, implying that these statements never get added, despite potentially constituting a $(0,\frac{1}{\gamma}-\epsilon)$-cBJR violation. 

In the special case where all statements have unit costs, we can close this gap by adjusting the threshold in Line \ref{if-cond} to $\frac{n\gamma}{B\gamma+1}$. 
This modification addresses the above issue of large groups being potentially overlooked, at the cost of allowing groups to slightly ``overspend'' their budget. However, this overspending is controlled so that not enough agents remain in $S$ at the algorithm’s termination to constitute a cBJR violation.
This adjustment allows the guarantee to match the lower bounds established in \Cref{th:disc_error,th:gen_error}:

\begin{restatable}{proposition}{mWapprox}
\label{pr:mW_approx}
Let $\epsilon>0$.
    Assuming all statements have unit cost, given access to $\beta$-accurate discriminative queries and $(\gamma,\delta,\mu)$-accurate generative queries, there is an algorithm (\Cref{side_alg} in \Cref{sec:proofs}) that guarantees $(2\beta+\delta,\frac{B}{B\gamma+1}-\epsilon)$-cBJR when $\mu$ and $\gamma$ are known.
\end{restatable}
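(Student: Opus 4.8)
The plan is to adapt the argument behind \Cref{th:fullGuarantees} to the unit-cost setting, exploiting two simplifications. Since every statement costs one, the cost-misjudgment parameter $\mu$ is neutralized by always invoking the generative query with cost argument $\ceil{1/\mu}$ (legitimate because $\mu$ is known): then $\ceil{\mu\ceil{1/\mu}}\geq 1$, so the denominator of \eqref{eq:approx-gen} ranges over \emph{all} statements. Moreover the fair-share threshold $\ceil{c(\alpha^*)n/B}=\ceil{n/B}$ in Line~\ref{if-cond} can be lowered to $\tau:=\frac{n\gamma}{B\gamma+1}$ (legitimate because $\gamma$ is known). In \Cref{side_alg} the removal in Line~\ref{delete} takes the $\min(|S_{\alpha^*}|,\ceil{n/B})$ highest-$\Disc$ agents from $S_{\alpha^*}$, so every removed agent is a genuine approver and the invariant ``an agent $i$ removed at outer level $\ell_i$ has $u_i(\omega(i))\geq\ell_i-\beta$'' holds, where $\omega$ is obtained by mapping removed agents to their round's statement and distributing the rest to fill the balanced quotas, as for \Cref{th:BJR-exists}.

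I would then assume, for contradiction, a coalition $T$, statement $\alpha'$, and threshold $\theta$ witnessing a violation. If $\theta\leq\beta+\delta$ then condition (iii) requires $u_i(\omega(i))<\theta-2\beta-\delta\leq 0$, impossible; hence $\theta-\beta-\delta\geq 1$. For each removed $i\in T$ the invariant gives $\ell_i-\beta\leq u_i(\omega(i))<\theta-2\beta-\delta$, so $\ell_i<\theta-\beta-\delta$; thus no agent of $T$ leaves $S$ while the outer level is at least $\theta-\beta-\delta$, i.e.\ $T\subseteq S$ throughout all such iterations. In the round $\ell=\theta-\beta-\delta$, since $\theta\in[\ell,r]=f(\ell)$, Line~\ref{Gens} issues $\Gen(S,\theta,\ceil{1/\mu})$, returning $\zeta$ with $\supporters(\zeta,S,\theta-\delta)\geq\gamma\,\supporters(\alpha',S,\theta)\geq\gamma|T|$; each such agent has $\Disc$-value at least $\theta-\delta-\beta=\ell$ and so lies in $S_\zeta$. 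Hence $|S_{\alpha^*}|\geq\gamma|T|\geq\tau$ at every iteration of this round, using $|T|\geq\frac{n}{B\gamma+1}=\tau/\gamma$.

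The crux is a counting argument controlling the ``overspending''. Because $|S_{\alpha^*}|\geq\tau$ holds at every iteration while $T\subseteq S$, the test in Line~\ref{if-cond} never fails during the round $\ell=\theta-\beta-\delta$, so the inner loop keeps adding statements until either a member of $T$ is removed---immediately contradicting that no agent of $T$ leaves $S$ at level $\geq\theta-\beta-\delta$---or the budget is exhausted. In the latter case all $B$ statements of the slate are added at outer levels $\geq\theta-\beta-\delta$, and every removed agent lies outside $T$. Each addition removes $\min(|S_{\alpha^*}|,\ceil{n/B})\geq\tau$ agents, so at least $B\tau=\frac{nB\gamma}{B\gamma+1}=n-\frac{n}{B\gamma+1}$ agents are removed, all from $N\setminus T$; but $|N\setminus T|=n-|T|$, forcing $|T|\leq\frac{n}{B\gamma+1}$. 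The $\epsilon$ in the coalition-size requirement is exactly the slack that rules out this boundary case, making $|T|$ strictly exceed $\frac{n}{B\gamma+1}$ and yielding the contradiction.

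The main obstacle is this counting step: one must check that $\tau=\frac{n\gamma}{B\gamma+1}$ is simultaneously small enough that a violating coalition always keeps Line~\ref{if-cond} satisfied (forcing additions) and large enough that $B$ forced additions consume more agents than $N\setminus T$ can supply---which is precisely what the identity $B\tau=n-\frac{n}{B\gamma+1}$ delivers. A secondary technicality is verifying that the variable removal counts (sometimes below $\ceil{n/B}$) still admit a balanced $\omega$ as required by costBJR and that the invariant $u_i(\omega(i))\geq\ell_i-\beta$ survives the modified Line~\ref{delete}; both follow the bookkeeping already used for \Cref{th:BJR-exists,th:fullGuarantees}.
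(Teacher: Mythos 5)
Your proposal follows essentially the same route as the paper's proof of \Cref{pr:mW_approx}: the same modification of \Cref{main_alg} (acceptance threshold lowered to $\frac{n\gamma}{B\gamma+1}$, generative queries issued with cost argument on the order of $\nicefrac{1}{\mu}$ so that under unit costs the maximum in \Cref{eq:approx-gen} ranges over the whole universe, and $f(\ell)=[\ell,r]$), the same invariant that an agent removed at level $\ell_i$ has $u_i(\omega(i))\geq \ell_i-\beta$, the same conclusion that members of a violating coalition $T$ can only be removed at levels below $\theta-(\beta+\delta)$, and the same counting identity $B\cdot\frac{n\gamma}{B\gamma+1}=n-\frac{n}{B\gamma+1}$ showing that, if no member of $T$ is ever removed, at most $\frac{n}{B\gamma+1}$ agents survive to the end. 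The paper organizes this as an explicit case distinction (some member of $T$ removed versus none), while you run it as a single forcing argument, but the content is identical; your variable removal count $\min(|S_{\alpha^*}|,\ceil{\nicefrac{n}{B}})$ versus the paper's fixed $\ceil{\frac{n\gamma}{B\gamma+1}}$ is immaterial.

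The one step that is wrong as written is your treatment of $\epsilon$. You assert that the $-\epsilon$ in the coalition-size factor ``makes $|T|$ strictly exceed $\frac{n}{B\gamma+1}$.'' It does the opposite: in $(b,d)$-cBJR, a violation only requires $|S|\geq d\cdot\ceil{\nicefrac{c(\alpha)n}{B}}$, so lowering $d$ to $\frac{B}{B\gamma+1}-\epsilon$ admits \emph{smaller} coalitions. When $B$ divides $n$, the requirement reads $|T|\geq \frac{n}{B\gamma+1}-\epsilon\frac{n}{B}$, which is strictly below $\frac{n}{B\gamma+1}$; a coalition of size exactly $\frac{n}{B\gamma+1}$ then qualifies as a potential violation yet is perfectly compatible with your conclusion $|T|\leq\frac{n}{B\gamma+1}$, and the inequality $\gamma|T|\geq\frac{n\gamma}{B\gamma+1}$ that you need to keep the acceptance test firing is likewise not implied. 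To be fair, the paper's own proof has the same wrinkle: in its first case it silently upgrades $|T|\geq(\frac{B}{B\gamma+1}-\epsilon)\ceil{\nicefrac{n}{B}}$ to $|T|\geq\frac{B}{B\gamma+1}\ceil{\nicefrac{n}{B}}$, and in its second case it deduces that fewer than $|T|$ agents remain from an upper bound, $\frac{B}{B\gamma+1}\ceil{\nicefrac{n}{B}}$, that exceeds the lower bound on $|T|$. What the counting argument actually delivers, for you and for the paper alike, is $(2\beta+\delta,d)$-cBJR for every $d>\frac{B}{B\gamma+1}$, i.e., the statement with $+\epsilon$ rather than $-\epsilon$; with that reading, both proofs go through verbatim.
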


\subsection{Validation in Synthetic Environment}\label{sec:synEnv}
The goal of this section is to analyze how our algorithms are affected by approximate queries in simulations, thereby complementing the worst-case guarantees established in the previous parts.
To be able to model errors in query answering, we consider a purely synthetic environment.

\paragraph{Environment}
Our environment is highly structured:
There is a set $I$ of issues. For each issue, $[b]$ represents the set of possible opinions.
A statement $\alpha$ contains opinions on a subset of  issues, i.e., $\alpha \in [b]^{I_\alpha}$ for some $I_\alpha\subseteq I$.
The cost of a statement is the number of issues it addresses, that is, $c(\alpha)=|I_\alpha|$, and the universe consists of all possible statements.
Each agent $i\in N$ is characterized by a statement $\alpha^i\in [b]^I$ that covers all issues; we generate $\alpha^i$ uniformly at random.
The utility of agent $i$ for statement $\alpha$ is $\sum_{j\in I_{\alpha}} \frac{b}{2}-|\alpha^i_j-\alpha_j|$. For efficiency reasons, we focus here on $b=|I|=5$ and set $n=60$ and $B=15$. 

Since the universe is finite, we can compute exact answers to both discriminative and generative queries.
To simulate a $\beta$-accurate discriminative query, we draw an integer uniformly at random from $[-\beta,\beta]$ and return the true utility plus the drawn value. 
To simulate a $(\gamma,\delta,\mu)$-accurate generative query, we compute all statements satisfying \Cref{eq:approx-gen} and then return a statement drawn uniformly at random from this set.

\begin{figure*}[t]
    \centering

    \begin{subfigure}[t]{0.24\textwidth}
        \centering
        \resizebox{\textwidth}{!}{\begin{tikzpicture}
\begin{axis}[
    width=10cm,
    height=6cm,
    xlabel={$c$},
    ylabel={Max $d$ with $(c,d)$-violation},
    grid=both,
    legend style={at={(0.5,-0.2)},anchor=north,legend columns=-1},
    ymin=-0.1,
    xmin=-0.1,
 ymax=1.1, xmax=10.1]
\addplot[draw=none, fill=gray, fill opacity=0.3] coordinates {
    (0.0, 1.0) (0.0, 1.1) (10.1, 1.1) (10.1, 1.0) (0.0, 1.0)
};

    \addplot[
        color=orange,
        legend entry=uniform,
        mark=*,
        thick
    ] coordinates {
    (0.0, 0.8278333333333333) (1.0, 0.4819583333333333) (2.0, 0.3316666666666667) (3.0, 0.2362083333333333) (4.0, 0.17716666666666664) (5.0, 0.13070833333333332) (6.0, 0.09162500000000001) (7.0, 0.059750000000000004) (8.0, 0.045) (9.0, 0.006000000000000001) (10.0, 0.0) };

    \addplot[
        color=green!50!black,
        legend entry=complex,
        mark=*,
        thick
    ] coordinates {
    (0.0, 0.6636666666666667) (1.0, 0.465875) (2.0, 0.3079166666666666) (3.0, 0.21979166666666672) (4.0, 0.16549999999999998) (5.0, 0.11795833333333335) (6.0, 0.08) (7.0, 0.0565) (8.0, 0.043) (9.0, 0.014500000000000002) (10.0, 0.0) };

    \addplot[
        color=blue,
        legend entry=fast,
        mark=*,
        thick
    ] coordinates {
    (0.0, 0.6736250000000001) (1.0, 0.45258333333333334) (2.0, 0.30325) (3.0, 0.22195833333333334) (4.0, 0.16424999999999998) (5.0, 0.11850000000000001) (6.0, 0.08262499999999999) (7.0, 0.0575) (8.0, 0.04649999999999999) (9.0, 0.0165) (10.0, 0.0) };

    \end{axis}
    \end{tikzpicture}
    }
       \caption{\(\beta=\delta=0\), \(\mu=\gamma=1\)}
    \end{subfigure}
    \begin{subfigure}[t]{0.24\textwidth}
        \centering
        \resizebox{\textwidth}{!}{\begin{tikzpicture}
\begin{axis}[
    width=10cm,
    height=6cm,
    xlabel={$c$},
    ylabel={Max $d$ with $(c,d)$-violation},
    grid=both,
    legend style={at={(0.5,-0.2)},anchor=north,legend columns=-1},
    ymin=-0.1,
    xmin=-0.1,
 ymax=4.2999, xmax=10.1]
\addplot[draw=none, fill=gray, fill opacity=0.3] coordinates {
    (3.0, 1.384083044982699) (3.0, 4.2999) (10.1, 4.2999) (10.1, 1.384083044982699) (3.0, 1.384083044982699)
};

    \addplot[
        color=orange,
        legend entry=uniform,
        mark=*,
        thick
    ] coordinates {
    (0.0, 3.909) (1.0, 2.2566666666666664) (2.0, 1.1434583333333332) (3.0, 0.740125) (4.0, 0.5152083333333334) (5.0, 0.36320833333333324) (6.0, 0.24712499999999998) (7.0, 0.16287500000000002) (8.0, 0.10300000000000001) (9.0, 0.0535) (10.0, 0.0) };

    \addplot[
        color=green!50!black,
        legend entry=complex,
        mark=*,
        thick
    ] coordinates {
    (0.0, 0.9316249999999999) (1.0, 0.5773333333333334) (2.0, 0.3899166666666666) (3.0, 0.27641666666666664) (4.0, 0.199625) (5.0, 0.1435833333333333) (6.0, 0.10216666666666667) (7.0, 0.068125) (8.0, 0.052125000000000005) (9.0, 0.0375) (10.0, 0.006500000000000001) };

    \addplot[
        color=blue,
        legend entry=fast,
        mark=*,
        thick
    ] coordinates {
    (0.0, 1.0086249999999999) (1.0, 0.6344583333333333) (2.0, 0.42674999999999996) (3.0, 0.29341666666666666) (4.0, 0.21370833333333336) (5.0, 0.15370833333333334) (6.0, 0.10866666666666668) (7.0, 0.06924999999999999) (8.0, 0.052250000000000005) (9.0, 0.0375) (10.0, 0.0075) };

    \end{axis}
    \end{tikzpicture}
    }
        \caption{\(\beta=\delta=1\), \(\mu=\gamma=0.85\)}
    \end{subfigure}
    \begin{subfigure}[t]{0.24\textwidth}
        \centering
        \resizebox{\textwidth}{!}{\begin{tikzpicture}
\begin{axis}[
    width=10cm,
    height=6cm,
    xlabel={$c$},
    ylabel={Max $d$ with $(c,d)$-violation},
    grid=both,
    legend style={at={(0.5,-0.2)},anchor=north,legend columns=-1},
    ymin=-0.1,
    xmin=-0.1,
 ymax=4.63815, xmax=10.1]
\addplot[draw=none, fill=gray, fill opacity=0.3] coordinates {
    (6.0, 2.0408163265306127) (6.0, 4.63815) (10.1, 4.63815) (10.1, 2.0408163265306127) (6.0, 2.0408163265306127)
};

    \addplot[
        color=orange,
        legend entry=uniform,
        mark=*,
        thick
    ] coordinates {
    (0.0, 4.2165) (1.0, 2.370625) (2.0, 1.2415833333333333) (3.0, 0.8306666666666667) (4.0, 0.5867083333333334) (5.0, 0.4072916666666667) (6.0, 0.27387500000000004) (7.0, 0.17775) (8.0, 0.10762500000000001) (9.0, 0.0545) (10.0, 0.001) };

    \addplot[
        color=green!50!black,
        legend entry=complex,
        mark=*,
        thick
    ] coordinates {
    (0.0, 1.3287083333333334) (1.0, 0.8068333333333334) (2.0, 0.5459166666666667) (3.0, 0.3772916666666667) (4.0, 0.2595) (5.0, 0.17916666666666667) (6.0, 0.1255) (7.0, 0.08429166666666667) (8.0, 0.05625) (9.0, 0.045875000000000006) (10.0, 0.019500000000000003) };

    \addplot[
        color=blue,
        legend entry=fast,
        mark=*,
        thick
    ] coordinates {
    (0.0, 1.3998750000000002) (1.0, 0.8765833333333333) (2.0, 0.576) (3.0, 0.39625) (4.0, 0.26745833333333335) (5.0, 0.1875) (6.0, 0.13475) (7.0, 0.09133333333333335) (8.0, 0.059125) (9.0, 0.050499999999999996) (10.0, 0.017) };

    \end{axis}
    \end{tikzpicture}
    }
    \caption{\(\beta=\delta=2\), \(\mu=\gamma=0.7\)}
    \end{subfigure}
    \begin{subfigure}[t]{0.24\textwidth}
        \centering
        \resizebox{\textwidth}{!}{\begin{tikzpicture}
\begin{axis}[
    width=10cm,
    height=6cm,
    xlabel={$c$},
    ylabel={Max $d$ with $(c,d)$-violation},
    grid=both,
    legend style={at={(0.5,-0.2)},anchor=north,legend columns=-1},
    ymin=-0.1,
    xmin=-0.1,
 ymax=4.7212000000000005, xmax=10.1]
\addplot[draw=none, fill=gray, fill opacity=0.3] coordinates {
    (9.0, 3.305785123966942) (9.0, 4.7212000000000005) (10.1, 4.7212000000000005) (10.1, 3.305785123966942) (9.0, 3.305785123966942)
};

    \addplot[
        color=orange,
        legend entry=uniform,
        mark=*,
        thick
    ] coordinates {
    (0.0, 4.292) (1.0, 2.3425) (2.0, 1.2402083333333334) (3.0, 0.8179166666666667) (4.0, 0.5787916666666667) (5.0, 0.40237500000000004) (6.0, 0.27) (7.0, 0.172875) (8.0, 0.10662500000000001) (9.0, 0.05375) (10.0, 0.01) };

    \addplot[
        color=green!50!black,
        legend entry=complex,
        mark=*,
        thick
    ] coordinates {
    (0.0, 1.6838333333333335) (1.0, 1.0055833333333333) (2.0, 0.6755833333333334) (3.0, 0.4616666666666667) (4.0, 0.308) (5.0, 0.20958333333333332) (6.0, 0.14533333333333334) (7.0, 0.09970833333333333) (8.0, 0.0665) (9.0, 0.05125) (10.0, 0.0215) };

    \addplot[
        color=blue,
        legend entry=fast,
        mark=*,
        thick
    ] coordinates {
    (0.0, 1.8461666666666667) (1.0, 1.1272499999999999) (2.0, 0.7443333333333334) (3.0, 0.500625) (4.0, 0.3283333333333333) (5.0, 0.22625) (6.0, 0.15625) (7.0, 0.10666666666666666) (8.0, 0.063375) (9.0, 0.051125000000000004) (10.0, 0.018000000000000002) };

    \end{axis}
    \end{tikzpicture}
    }
        \caption{\(\beta=\delta=3\), \(\mu=\gamma=0.55\)}
    \end{subfigure}

    \caption{Proportionality of different versions of our democratic process for varying error settings: For a given $c$-value ($x$-axis), the plot shows the maximum $d$ value for which $(c,d)$-cBJR is violated ($y$-axis) averaged over $100$ instances. The shaded gray region represents the area ruled out by the guarantees provided by \Cref{th:fullGuarantees} for \emph{Complex-DemocraticProcess}.} 
    \label{fig:approx-BJR}
\end{figure*}
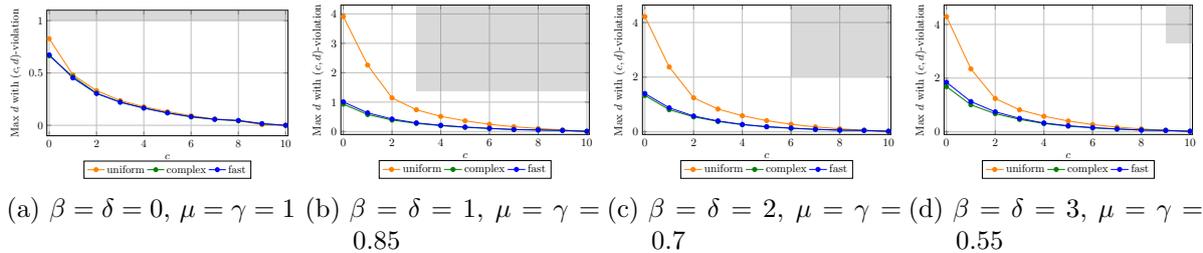

\paragraph{Results}
In addition to \emph{Fast-DemocraticProcess} (\texttt{Fast}) and \emph{Complex-DemocraticProcess} (\texttt{Complex}), we also analyze $\text{DemocraticProcess}_{C,f}$ with $C=\{|I|\}$ and $f(\ell)=\ell$ (\texttt{Uniform}). This variant resembles a committee election approach and only adds statements addressing all issues.
\Cref{fig:approx-BJR} shows the average $(c, d)$-cBJR achieved by the slates returned by the processes. We find that all processes perform significantly better than the strongest theoretical guarantee (gray area), which applies only to \texttt{Complex}. 
\texttt{Uniform} quickly incurs substantial cBJR violations even with small errors, but the violations remain relatively stable as the error increases.
\texttt{Fast} and \texttt{Complex} perform similarly, with \texttt{Complex} slightly outperforming \texttt{Fast}. Both consistently outperform \texttt{Uniform}, but their violations increase as the error grows.

In \Cref{app:synEnv}, we further compare the utility agents derive from their assigned statements, observing that \texttt{Complex} performs better than \texttt{Fast}, which in turn outperforms \texttt{Uniform}. We also provide a more detailed analysis of how increasing errors influence the proportionality achieved by the slate.
Additionally, as the errors in the described environment can be interpreted as random, we examine scenarios where errors behave in a more adversarial, worst-case manner. The general trends persist, but the performance gap between \texttt{Fast} and \texttt{Complex} (as seen in \Cref{fig:approx-BJR}) becomes more pronounced.

\section{Implementation and Experiments}

We present the \textbf{Pro}portional \textbf{S}late \textbf{E}ngine (PROSE) and evaluate it in experiments.

\subsection{PROSE: Proportional Slate Engine}
PROSE is a practical implementation of \Cref{main_alg} in which both the generative and discriminative queries are implemented with the help of LLMs, in particular GPT-4o-2024-11-20.
Broadly speaking, our implementation of the discriminative query uses GPT-4o as models for human preferences (similarly to the work of \citet{FGPP+24}, \citet{DBLP:conf/sigecom/FilippasHM24}, and \citet{argyle2023out}).
For the generative query, we note that our theoretical results are agnostic to the chosen query implementation and all theoretical results remain valid when additional statements of cost at most $C[j]$ are included in the set  $U$  in Line \ref{Gens} of \Cref{main_alg}. Thus, 
we treat the generative query as an approximate solver for the optimization task defined in \ref{eq:approx-gen}: We use different query implementations to add multiple candidate statements to the set $U$, from which ultimately only the one with the highest number of supporters at level $\ell$ will be returned.\footnote{All our implementations of the generative query follow a two-step procedure. First, we identify agents $S' \subseteq S$ who are likely to approve the same statement at level $\ell$, using embeddings and clustering or nearest-neighbor techniques. Second, we prompt GPT-4o to generate a length-bounded consensus statement for the agents in $S'$ (similarly to the work of \citealt{FGPP+24}, \citealt{tessler2024ai}, and \citealt{BCST+22}). Moreover, rather than limiting the selection to newly generated statements in each round, we consider all previously generated statements with costs within the current budget as candidates for addition to the slate.
 }
We refer to \Cref{app:queries} for a description of our query implementation and additional details on PROSE.

As input, PROSE requires the desired slate length and some textual information about each user. Importantly, PROSE can operate with unstructured and minimalistic user data, making it usable in a wide range of scenarios. 
This stands in contrast to the implementation of \citet{FGPP+24}, which relies on highly structured, curated user input. Their approach requires participants to complete detailed surveys, including rating example statements and answering predefined questions---information that is not available in our less-structured datasets. We opted not to reimplement their queries for unstructured data, as there would be substantial ambiguity in the reimplementation due to subjective implementation choices. Instead, we will include a baseline called PROSE-UnitCost, which aligns with the core unit-cost assumptions of \citet{FGPP+24} but uses our own query implementations to ensure comparability.

A further advantage of PROSE is that it does not require dataset-specific tuning, e.g., we do not need to tune parameters such as the number of output statements, clustering granularity, and prompt wording. Accordingly, the four slates from our experiments are all generated using the same configuration, adding to PROSE's flexible usage.

\paragraph{Limitations} Although GPT-4o offers a powerful and flexibly usable implementation of discriminative and generative queries---leading to the generation of high-quality slates---it also comes with several limitations. Most notably, due to its opaqueness, GPT-4o comes without guarantees on the query answers, and individual responses can be very prompt-dependent, hallucinated, and potentially biased (see our impact statement for further discussion).
These problems are particularly prevalent in the context of the significantly more demanding generative query, which we try to mitigate by always considering many statements for addition. 
During our implementation process of the generative query, we observed that GPT-4o struggles to identify coherent subgroups of agents with aligned opinions, which motivated our use of a dedicated two-step generation strategy where GPT-4o is only queried to write a consensus statement for a cohesive group of agents.

\subsection{Experiments}\label{sec:exp}
In this section, we use PROSE to generate slates for four different instances and compare the results to four baselines. 

\subsubsection{Datasets \textnormal{(see \Cref{app:dataset} for details)}}
We consider two data sources. First, the publicly available UCI ML Drug Review dataset \citep{grasser2018aspect} contains web-crawled patient reviews, each accompanied by a rating on a 1–10 scale. From this dataset, we create three subsampled instances (each with 80 agents):
\emph{Birth Control (Balanced)}, which contains reviews of a birth control medication with all ratings appearing equally often;
\emph{Birth Control (Imbalanced)}, which includes only birth control reviews with extreme and central ratings, i.e., (1,2,5,9,10);
and \emph{Obesity}, which contains reviews on a obesity medication with all ratings appearing in equal frequency.\footnote{We rebalanced the ratings in the datasets because the original rating distributions are quite degenerate, e.g., in the obesity dataset, over $70\%$ of users give a score of $9$ or $10$. This skewness would make the proportional summarization task quite simple, as nearly all users would support the same kind of statements.} Each agent is represented by the text of their submitted review (excluding their rating). The budget is  $B = 160$.

Second, the \textit{Bowling Green} dataset is drawn from a public deliberation hosted on \citet{polis2023} regarding improvements to the city of Bowling Green, KY, US. We manually compile an instance of 41 agents to ensure a non-trivial overlap between the agents' suggestions. Each agent is represented by the full text of their submitted comments. The budget~is~$B = 164$, chosen 
 to be divisible by the number of agents to avoid rounding artifacts in one of the baselines.

\subsubsection{Baselines \textnormal{(see \Cref{app:baseline} for details)}}
We compare against four baselines:

\textit{Contextless Zero-Shot.}
We cast a single prompt to GPT-4o specifying the dataset's topic and the word limit. The model is asked to generate a set of opinions that proportionally reflect the general population, with the length of opinions corresponding to their prevalence.

\textit{Zero-Shot.}
Similar to \textit{Contextless Zero-Shot}, but each agent’s description is also provided to the LLM. The model is asked to produce a proportional representation of the input agents.

\textit{Clustering.}
We embed each agent using OpenAI’s \texttt{text-embedding-3-large} model and then apply affinity propagation clustering \citep{frey2007clustering}, which automatically determines the number of clusters. For each computed cluster, we call an implementation of the generative query with the group's share of the budget as the word limit and add the returned statement representing the group to the slate.

\textit{PROSE-UnitCost.}
We assume that each statement has the same cost $c(\alpha)=1$, thereby following the original idea of \citet{FGPP+24}.
As a result, the budget describes the number of statements to be selected.
To reflect this change, in our implementation of the generative query, we no longer impose a word limit on the generated statements. 
Accordingly,
in \Cref{main_alg} we set $C=[1]$, implying that  
 when adding a statement (of arbitrary length) to the slate we now solely check whether the statement is approved by $\floor{\frac{n}{k}}$ agents.
In our experiments, we set $B=5$. Note that the resulting slate is of unbounded length; in fact, in our experiments, the returned slates are around two to three times longer than the budget given to the other methods.

\begin{table*}[t]
    \centering
    \resizebox{\textwidth}{!}{\begin{tabular}{lcccc|cccc|cccc|cccc}
        \hline
        Method & \multicolumn{4}{c}{Birth Control (Uniform)} & \multicolumn{4}{c}{Birth Control (Imbalanced)} & \multicolumn{4}{c}{Obesity} & \multicolumn{4}{c}{Bowling Green} \\
        & Mean & p-value & Q1 & \%BJR viol. & Mean & p-value & Q1 & \%BJR viol. & Mean & p-value & Q1 & \%BJR viol. & Mean & p-value & Q1 & \%BJR viol. \\
        \hline
         \hline
        \textbf{PROSE} & \textbf{3.42} & - & \textbf{3.00} & \textbf{0.19}  & \textbf{3.87} & - & \textbf{3.12} & \textbf{0.14}  & \textbf{3.39} & - & \textbf{3.00} & \textbf{0.10}  & \textbf{3.54} & - & \textbf{3.00} & \textbf{0.05}  \\  
        PROSE-UnitCost & 3.05 & 0.011 & 2.06 & 0.31  & 3.11 & 3.1e-07 & 2.17 & 0.34 & 3.20 & 0.224 & 2.00 & 0.2  & 2.76 & 1.1e-05 & 2.02 & 0.22  \\ 
        Clustering & 3.14 & 0.041 & 2.30 & 0.30  & 3.28 & 4.6e-05 & 2.80 & 0.23  & 3.02 & 0.016 & 2.00 & 0.17  & 2.79 & 4.5e-04 & 2.00 & 0.37  \\ 
        Zero-Shot & 2.76 & 3.7e-06 & 2.00 & 0.65  & 3.10 & 1.0e-07 & 2.09 & 0.36  & 2.75 & 3.5e-05 & 2.00 & 0.39  & 3.06 & 0.023 & 2.00 & 0.13  \\ 
        Contextless Zero-Shot & 2.58 & 3.1e-09 & 2.00 & 0.68  & 2.77 & 2.3e-14 & 2.01 & 0.41  & 3.15 & 0.053 & 2.73 & 0.19  & 2.54 & 1.7e-08 & 2.00 & 0.32  \\ 
        \hline
    \end{tabular}}
    \caption{Performance of PROSE and baselines. We show the mean and 25th percentile of agents' utilities. For the baselines, we show the p-value for the null hypothesis that its mean utility is the same as for PROSE. Lastly, we show the fraction of sampled statements constituting a cBJR violation. }
    \label{tab:method_comparison}
\end{table*}

\subsubsection{Evaluation and Results}
Ideally, evaluating the quality of a computed slate $W$ would involve asking users to rate the statements in the slate. However, this is infeasible for our datasets and beyond the scope of this paper. Consequently, we rely on a proxy. While it might seem natural to use PROSE’s own discriminative query for evaluation, doing so would unfairly favor PROSE, as this would mean that PROSE selects statements to maximize the score against it will eventually be evaluated. Instead, we employ a separate chain-of-thought (CoT) implementation of the discriminative query, which assigns a score between 1 and 6 to each agent–statement pair (see \Cref{app:CoT} for details). We refer to the scores returned as \emph{CoT utilities} and treat them as the true, underlying utilities. To ensure an independent evaluation, the CoT implementation differs significantly from the (non-CoT) implementation used in PROSE and is considerably more expensive.  Although both implementations rely on GPT-4o and aim to approximate the same underlying preferences, we show in \Cref{app:Verification} that their outputs exhibit relatively low correlation.

For the two zero-shot baselines, we determine a maximum-weight balanced assignment  $\omega: N \to W$ between agents and statements in the slate based on the CoT utilities; for PROSE, PROSE-UnitCost, and the clustering baseline, we use the mapping produced by the methods. This setup provides the zero-shot methods with an advantage, as they leverage the CoT utilities used for evaluation. If PROSE’s mapping were recomputed in the same way, the average utility would typically improve by around 15\%.

We present a quantitative evaluation of the generated slates in \Cref{tab:method_comparison} (all generated slates are provided in \Cref{app:genSlates}). Note that the utility of user $i$ for a slate is their CoT utility for $\omega(i)$. We report the mean and 25th percentile of user utilities, along with the p-value for the null hypothesis that the average utility of a baseline method is equal to that of PROSE.
Additionally, we assess the proportionality of the generated slates. To do so, we sample $100$ statements from the set of statements generated by PROSE over the course of its run on a given instance and check for each statement whether it constitutes a cBJR violation under the CoT utilities.

Examining the results, we find that PROSE consistently outperforms all four baseline methods across all instances. In particular, PROSE achieves statistically significant improvements in mean agent CoT utility, ranging from $10\%$ to $40\%$ over the baselines. The improvements are even more pronounced—around $40\%$—when considering the 25th percentile of agent CoT utility, indicating that PROSE effectively represents minority opinions by allocating groups proportional control over their share of the slate.
Generally speaking, as CoT utilities range from 1 to 6, the achieved utilities can be viewed as quite high, in light of the fact that each agent controls only 2 words in the drug datasets and 4 in the Bowling Green dataset. On another note, a closer examination of PROSE’s outputs reveals that it successfully maps similar users to the same statement. For example, in the drug datasets, users represented by a single statement typically share similar ratings, with the exception of mid-range ratings, which occasionally appear alongside extreme ones.
Regarding proportionality, slates generated by PROSE exhibit between $7$ and $49$ percentage points fewer cBJR violations compared to the baselines.

It is notable that despite producing substantially longer slates PROSE-UnitCost leads to inferior results.
One reason for this is that in the slates generated by PROSE-UnitCost there are typically some agents that have a very low utility for their mapped statement, indicating that five statements are not sufficient to cover the opinion spectrum in its entirety. 
Notably, our approach circumvents 
the problem of picking the ``right'' number of statements to be added to the slate, as the algorithm will automatically split the budget between all homogeneous groups. 

Lastly, we note that the runtime of PROSE is primarily driven by the response times of the LLM used in our query implementations. In our experiments, across the four datasets, PROSE used 9.6M–25.4M input and 53.5K–96.1K output tokens, with runtimes of 31–65 minutes on a single Intel i7-8565U CPU @ 1.80GHz. Given the rapidly improving inference speeds of modern LLMs, we expect these runtimes to significantly decrease in the future.
By contrast, PROSE-UnitCost required around five times fewer resources: between 2.1M and 4.4M input tokens and 15.4K to 20.2K output tokens, with runtimes between 7 and 12 minutes.

\section{Discussion}

Several promising avenues for future research emerge from our work.
From a theoretical perspective, it would be interesting to explore how our approach extends from the cardinal preference setting considered here to the ordinal preference setting, which is widely studied in social choice. We expect a meaningful connection between the two settings, assuming that the candidate space is sufficiently rich. 
From an experimental perspective, it would be most beneficial to improve the implementation of the generative query, as the performance of the current implementation starts to degrade when queried on too many agents with diverse opinions.

A long-term goal is to apply our framework to participatory budgeting elections. Since our model naturally captures participatory budgeting\footnote{The universe $\mathcal{U}$ would become the set of all possible city improvement projects, and the cost function $c$ would capture the implementation cost of a project.}, all methodological and theoretical results carry over. However, implementing the queries in this context poses significant challenges. In particular, determining the cost of ``statements'', which now correspond to city improvement projects, would be nontrivial. Even more demanding, the generative query needs to propose entirely new projects that appeal to voters while staying within a predefined budget. Although such capabilities currently exceed those of GPT-4o, future advances in LLMs may make them feasible. From a methodological perspective, ``traditional'' participatory budgeting rules such as sequential-Phragmén~\citep{rey2023computational}, which typically iterate over all projects, could be implemented in a query-based model. However, doing so would require more complex generative queries that account for agent-specific budgets (see \Cref{app:disc} for a more detailed discussion).
More broadly, access to more powerful queries could also enable the satisfaction of additional, strong proportionality guarantees such as EJR+ up to any project \citep{BP23}. 
These considerations reveal an intricate tradeoff between the strength of guarantees and the complexity of queries used, an intriguing direction for future research.

\section*{Acknowledgements}
Fish was supported by an NSF Graduate Research Fellowship and a Kempner Institute Graduate Fellowship.
Procaccia was partially supported by the National Science Foundation under grants IIS-2147187 and IIS-2229881; by the Office of Naval Research under grants N00014-24-1-2704 and N00014-25-1-2153; and by a grant from the Cooperative AI Foundation.
We thank OpenAI for providing API credits via the Researcher Access Program.
We thank the anonymous ICML reviewers for their insightful reviews.

\section*{Impact Statement}
This work advances the integration of social-choice-inspired algorithms with provable fairness guarantees and the capabilities of Large Language Models (LLMs) to construct proportional and cost-constrained slates. Our approach contributes to the methodological foundations of scalable civic participation, enabling new paradigms in democratic decision-making.
Although our algorithmic democratic process provides provable proportionality guarantees, the use of LLMs to rate and generate statements introduces specific risks that must be carefully addressed before deployment in real-world settings. These risks include:
\begin{itemize}
\item \emph{Bias:} LLMs may favor or disfavor certain viewpoints, leading to distorted representation in the generated slates. This can occur through incorrect predictions of user utilities or the over- or under-generation of statements reflecting particular perspectives.
\item \emph{Transparency:} Although the algorithmic process itself is transparent, LLM-generated outputs are inherently opaque and may lack reliability.
\item \emph{Manipulation:} Directly inputting user comments into the LLM could expose the process to adversarial attacks, such as prompt injections.
\end{itemize}
These risks are particularly acute in political decision-making, where the consequences of biased, opaque, or manipulable outputs are particularly severe. We emphasize that our work is at most intended to inform decision making and not to automate political deliberation or decision-making. For any application, addressing these above-mentioned risks
 is critical to ensuring the fair, reliable, and effective application of LLM-driven democratic processes.

\appendix
\clearpage

\section{cBJR and Extended Justified Representation}\label{app:EJR}

\citet{PPS21} introduce the following version of extended justified representation for participatory budgeting exercises with additive utilities: 

\begin{definition}[\citet{PPS21}]
  A slate $W$ satisfies extended justified representation if there is no coalition $S\subseteq N$ and set of statements $W'\subseteq \mathcal{U}$  such that \begin{enumerate*}[label=(\roman*)]
    \item  $\frac{|S|}{n}B\geq \sum_{\alpha\in W'} c(\alpha)$  and \item 
    $\sum_{\alpha\in W'}\min_{j\in S}u_j(\alpha)>\sum_{\alpha\in W} u_i(\alpha)$ 
    for all $i\in S$. 
    \end{enumerate*}
\end{definition}
Condition (i) imposes that $S$ is ``large'' enough to afford the statements from $W'$ while the second condition imposes that each agent $i$ from $S$ is ``adequately dissatisfied'' with the slate $W$, i.e., the summed utility $i$ has for $W$ is smaller than the sum of the smallest utility agents from $S$ have for each project from $W'$.

\begin{lemma}
    Assuming that $\mathcal{U}$ is closed under the union of statements, i.e., for each pair of statement $\alpha_1,\alpha_2\in \mathcal{U}$ there exists a statement $\alpha^*\in \mathcal{U}$ with $c(\alpha^*)\leq c(\alpha_1)+c(\alpha_2)$ and $u_i(\alpha^*)\geq u_i(\alpha_1)+u_i(\alpha_2)$ for each $i\in N$, cBJR implies extended justified representation.
\end{lemma}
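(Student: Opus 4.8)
The plan is to argue by contraposition. I would fix the balanced matching $\omega:N\to W$ that witnesses cBJR for the slate $W$, and suppose toward a contradiction that $W$ violates EJR, witnessed by a (nonempty) coalition $S\subseteq N$ and a statement set $W'\subseteq\mathcal{U}$ satisfying conditions (i) and (ii) of the EJR definition. The first step is to collapse $W'$ into a single statement: applying the union-closure hypothesis repeatedly (a routine induction on $|W'|$), I obtain a statement $\alpha^*\in\mathcal{U}$ with $c(\alpha^*)\le\sum_{\alpha\in W'}c(\alpha)$ and $u_i(\alpha^*)\ge\sum_{\alpha\in W'}u_i(\alpha)$ for every $i\in N$. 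The conceptual point is that cBJR only ever reasons about a single statement, so bundling the coalition's desired budget allocation into one object is exactly what lets the weaker-looking single-statement axiom dominate the slate-wide EJR comparison.

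Next I would exhibit the triple $(S,\alpha^*,\theta)$ as a cBJR violation against $\omega$, with threshold $\theta:=\sum_{\alpha\in W'}\min_{j\in S}u_j(\alpha)$; note $W'\neq\emptyset$ (otherwise condition (ii) would read $0>\sum_{\alpha\in W}u_i(\alpha)\ge 0$), so $\theta$ is a positive integer. Affordability is immediate: condition (i) gives $c(\alpha^*)\le\sum_{\alpha\in W'}c(\alpha)\le\frac{|S|}{n}B$, hence $\lceil c(\alpha^*)n/B\rceil\le|S|$, which is condition (i) of a cBJR violation with $d=1$. For the utility conditions, fix any $i\in S$: since $u_i(\alpha)\ge\min_{j\in S}u_j(\alpha)$ for each $\alpha\in W'$, the bundling bound yields $u_i(\alpha^*)\ge\sum_{\alpha\in W'}u_i(\alpha)\ge\theta$; conversely, because $\omega(i)\in W$ and utilities are nonnegative, $u_i(\omega(i))\le\sum_{\alpha\in W}u_i(\alpha)<\theta$ by condition (ii). With $b=0$ these are precisely conditions (ii) and (iii) of a cBJR violation, contradicting the choice of $\omega$.

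The one point requiring care, and what I expect to be the main obstacle, is confirming that $\theta\in[r]$, since each $u_i$ is capped at $r$ while $\theta$ is a sum of utilities and could a priori overshoot. This is where the union-closure hypothesis does double duty: the mere existence of $\alpha^*$ forces the sums to be admissible, because for the fixed $i\in S$ we have $\theta\le\sum_{\alpha\in W'}u_i(\alpha)\le u_i(\alpha^*)\le r$, so no separate boundedness assumption is needed. The remaining subtlety is pure bookkeeping: checking that the slate-wide comparison $\sum_{\alpha\in W}u_i(\alpha)$ appearing in EJR correctly degrades to the single-statement comparison $u_i(\omega(i))$ demanded by cBJR, which holds because $\omega(i)$ is one summand of a sum of nonnegative terms. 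Assembling these observations delivers the contradiction and hence the lemma.
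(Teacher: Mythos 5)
Your proposal is correct and follows essentially the same route as the paper's proof: bundle $W'$ into a single statement $\alpha^*$ via the union-closure assumption, set $\theta:=\sum_{\alpha\in W'}\min_{j\in S}u_j(\alpha)$, and verify that $(S,\alpha^*,\theta)$ constitutes a cBJR violation. Your extra checks (that $\theta\le r$ follows from $u_i(\alpha^*)\le r$, and that the affordability bound survives the ceiling) are fine points the paper glosses over, but the argument is the same.
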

\begin{proof}
    Let $W$ be a slate that does not fulfill extended justified representations as witnessed by a coalition $S$ and a set of statements $W'$. 
    Then, let $\alpha^*\in \mathcal{U}$ be the statement from $\mathcal{U}$ that arises from combining all statements from $W'$, i.e., $c(\alpha^*)\leq \sum_{\alpha\in W'} c(\alpha)$ and $u_i(\alpha^*)\geq \sum_{\alpha\in W'}u_i(\alpha)$ for every $i\in N$, which exists by our assumption that $\mathcal{U}$ is closed under union. 
    Further let $\theta:=\sum_{\alpha\in W'}\min_{j\in S}u_j(\alpha)$. 

    We claim that $S$, $\alpha^*$ and $\theta$ constitute a cBJR violation. 
    First observe that by the definition of extended justified representation, for any $i\in S$, we have $\theta>\sum_{\alpha\in W} u_i(\alpha)$ and in particular $\theta > u_i(\omega (i))$ for any mapping $\omega: N\to W$. 
    Further, observe that $u_i(\alpha^*)\geq \theta$ for all $i\in S$ and that $\frac{|S|}{n}B\geq \sum_{\alpha\in W'} c(\alpha)\geq c(\alpha^*)$, implying that  $S$, $\alpha^*$ and $\theta$ constitute a cBJR violation.
\end{proof}

If one follows the standard assumption that agents' utilities are additive,  the assumption that $\mathcal{U}$ is closed under the union of statements feels quite natural in the natural language context studied in the paper: 
Two statements $\alpha_1$ and $\alpha_2$ can be simply concatenated into a statement $\alpha$. We immediately get that $c(\alpha_1)+c(\alpha_2)=c(\alpha)$ (as the cost of a statement is simply its length in words). Given that the additivity of utilities implies that $u_i(\{\alpha_1,\alpha_2\})=u_i(\{\alpha_1\})+u_i(\{\alpha_2\})$, it is only natural to assume that $u_i(\{\alpha\})=u_i(\{\alpha_1\})+u_i(\{\alpha_2\})$. 

\section{Additional Proofs for Section \ref{sec:theory}}\label{sec:proofs}

We start by showing the following invariant:  
\begin{observation}\label{obs:budgetLeft}
    In Line \ref{outer-while} of the algorithm, it holds that $B-c(W)\geq |S|\frac{B}{n}$. 
\end{observation}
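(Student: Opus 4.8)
The plan is to establish \Cref{obs:budgetLeft} as a loop invariant of \Cref{main_alg}. Concretely, I would track the potential $\Phi := (B - c(W)) - |S|\frac{B}{n}$ and show that $\Phi \geq 0$ holds every time the condition in Line \ref{outer-while} is evaluated; this is exactly the asserted inequality $B - c(W) \geq |S|\frac{B}{n}$. Since $S$ and $W$ are the only state variables appearing in $\Phi$, and they are modified only in the block consisting of Lines \ref{delete}--\ref{add}, it suffices to verify the initial value of $\Phi$ and the effect of one execution of that block; all other updates (to the counters $j$ and $\ell$) leave $\Phi$ untouched.

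For the base case, note that when Line \ref{outer-while} is first reached we have $c(W) = 0$ and $S = N$, so $\Phi = B - n\cdot\frac{B}{n} = 0$. For the inductive step, consider one execution of the add-block, which occurs precisely when the test in Line \ref{if-cond} succeeds. Adding $\alpha^*$ increases $c(W)$ by $c(\alpha^*)$ and removes $\lceil c(\alpha^*) n / B\rceil$ agents from $S$; this removal is well-defined because $S_{\alpha^*} \subseteq S$ and the test in Line \ref{if-cond} guarantees $|S_{\alpha^*}| \geq \lceil c(\alpha^*) n/B\rceil$. Hence $\Phi$ changes by
\[
-c(\alpha^*) + \left\lceil \frac{c(\alpha^*) n}{B}\right\rceil \cdot \frac{B}{n} \;\geq\; -c(\alpha^*) + \frac{c(\alpha^*) n}{B}\cdot \frac{B}{n} \;=\; 0,
\]
where the inequality uses $\lceil x\rceil \geq x$. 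Thus $\Phi$ is non-decreasing, and since it starts at $0$ and no other line alters $S$ or $W$, we conclude $\Phi \geq 0$ at every evaluation of Line \ref{outer-while}, as desired.

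The only real subtlety, and the single place the argument could go wrong, is the interaction between the integral number of removed agents and the cost $c(\alpha^*)$. The key point is that the ceiling in Line \ref{delete} works in our favor: removing $\lceil c(\alpha^*) n/B\rceil$ agents frees up at least $c(\alpha^*)$ worth of the per-agent budget share $\frac{B}{n}$, so each added statement never ``overspends'' relative to the budget share released by the agents it consumes. I would make sure to state explicitly why the rounding points in the favorable direction, rather than leaving it implicit, since this is the crux of the whole invariant.
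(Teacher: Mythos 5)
Your proof is correct and rests on exactly the same accounting as the paper's: each added statement $\alpha^*$ removes $\lceil c(\alpha^*)n/B\rceil$ agents, whose combined budget share $\lceil c(\alpha^*)n/B\rceil\cdot\frac{B}{n}\geq c(\alpha^*)$ covers its cost. The paper simply states this in aggregate form (summing over all statements added so far to get $\frac{c(W)\cdot n}{B}\leq |N\setminus S|$), whereas you unroll the same inequality as a step-by-step loop invariant via the potential $\Phi$; the two presentations are interchangeable.
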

\begin{proof}
    For each statements $\alpha$ that gets added to $W$ over the course of the algorithm $\ceil{\frac{c(\alpha)\cdot n}{B}}$ agents are added to $N\setminus S$, implying $\frac{c(W)\cdot n}{B}\leq |N\setminus S|$. It follows that $c(W)\leq |N\setminus S|\cdot \frac{B}{n}=B-|S|\frac{B}{n}$, which in turn implies the invariant.
\end{proof}

\begin{lemma}\label{le:balanced}
    If $\floor{\frac{B}{n}} \in C$ and $1\in f(1)$, we have $S=\emptyset$ when $\text{DemocraticProcess}_{C,f}$ terminates.
    For each $i\in N$, let $\omega(i)$ be the statement that has been added to the slate $W$ in the round in which $i$ was removed from $S$.
    We have $\omega:N\to W$ and $\omega$ is balanced.
\end{lemma}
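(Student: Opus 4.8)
The plan is to prove the two claims of \Cref{le:balanced} in sequence: first that $S=\emptyset$ at termination, and then that the resulting map $\omega$ is well-defined into $W$ and balanced.

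For the first claim, I would argue by contradiction. Suppose the algorithm terminates with $S \neq \emptyset$. The outer loop (Line \ref{outerouter-while}) runs down through all utility levels, so it only terminates with $\ell < 1$ or $S = \emptyset$; since we assume $S \neq \emptyset$, the algorithm must have exhausted the outer loop, meaning in particular it reached the iteration with $\ell = 1$. Consider the state of the inner loop (Line \ref{outer-while}) during this final round with $\ell = 1$. The key is to show the inner loop body actually executes: by \Cref{obs:budgetLeft} we have $B - c(W) \geq |S|\frac{B}{n} \geq \frac{B}{n} \geq \floor{\frac{B}{n}}$ as long as $S \neq \emptyset$, so since $\floor{\frac{B}{n}} \in C$ there is a cost value $C[j] \leq \floor{\frac{B}{n}}$ that the budget can accommodate. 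Because $1 \in f(1)$, the call $\Gen(S,1,C[j])$ is made in Line \ref{Gens}. Now I would use that every agent has utility at least $1$ for every statement (utilities take values in $[r]$ with $r \geq 1$), so \emph{every} agent in $S$ approves the generated statement $\alpha$ at level $\ell = 1$ in Line \ref{approvers}; hence $S_{\alpha^*} = S$ and $|S_{\alpha^*}| = |S| \geq 1$. The affordability check in Line \ref{if-cond} then reads $|S_{\alpha^*}| = |S| \geq \ceil{\frac{c(\alpha^*) n}{B}}$, which holds because $c(\alpha^*) \leq \floor{\frac{B}{n}}$ forces $\ceil{\frac{c(\alpha^*)n}{B}} \leq 1 \leq |S|$. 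So a statement is added and at least one agent is removed from $S$ — contradicting that this was a terminal state with $S$ fixed and nonempty. This shows $S$ cannot remain nonempty, i.e.\ $S = \emptyset$ at termination.

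For the second claim, I would note that every agent $i \in N$ is removed from $S$ exactly once over the run (agents are only ever removed, never re-added, and by the first part all of $N$ is eventually removed), and removal happens in Lines \ref{delete}--\ref{add} precisely when a statement $\alpha^*$ is added to $W$. Defining $\omega(i)$ to be that statement gives a well-defined function $\omega : N \to W$. For balancedness, I would track, for a fixed statement $\alpha \in W$, exactly how many agents map to it. The statement $\alpha$ is added to $W$ in a single iteration (statements are distinct additions), and in that same iteration Line \ref{delete} removes exactly $\ceil{\frac{c(\alpha)\cdot n}{B}}$ agents from $S$ — those are precisely the agents $i$ with $\omega(i) = \alpha$. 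Since $\ceil{\frac{c(\alpha)\cdot n}{B}}$ trivially lies in $\{\floor{\frac{c(\alpha)\cdot n}{B}}, \ceil{\frac{c(\alpha)\cdot n}{B}}\}$, the balancedness condition from the definition of a balanced assignment is satisfied for every $\alpha \in W$.

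The main obstacle is the first claim, and specifically making airtight the argument that the inner loop body is reached and that the affordability check passes in the $\ell = 1$ round. Two subtleties deserve care: one must confirm that the loop guard $B - c(W) \geq C[j]$ in Line \ref{outer-while} is met for the relevant index $j$ (handled via \Cref{obs:budgetLeft}), and that the index $j$ with $C[j] = \floor{\frac{B}{n}}$ is actually traversed rather than skipped — here the inner loop's incrementing of $j$ only on failed additions is what guarantees every cost value is tried. I would also double-check the edge case $\floor{\frac{B}{n}} = 0$, where any statement of cost $0$ is trivially affordable, so the argument only strengthens. Everything else reduces to the observations already established in \Cref{obs:budgetLeft} and routine ceiling/floor manipulation.
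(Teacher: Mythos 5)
Your proposal is correct and takes essentially the same route as the paper's own proof: both argue by contradiction via the $\ell=1$ round at cost value $\floor{\frac{B}{n}}$, use \Cref{obs:budgetLeft} to show the loop body executes there, use the fact that $\ell=1$ is the lowest approval level to get approval of the generated statement, verify the affordability check via $\ceil{\floor{\frac{B}{n}}\cdot \frac{n}{B}}\leq 1$, and then obtain balancedness from the fact that exactly $\ceil{\frac{c(\alpha)\cdot n}{B}}$ agents are removed per added statement with $S=\emptyset$ at termination. The only cosmetic difference is that the paper makes your ``terminal state'' contradiction precise by fixing \emph{the last} visit to Line \ref{outer-while} with $\ell=1$ and $C[j]=\floor{\frac{B}{n}}$ (where by definition no statement is added) and showing a statement would in fact be added\,---\,which is exactly the tightening you flag yourself in your final paragraph.
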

\begin{proof}
    We start by arguing that $S=\emptyset$ when the algorithm terminates. 
    For the sake of contradiction, assume that $i\in S$ when the algorithm terminates. Consider the last iteration in which Line \ref{outer-while} is visited for $\ell=1$ and $C[j]=\floor{\frac{B}{n}}$ in which by definition no statement will be added to the slate.
    By \Cref{obs:budgetLeft}, we entered this iteration and by our technical assumption that the universe contains at least one statement $\alpha$ with $c(\alpha)=\floor{\frac{B}{n}}$  the generative query will return some statement of cost at most $\floor{\frac{B}{n}}$, which will be approved by $i$ at level $\ell=1$ (as this is the lowest approval level). 
    As $\ceil{\frac{c(\alpha^*)\cdot n}{B}}\leq \ceil{\frac{\floor{\frac{B}{n}}\cdot n}{B}}\leq \ceil{\frac{\frac{B}{n}\cdot n}{B}}=1$, the condition in Line \ref{if-cond} will be fulfilled and we will add a statement in this iteration, leading to a contradiction.

    Let $\omega$ be the assignment induced by the creation of $W$, i.e., each agent $i$ is mapped to the statement $\alpha$ that has been added to $W^*$ in the round in which $i$ was removed from $S$.
    Note that $\omega$ is balanced, as only $\ceil{\frac{c(\alpha)\cdot n}{B}}$ agents are removed from $S$ for each statement $\alpha\in W^*$ and $S=\emptyset$ by the end of the algorithm.
\end{proof}

\begin{lemma}\label{le:simple}
    For $0\leq x\leq 1$ and $y\in \mathbb{N}$, $\ceil{x\floor{\frac{y}{x}}}\geq y$.
\end{lemma}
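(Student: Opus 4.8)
The plan is to reduce the whole statement to a one-line squeeze on the quantity $x\lfloor y/x\rfloor$. First I would set $m := \lfloor y/x\rfloor$, implicitly assuming $0 < x \le 1$ (the value $x = 0$ makes the expression undefined, so this is the intended range). By the defining property of the floor, $m \le y/x < m+1$. Multiplying these inequalities through by $x > 0$ gives the two-sided bound $mx \le y$ together with $y < mx + x$, which rearranges to $y - x < mx \le y$.

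The crucial step is to invoke the hypothesis $x \le 1$. Since $x \le 1$ we have $y - 1 \le y - x$, and combining this with $y - x < mx$ yields $y - 1 < mx \le y$; that is, $mx$ lies in the half-open interval $(y-1,\, y]$. Because $y$ is an integer and $mx$ is strictly larger than the integer $y-1$ yet at most $y$, the smallest integer that is $\ge mx$ is exactly $y$, i.e.\ $\lceil mx\rceil = y$. Substituting back $m = \lfloor y/x\rfloor$ gives $\lceil x\lfloor y/x\rfloor\rceil = y \ge y$, as claimed (in fact with equality).

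There is no serious obstacle here; the only point requiring care is the role played by the assumption $x \le 1$, which is precisely what forces the product $mx$ into the unit-length window $(y-1,\, y]$ and thereby pins the ceiling down to $y$. Were this assumption dropped — say $x > 1$ — the product $mx$ could fall well below $y-1$, and the ceiling would then land strictly below $y$, so the inequality would fail. I would also note explicitly that $y \in \mathbb{N}$ being an integer is what lets me conclude $\lceil mx\rceil = y$ from $mx \in (y-1, y]$.
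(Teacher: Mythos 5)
Your proof is correct and follows essentially the same route as the paper's: bound $\floor{y/x}$ via the defining property of the floor, multiply through by $x$, and use $x\le 1$ together with the integrality of $y$ to pin down the ceiling. The only difference is cosmetic\,---\,by also keeping the upper bound $x\floor{y/x}\le y$ you obtain the slightly stronger conclusion $\ceil{x\floor{y/x}}=y$ and handle $x=1$ uniformly, whereas the paper treats $x=1$ separately and only derives the inequality.
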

\begin{proof}
    For $x=1$, the statement trivially holds.     For $x<1$, observe  $\floor{\frac{y}{x}} \geq \frac{y}{x}-1$ and $x\floor{\frac{y}{x}} \geq y-x$ 
    from which we get 
    $\ceil{x\floor{\frac{y}{x}}}\geq\ceil{y-x} = y$ (as $x<1$ and $y\in \mathbb{N}$).
\end{proof}

\fullGuarantees*
\begin{proof}
    The structure of the proof is similar to the proof of \Cref{th:BJR-exists}. 
    Let $W^*$ be the returned slate.  
    For each $i\in N$, let $\omega(i)$ be the statement added to $W^*$ during the round in which $i$ was removed from $S$.
    By \Cref{le:balanced}, $\omega:N\to W^*$ and $\omega$ is balanced.

    For contradiction, assume there exists a group $T$ of agents that witnesses a  $(2\beta+\delta,\frac{1}{\gamma\mu})$-cBJR violation for $W^*$ at threshold $\theta$, caused by statement $\alpha'$, i.e., $|T|\geq  \frac{1}{\gamma\mu}\cdot \ceil{\frac{c(\alpha')\cdot n}{B}}$, $u_i(\alpha')\geq \theta$ and $u_i(\omega(i))<\theta-(2\beta+\delta)$ for all $i\in T$. 
    
    Consider the first agent $\tilde{i}\in T$ removed from $S$ during the algorithm. 
    Let $\tilde{\ell}$ and $\tilde{\alpha^*}$ be the values of the respective variables $\ell$ and $\alpha^*$ during the iteration when $\tilde{i}$ is removed from $S$.
    We claim that it suffices to show that $\tilde{\ell} \geq \theta-(\beta+\delta)$:
    By definition of $\omega$ and $S_{\tilde{\alpha^*}}$, $\omega(\tilde{i})=\tilde{\alpha^*}$ and
    $\Disc(\tilde{i},\tilde{\alpha^*})\geq \tilde{\ell}$. 
    Since discriminative queries are $\beta$-accurate, this implies that $u_{\tilde{i}}(\tilde{\alpha^*})\geq \theta-(2\beta+\delta)$, a contradiction to $\tilde{i}\in T$.
    
So for the sake of contradiction, assume that $\tilde{\ell} < \theta-(\beta+\delta)$. We argue that this would imply that an agent from $T$ should have been removed from $S$ in an earlier iteration, leading to a contradiction.
 Consider the last iteration in which Line \ref{outer-while} is visited for $\ell:=\theta-(\beta+\delta)$ and $C[j]:=\floor{\frac{c(\alpha')}{\mu}}$ (no statement is added to $W$ in this iteration). 
    Note that as $\tilde{i}$ is the first agent from $T$ that gets removed from $S$, we have $T\subseteq S$ in this iteration.
    The body of the loop will be visited in this iteration:
    We have $|S|\geq |T|\geq  \frac{1}{\gamma\mu}\cdot \ceil{\frac{c(\alpha')\cdot n}{B}}$ and in fact $|S|\geq \ceil{\frac{C[j]\cdot n}{B}}$ as
    \begin{equation}\label{eq:1}
        \ceil{\frac{1}{\mu}\ceil{\frac{c(\alpha')\cdot n}{B}}}\geq \ceil{\frac{c(\alpha')\cdot n}{\mu B}} \geq \ceil{\frac{\floor{\frac{c(\alpha')}{\mu}}\cdot n}{ B}}=\ceil{\frac{C[j] \cdot n}{B}}.
    \end{equation} 
    As $|S|\frac{B}{n}\geq \ceil{\frac{C[j]\cdot n}{B}}\frac{B}{n}\geq C[j]$, 
    by \Cref{obs:budgetLeft}, it directly follows that $B-c(W)\geq C[j]$  and thus that the while loop will be entered.
    
    In Line \ref{Gens}, we will call $\Gen(S,\theta, C[j])$  (as $\theta\geq \ell$). Let $\zeta$ be the returned statement. 
    Note that $\alpha'$ is ``relevant'' for this  generative query: $\alpha'\in \{\alpha\in \mathcal{U}\mid c(\alpha)\leq \ceil{\mu \cdot C[j]}$), as $\ceil{\mu \cdot C[j]}=\ceil{\mu\floor{\frac{c(\alpha')}{\mu}}}\geq c(\alpha')$ by \Cref{le:simple}.
    From this,  
    as $T\subseteq S$, we get by the definition of $T$ that $$\max_{\alpha\in \mathcal{U}:c(\alpha)\leq \ceil{\mu\cdot C[j]}} \sup(\alpha, S,\theta)\geq |T|\geq \frac{1}{\gamma\mu}\cdot \ceil{\frac{c(\alpha')\cdot n}{B}}.$$ 
    Further, as generative queries are $(\gamma, \delta,\mu)$-accurate, it follows that $\sup(\zeta,S,\theta-\delta)\geq \gamma \frac{1}{\gamma\mu} \cdot \ceil{\frac{c(\alpha')\cdot n}{B}}=\frac{1}{\mu}\ceil{\frac{c(\alpha')\cdot n}{B}}$. 
    Thus, there exists a set $D$ containing at least $\frac{1}{\mu}\ceil{\frac{c(\alpha')\cdot n}{B}}$ agents from $S$ that have utility at least $\theta-\delta$ for $\zeta$.
    As discriminative queries are $\beta$-accurate and $\ell=\theta-(\beta+\delta)$, it follows that $D\subseteq  S_{\zeta}$ and thus that $|S_{\zeta}|\geq \frac{1}{\mu}\ceil{\frac{c(\alpha')\cdot n}{B}}$. 
    As a result, we will have $|S_{\alpha^*}|\geq  \frac{1}{\mu}\ceil{\frac{c(\alpha')\cdot n}{B}}$ and by \Cref{eq:1} $|S_{\alpha^*}|\geq \ceil{\frac{C[j]\cdot n}{B}}$. 
    As $c(\alpha^*)\leq C[j]$, it follows that the if-condition in Line \ref{if-cond} is satisfied and $\alpha^*$ is added to $W$, a contradiction.
\end{proof}

\discerror*

\begin{proof}
    
    All statements in the universe have cost $1$, implying that the budget $B$ denotes the number of statements to be picked.
    We set $r=2\beta+\delta+1$.
    For notational convenience, we choose the number of agents $n$ and budget $B$ so that $B$ divides $n$ (we will specify some additional constraints on the selection of $n$ and $B$ later). Let $t:=\frac{n}{B}$.

    The agent set will be split into two parts $X\cupdot Y$ with $|X|=t$ (the splitting is to be specified later and will depend on the decisions made by the algorithm).
    For each $Y'\subseteq Y$, there is a statement $\alpha^*_{Y'}$ with $u_i(\alpha^*_{Y'})=2\beta+\delta+1$ for all $i\in Y'$ and $u_i(\alpha^*_{Y'})=1$ for all $i\notin Y'$ and $2^n$ copies of a statement  $(\alpha^j_{Y'})_{j\in [2^n]}$ with $u_i(\alpha^j_{Y'})=2\beta+1$ for all $i\in Y'$ and $u_i(\alpha^j_{Y'})=1$ for all $i\notin Y'$. 
    Additionally,
    there is a statement $\alpha^*_{X}$ with $u_i(\alpha^*_{X})=2\beta+\delta+1$ for all $i\in X$ and $u_i(\alpha^*_{X})=1$ for all $i\notin X$
    Lastly, there are $2^n$ copies of a statement $(\alpha^{j}_{X})_{j\in [2^n]}$ with $u_i(\alpha^{j}_{X})=2\beta+1$ for all $i\in X$ and $u_i(\alpha^{j}_{X})=1$ for all $i\in Y$.  
    
    We construct the generative query as follows. 
    Let $\Gen(S,\ell,x)$ for $S\subseteq N$, $\ell\in [r]$, and $x\in \mathbb{N}$ be a prompted query.
    If $|S\cap X|\leq \frac{|S|}{2}$, then we return a copy of the statement $\alpha^j_{S\setminus X}$. 
    Otherwise, we return a copy of the statement $\alpha^{j}_{X}$. 
    We return copies of statements in a way that we always return the same copy for a given input set $S$, but different copies for different input sets.
    It is easy to see that the generative query is $(1,\delta,1)$-accurate, as we always return a statement $\alpha$ maximizing  $\sup(\alpha,S,\ell-\delta)$. 
    
    The discriminative query returns $1+\beta$ for any agent and any of the statements returned by the generative query, resulting in a  $\beta$-accurate discriminative query.
    In sum, from the perspective of the algorithm, all generative queries return a different statement, and all returned statements are evaluated the same by all agents.
    
    For $(2\beta+\delta-\epsilon,1)$-cBJR (or $(2\beta+\delta-\epsilon,1)$-cJR to hold), $\alpha^{*}_X$  or 
    a copy of $\alpha^{j}_X$ needs to part of the returned slate $W$. 
    Otherwise, the set $X$ of agents constitutes a violation, as $|X|=t=\frac{n}{B}$ and $u_i(\alpha^{*}_X)=2\beta+\delta+1$ and $u_i(\alpha)=1<2\beta+\delta+1-(2\beta+\delta-\epsilon)=1+\epsilon$  for all $i\in X$ and $\alpha\in W$.

    Let $\mathcal{X}$ be the set of all ${n \choose t}$ possible size-$t$ sets of the $n$ agents.
    Let $\alpha_1,\dots,\alpha_B$ be the statements that are part of $W$ and let $N_1,\dots, N_B$ be the agent sets that were input in the generative queries to produce the respective statements. 
    We will now argue that (under certain conditions on $n$ and $B$) we can always find a way to pick $X\in \mathcal{X}$ so that for each $i\in [B]$, $\alpha_i=\alpha^j_{Y'}$ for some $j\in [2^n]$ and $Y'\subseteq Y$. For this, for each $i\in [B]$, it needs to hold that $|N_i\cap X|\leq \frac{|N_i|}{2}$ (the \emph{half condition}), as this implies by the construction of the generative query that $\alpha_i=\alpha^j_{Y'}$ for some $j\in [2^n]$ and $Y'\subseteq Y$.
    For each statement $\alpha_i$ the \emph{half condition} rules out some sets from $\mathcal{X}$ that can no longer be picked as $X$, i.e., all $S\in \mathcal{X}$  with  $|N_i\cap S|>\frac{|N_i|}{2}$. 
    However, for certain values of $B$ and $n$, there will always be an element from $\mathcal{X}$ remaining that we can pick. 
    
    Consider as an example $B\geq 8$ and $n=2B$.
    If $N_i=\{x\}$ for some $x\in [n]$, $x$ cannot be part of $X$.  
    If $|N_i|=2$, $X$ cannot be identical to $N_i$. 
    If $|N_i|=3$, $X$ cannot be one of the three size-$2$ subsets of $N_i$.
    If $|N_i|\geq 4$, the half condition becomes vacant and no sets from $\mathcal{X}$ are excluded because of this $N_i$.
    Thus, each $N_i$ can either rule out from $\mathcal{X}$ all sets containing some specific agent or at most three (arbitrary) sets from $\mathcal{X}$. 
    Let $z:=|\{i\in [B]\mid |N_i|=1\}|$.
    Then the total number of sets from $\mathcal{X}$ that respect the half-conditions for all $N_1,\dots, N_B$ are 
    $${n-z \choose 2}-3(B-z)={2B-z \choose 2}-3(B-z)\geq {B \choose 2}-3B\geq 1,$$ where the inequalities hold as $z\leq B$ and $B\geq 8$.
    Thus, we can always find a set $X$ that respects the half-condition for all statements selected in the slate. We set $X$ accordingly. $X$ constitutes a $(2\beta+\delta-\epsilon,1)$-cBJR violation for the produced slate.

\end{proof}

\generror*
\begin{proof}
    Note that $\gamma$ and $\mu$ are given and fixed. We will now construct an instance on which no algorithm can fulfill the above guarantee.
    Let $s,t\in \mathbb{N}$ so that $\gamma=\frac{s}{t}$ and $a,b\in \mathbb{N}$ so that $\mu=\frac{a}{b}$ (note that $t>s$ and $b>a$).
    In the proof, we set $r=p+2$. 
    However, we will only make use of two utility levels $p+2$ and $1$. 
    We say that an agent $i$ \emph{approves} a statement $\alpha$ if $u_i(\alpha)=p+2$
    and \emph{disapproves} it if $u_i(\alpha)=1$.

    In our instance, we pick $B$ so that $\frac{B}{b}$ is an integer and we set $k:=\frac{B}{b}$.
    Further, we pick $n$ and $B$ so that
    $\frac{n}{B}$,  $\frac{n}{k\gamma+1}$, and $\frac{\gamma n}{k\gamma+1}$ are integers.
    For notational convenience, let $g:=\frac{n}{k\gamma+1}\in \mathbb{N}$.\footnote{Note that possible values for $n$ and $B$ fulfilling the above constraints are $B=bt$ and $n=(s+1)tb$, as in this case $k=t$, $\frac{n}{B}=(s+1)$, $\frac{n}{k\gamma+1}=\frac{(s+1)tb}{s+1}=tb$
    and $\frac{\gamma n}{k\gamma+1}=tb\gamma=sb$.} 
    All statements in the universe have either cost $1$, $a$, or $b$.
    For each subset $S$ of agents with $|S|\leq g$, we create a  statement $\alpha_S$ and a statement $\alpha^*_S$  with costs $c(\alpha_S)=b$ and $c(\alpha^*_S)=a$ that are approved by all agents from $S$ and disapproved by everyone else. 
    Moreover, there is a dummy statement disapproved by everyone which costs $1$.
    Generative queries behave as follows: 
    If a query with cost $x<b$ is cast, we return the dummy statement (note that this does not violate the $(\gamma,0,\mu)$-accuracy of the generative query, as
    $\mu\cdot x<a\leq b$, implying that there is no statement $\alpha$ with an approval satisfying the constraint $\alpha\in \mathcal{U}: c(\alpha)\leq \mu \cdot x$). 
    Otherwise ($x\geq b$), for some group $S$ of agents given in the generative query, we pick an arbitrary subgroup $S'\subseteq S$ of size $\min(|S|,\gamma g)$ and let the generative query return the statement $\alpha_{S'}$ (which is approved by everyone from $S'$ and disapproved by everyone else and has cost $b$). 
    Thus, all generated statements are approved by at most $\gamma g$ agents.
    As $g$ and $\gamma g$ are integers, the constructed queries clearly fulfill the definition of $(\gamma,0,\mu)$-accuracy. 

    We will prove that no algorithm can guarantee $(p,\frac{1}{\mu}\frac{k}{k\gamma+1})$, which immediately implies the theorem, as $kb=B$. 
    Note that on the constructed instance $(p,\frac{1}{\mu}\frac{k}{k\gamma+1})$-cBJR implies that there cannot be a group $S$ of agents of size $g$ that all disapprove the statement they are matched to in a slate returned by the algorithm: 
    Each such group $S$ jointly approve the statement $\alpha^*_S$ of cost $a$ and $S$ is ``by at least a factor of $\frac{1}{\mu}\frac{k}{k\gamma+1}$ larger'' than a group that deserves a statement of cost $a$, i.e., $|S|\geq \frac{1}{\mu}\frac{k}{k\gamma+1}\cdot\ceil{\frac{a n}{B}}=\frac{1}{\mu}\frac{k}{k\gamma+1} \frac{an}{B}=\frac{k}{k\gamma+1} \frac{bn}{B}=g.$
    We will show the slightly stronger statement that for each feasible slate consisting of the statements returned by the above-described generative query there are $g$ agents that disapprove all statements from the slate, which constitutes a violation of the $(p,\frac{1}{\mu}\frac{k}{k\gamma+1})$-cBJR guarantee. Notably, this would also imply that the impossibility result extends to cJR.
    
    To see why such a group of $g$ agents always exists, note that a slate contains at most $k$ statements approved by at least one agent and that all of them will be in fact approved by at most $\gamma g$ agents. Thus, for each slate, there will be at least $n-k\gamma g$ agents that disapprove all statements from the slate (recall that $n-k\gamma g$ is an integer).
    It remains to argue that $n-k\gamma g\geq g$:
    \begin{align*}
        n-k\gamma g &\geq g \\
        n-k\gamma \frac{n}{k\gamma+1} &\geq \frac{n}{k\gamma+1}\\
        1 & \geq \frac{1}{k\gamma+1}+\frac{\gamma k}{k\gamma+1} \\
        1 & \geq 1
    \end{align*}
\end{proof}

 \begin{algorithm}[t]
\caption{$\text{Uniform-Approx-DemocraticProcess}(N,B,r,\mu,\gamma)$}
\begin{algorithmic}[1]
\STATE $S \gets N$, $W \gets \emptyset$, $\ell \gets r$
\WHILE{$\ell \geq 1$ \AND $S \neq \emptyset$ \AND $c(W) < B$}\label{2outerouter-while}
        \STATE $U \gets  \bigcup_{\ell'\in [\ell,r]} \{ \Gen(S,\ell',\floor{\frac{1}{\mu}})\}$  \label{2Gens}
        \STATE $S_\alpha\gets \{i \in S \mid \Disc(i,\alpha) \geq \ell\}$ for all $\alpha\in U$ \label{2approvers}
        \STATE $\alpha^* \gets \argmax_{\alpha \in U} |S_{\alpha}|$ \label{2best-approver}
        \IF{$|S_{\alpha^*}| \geq \frac{n\gamma}{B\gamma+1}$} \label{2if-cond}
            \STATE $S \gets S \setminus \{\lceil \frac{n\gamma}{B\gamma+1} \rceil$  agents $i$ from  $S_{\alpha^*}$  with highest  $\Disc(i, \alpha^*)$  return value$\}$ \label{delete}
            \STATE $W \gets W \cup \{\alpha^*\}$\label{add}
        \ELSE
        \STATE $\ell \gets \ell - 1$
        \ENDIF
\ENDWHILE 
\STATE \textbf{Return} $W$
\end{algorithmic}
\label{side_alg}
\end{algorithm}

\mWapprox*
\begin{proof}
\Cref{side_alg} proves the proposition. 
Let $W^*$ be the returned slate. 
    For each $i\in N$, let $\omega(i)$ be the statement added to $W^*$ during the round in which $i$ was removed from $S$.
    The resulting assignment is balanced as $\frac{n\gamma}{B\gamma+1}\leq \frac{n}{B}$.
    Agents that remain in $S$ at the end of the algorithm are mapped to arbitrary statements while maintaining the balancedness of $\omega$.

    For the sake of contradiction assume that there is a group $T$ of agents witnessing a violation of  $(2\beta+\delta,\frac{B}{B\gamma+1}-\epsilon)$-cBJR for $W^*$ at threshold $\theta$, caused by statement $\alpha'$, i.e., $|T|\geq  (\frac{B}{B\gamma+1}-\epsilon)\cdot \ceil{\frac{n}{B}}$, $u_i(\alpha')\geq \theta$ and $u_i(\omega(i))<\theta-(2\beta+\delta)$ for all $i\in T$.  We make a case distinction:

    \paragraph{Some agent from $T$ gets removed from $S$ throughout the algorithm}
Consider the first agent $\tilde{i}\in T$ removed from $S$ during the algorithm. 
    Let $\tilde{\ell}$ and $\tilde{\alpha^*}$ be the values of the respective variables $\ell$ and $\alpha^*$ during the iteration when $\tilde{i}$ is removed from $S$.

    We claim that it suffices to show that $\tilde{\ell} \geq \theta-(\beta+\delta)$:
    By definition of $\omega$ and $S_{\tilde{\alpha^*}}$, $\omega(\tilde{i})=\tilde{\alpha^*}$ and
    $\Disc(\tilde{i},\tilde{\alpha^*})\geq \tilde{\ell}$. 
    Since discriminative queries are $\beta$-accurate, this implies that $u_{\tilde{i}}(\tilde{\alpha^*})\geq \theta-(2\beta+\delta)$, a contradiction to $\tilde{i}\in T$.

    So, for the sake of contradiction, assume that $\tilde{\ell} < \theta-(\beta+\delta)$. We argue that this would imply that an agent from $T$ should have been removed from $S$ in an earlier iteration, leading to a contradiction.
 Consider the last time in which Line \ref{2outerouter-while} is visited for $\ell:=\theta-(\beta+\delta)$ (no statement is added to $W$ in this iteration). 
    Note that as $\tilde{i}$ is the first agent from $T$ that gets removed from $S$, we have $T\subseteq S$ in this iteration.  The inner part of the while-loop will be entered in this iteration, as $\tilde{i}$ will get removed from $S$ at level $\tilde{\ell}<\ell$, implying that we have $\ell>1$ and $c(W)< B$ in this iteration.

    As part of this iteration of the while-loop in Line \ref{2Gens}, we will call $\textsc{GenC}(S,\theta,\floor{\frac{1}{\mu}})$  (as $\theta\geq \ell$). 
    Let $\zeta$ be the statement returned by this query. 
    Note that $\alpha'$ is ``relevant'' for this  generative query: $\alpha'\in \{\alpha\in \mathcal{U}\mid c(\alpha)\leq \ceil{\mu \cdot \floor{\frac{1}{\mu}}}$) by \Cref{le:simple}.
    As $T\subseteq S$, we know by the definition of $T$ that $$\max_{\alpha\in \mathcal{U}:c(\alpha)\leq \ceil{\mu\cdot \floor{\frac{1}{\mu}}}} \sup(\alpha, S,\theta)\geq |T|\geq \frac{B}{B\gamma+1}\cdot \ceil{\frac{n}{B}}.$$ 
    As generative queries are $(\gamma, \delta,\mu)$-accurate, it follows that $\sup(\zeta,S,\theta-\delta)\geq \frac{\gamma B}{B\gamma+1}\cdot \ceil{\frac{n}{B}}$. 
    Thus, there exists a set $D$ of at least $\frac{\gamma B}{B\gamma+1}\cdot \ceil{\frac{n}{B}}\geq \frac{n\gamma}{B\gamma+1}$ agents from $S$ that have utility at least $\theta-\delta$ for $\zeta$.
    As discriminative queries are $\beta$-accurate and $\ell=\theta-(\beta+\delta)$, it follows that $D\subseteq  S_{\zeta}$ and thus that $|S_{\zeta}|\geq \frac{n\gamma}{B\gamma+1}$. 
    As a result, we will have $|S_{\alpha^*}|\geq  \frac{n\gamma}{B\gamma+1}$ implying that the if-condition in Line \ref{if-cond} is satisfied and $\alpha^*$ is added to $W$, a contradiction.
    
    \paragraph{No agent from $T$ gets removed from $S$ throughout the algorithm}
    It remains to argue why it cannot be the case that no agent from $T$ gets deleted from $S$ over the course of the algorithm. 
   Note that the above argument implies that some agent from $T$ gets deleted from $S$, 
   if we reach level $\ell=\theta-(\beta+\gamma)-1$.
   For this not to happen, we need to add $B$ statements to $W$ before reducing $\ell$ to $\theta-(\beta+\gamma)-1$. 
   We claim that when adding $B$ statements to the slate $W$, the algorithm needs to delete at least one agent from $T$. 
   For this, observe that for each statement we delete  $\ceil{\frac{n\gamma}{B\gamma+1}}$ agents. 
   Thus, $B\ceil{\frac{n\gamma}{B\gamma+1}}$ agents get deleted. 
   It remains to prove that  $n-B\ceil{\frac{n\gamma}{B\gamma+1}}\leq \frac{B}{B\gamma+1}\cdot \ceil{\frac{n}{B}}$,  implying that less than $|T|=(\frac{B}{B\gamma+1}-\epsilon)\cdot \ceil{\frac{n}{B}}$ agents remain at the end of the algorithm:

$$\frac{B}{B\gamma+1}\cdot \ceil{\frac{n}{B}}+B\ceil{\frac{n\gamma}{B\gamma+1}}\geq \frac{B}{B\gamma+1}\cdot\frac{n}{B}+B\frac{n\gamma}{B\gamma+1}=\frac{n}{B\gamma+1}+\frac{nB\gamma}{B\gamma+1}=n(\frac{B\gamma+1}{B\gamma+1})=n.$$
\end{proof}

\begin{table}[t]
\centering
\resizebox{\textwidth}{!}{\begin{tabular}{cccc|ccc|ccc|ccc}
\hline
\multicolumn{4}{c|}{Errors} & \multicolumn{3}{c|}{Average Utility} & \multicolumn{3}{c|}{Average 10th-Percentile Utility} & \multicolumn{3}{c}{\#instances w. cBJR violation} \\
$\beta$ & $\mu$ & $\gamma$ & $\delta$ & Uniform & Fast & Complex & Uniform & Fast & Complex & Uniform & Fast & Complex \\
\hline
0.0 & 1.0 & 1.0 & 0.0 & 4.56 & 4.49 & 4.49 & 1.33 & 1.43 & 1.51 & 31.0 & 0.0 & 0.0 \\
1.0 & 0.85 & 0.85 & 1.0 & 3.36 & 3.86 & 4.26 & 0.05 & 0.8 & 0.98 & 98.0 & 65.0 & 45.0 \\
2.0 & 0.7 & 0.7 & 2.0 & 2.96 & 3.15 & 3.44 & 0.01 & 0.33 & 0.53 & 99.0 & 98.0 & 100.0 \\
3.0 & 0.55 & 0.55 & 3.0 & 2.79 & 2.62 & 2.95 & -0.04 & 0.13 & 0.22 & 99.0 & 100.0 & 100.0 \\
\hline
\end{tabular}}
\caption{Metrics on produced slate $W$ and mapping $\omega$ for three versions of our democratic process under varying errors.}
\label{tab:metrics_overview_variant}
\end{table}

\section{Additional Material for \Cref{sec:synEnv}} \label{app:synEnv}

\paragraph{Utility}
We show in \Cref{tab:metrics_overview_variant} some metrics regarding the utility agents have for their matched statement in the slate in our experiments in the synthetic environment. 
We observe that while all three variants perform similarly under no errors or very high errors, differences emerge in the medium-error regime. 
Specifically, \texttt{Complex} starts to outperform \texttt{Fast}, which in turn performs better than \texttt{Uniform}.\footnote{Note that our processes are not designed to maximize average utility. For example, a slate consisting of a single long statement covering the majority opinion may achieve high average utility while severely violating proportionality.}
A similar trend appears when taking a more egalitarian perspective and examining the average utility of agents in the bottom 10th percentile. Here, the relative differences between the three variants are more pronounced.
With respect to cBJR, \texttt{Uniform} already violates cBJR in one-third of the instances even when queries are exact. When query errors are introduced, cBJR violations also quickly emerge for slates produced by the other two variants, with \texttt{Complex} showing slightly better performance for small errors.

\paragraph{Scaling in Different Parameters}
In \Cref{fig:parameter_errors}, we analyze how \texttt{Complex} behaves when increasing one error source while keeping the others at the accurate level. 
\begin{figure}[t]
    \centering
    \begin{subfigure}[b]{0.32\linewidth}
        \centering
        \includegraphics[width=\linewidth]{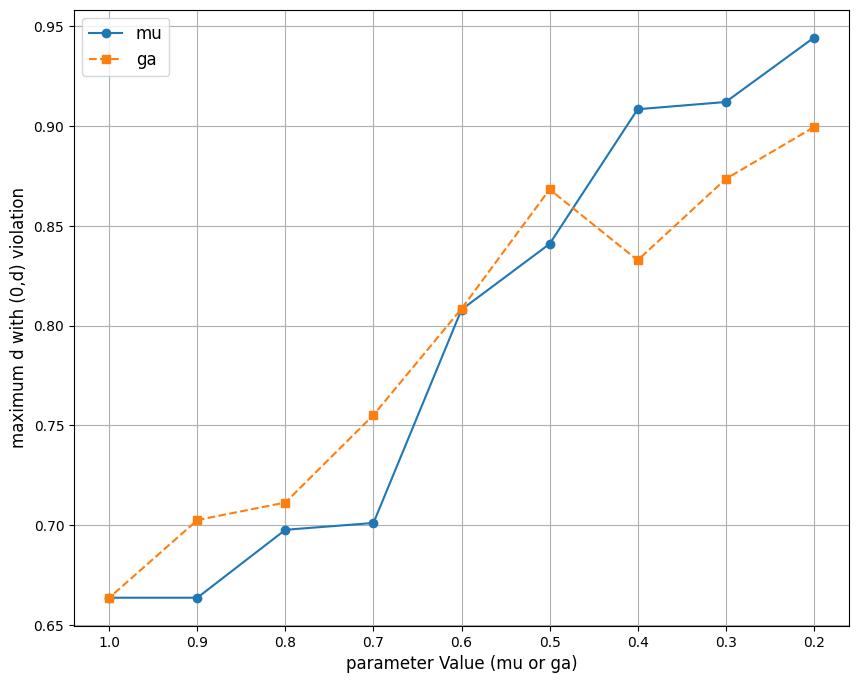}
        \caption{Average maximum $d$ with $(0,d)$-cBJR violation when varying parameters $\mu$ and $\gamma$.}
    \end{subfigure}
    \hfill
    \begin{subfigure}[b]{0.32\linewidth}
        \centering
        \includegraphics[width=\linewidth]{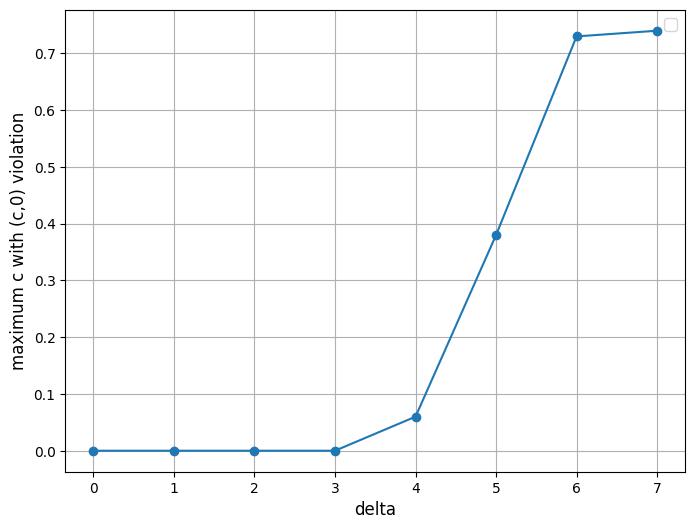}
        \caption{Average maximum $c$ with $(c,1)$-cBJR violation  when varying parameter $\delta$.}
    \end{subfigure}
    \hfill
    \begin{subfigure}[b]{0.32\linewidth}
        \centering
        \includegraphics[width=\linewidth]{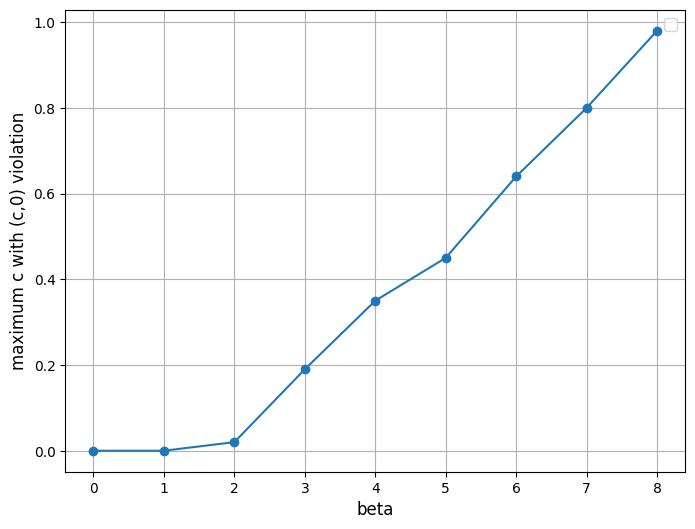}
        \caption{Average maximum $c$ with $(c,1)$-cBJR violation when varying parameter $\beta$.}
    \end{subfigure}
    \caption{Comparison of proportionality violations introduced when only modifying one error parameter and keeping the others accurate.}
    \label{fig:parameter_errors}
\end{figure}

\paragraph{Worst-Case Error Model}
In addition to the error model described in the main body, we also explored a more worst-case-focused approach. For discriminative queries, we always add or subtract $\beta$ from the true utility at random. For the generative query, we sample from the statements that achieve the worst possible ratio in \Cref{eq:approx-gen}. Results can be found in \Cref{app:worst,app:worst2}. 

\begin{figure*}[t]
    \centering

    \begin{subfigure}[t]{0.24\textwidth}
        \centering
        \resizebox{\textwidth}{!}{\begin{tikzpicture}
\begin{axis}[
    width=10cm,
    height=6cm,
    xlabel={$c$},
    ylabel={Max $d$ with $(c,d)$-violation},
    grid=both,
    legend style={at={(0.5,-0.2)},anchor=north,legend columns=-1},
    ymin=-0.1,
    xmin=-0.1,
 ymax=1.1, xmax=10.1]
\addplot[draw=none, fill=gray, fill opacity=0.3] coordinates {
    (0.0, 1.0) (0.0, 1.1) (10.1, 1.1) (10.1, 1.0) (0.0, 1.0)
};

    \addplot[
        color=orange,
        legend entry=MW,
        mark=*,
        thick
    ] coordinates {
    (0.0, 0.8278333333333333) (1.0, 0.4819583333333333) (2.0, 0.3316666666666667) (3.0, 0.2362083333333333) (4.0, 0.17716666666666664) (5.0, 0.13070833333333332) (6.0, 0.09162500000000001) (7.0, 0.059750000000000004) (8.0, 0.045) (9.0, 0.006000000000000001) (10.0, 0.0) };

    \addplot[
        color=green!50!black,
        legend entry=complex,
        mark=*,
        thick
    ] coordinates {
    (0.0, 0.6636666666666667) (1.0, 0.465875) (2.0, 0.3079166666666666) (3.0, 0.21979166666666672) (4.0, 0.16549999999999998) (5.0, 0.11795833333333335) (6.0, 0.08) (7.0, 0.0565) (8.0, 0.043) (9.0, 0.014500000000000002) (10.0, 0.0) };

    \addplot[
        color=blue,
        legend entry=simple,
        mark=*,
        thick
    ] coordinates {
    (0.0, 0.6736250000000001) (1.0, 0.45258333333333334) (2.0, 0.30325) (3.0, 0.22195833333333334) (4.0, 0.16424999999999998) (5.0, 0.11850000000000001) (6.0, 0.08262499999999999) (7.0, 0.0575) (8.0, 0.04649999999999999) (9.0, 0.0165) (10.0, 0.0) };

    \end{axis}
    \end{tikzpicture}
    }
       \caption{\(\beta=\delta=0\), \(\mu=\gamma=1\)}
    \end{subfigure}
    \begin{subfigure}[t]{0.24\textwidth}
        \centering
        \resizebox{\textwidth}{!}{\begin{tikzpicture}
\begin{axis}[
    width=10cm,
    height=6cm,
    xlabel={$c$},
    ylabel={Max $d$ with $(c,d)$-violation},
    grid=both,
    legend style={at={(0.5,-0.2)},anchor=north,legend columns=-1},
    ymin=-0.1,
    xmin=-0.1,
 ymax=4.584250000000001, xmax=10.1]
\addplot[draw=none, fill=gray, fill opacity=0.3] coordinates {
    (3.0, 1.384083044982699) (3.0, 4.584250000000001) (10.1, 4.584250000000001) (10.1, 1.384083044982699) (3.0, 1.384083044982699)
};

    \addplot[
        color=orange,
        legend entry=MW,
        mark=*,
        thick
    ] coordinates {
    (0.0, 4.1675) (1.0, 2.32625) (2.0, 1.2133333333333334) (3.0, 0.7974166666666668) (4.0, 0.552875) (5.0, 0.388875) (6.0, 0.26349999999999996) (7.0, 0.171375) (8.0, 0.10800000000000001) (9.0, 0.0545) (10.0, 0.0) };

    \addplot[
        color=green!50!black,
        legend entry=complex,
        mark=*,
        thick
    ] coordinates {
    (0.0, 1.0569583333333334) (1.0, 0.6595833333333334) (2.0, 0.4480833333333334) (3.0, 0.315375) (4.0, 0.22329166666666667) (5.0, 0.15725) (6.0, 0.10870833333333335) (7.0, 0.07) (8.0, 0.052125000000000005) (9.0, 0.03900000000000001) (10.0, 0.0105) };

    \addplot[
        color=blue,
        legend entry=simple,
        mark=*,
        thick
    ] coordinates {
    (0.0, 1.1613333333333333) (1.0, 0.7132916666666665) (2.0, 0.48020833333333335) (3.0, 0.3334583333333333) (4.0, 0.23104166666666667) (5.0, 0.16629166666666667) (6.0, 0.11904166666666667) (7.0, 0.079125) (8.0, 0.054125000000000006) (9.0, 0.037000000000000005) (10.0, 0.01) };

    \end{axis}
    \end{tikzpicture}
    }
        \caption{\(\beta=\delta=1\), \(\mu=\gamma=0.85\)}
    \end{subfigure}
    \begin{subfigure}[t]{0.24\textwidth}
        \centering
        \resizebox{\textwidth}{!}{\begin{tikzpicture}
\begin{axis}[
    width=10cm,
    height=6cm,
    xlabel={$c$},
    ylabel={Max $d$ with $(c,d)$-violation},
    grid=both,
    legend style={at={(0.5,-0.2)},anchor=north,legend columns=-1},
    ymin=-0.1,
    xmin=-0.1,
 ymax=6.798, xmax=10.1]
\addplot[draw=none, fill=gray, fill opacity=0.3] coordinates {
    (6.0, 2.0408163265306127) (6.0, 6.798) (10.1, 6.798) (10.1, 2.0408163265306127) (6.0, 2.0408163265306127)
};

    \addplot[
        color=orange,
        legend entry=MW,
        mark=*,
        thick
    ] coordinates {
    (0.0, 6.18) (1.0, 3.15875) (2.0, 1.7766666666666668) (3.0, 1.2224583333333334) (4.0, 0.8703749999999999) (5.0, 0.591) (6.0, 0.36975) (7.0, 0.217) (8.0, 0.129) (9.0, 0.061500000000000006) (10.0, 0.020500000000000004) };

    \addplot[
        color=green!50!black,
        legend entry=complex,
        mark=*,
        thick
    ] coordinates {
    (0.0, 1.5937916666666667) (1.0, 1.0134166666666666) (2.0, 0.6784583333333333) (3.0, 0.45516666666666666) (4.0, 0.289375) (5.0, 0.1952916666666667) (6.0, 0.13491666666666666) (7.0, 0.093125) (8.0, 0.062625) (9.0, 0.051625) (10.0, 0.026000000000000002) };

    \addplot[
        color=blue,
        legend entry=simple,
        mark=*,
        thick
    ] coordinates {
    (0.0, 2.618333333333333) (1.0, 1.6382916666666665) (2.0, 1.1091666666666666) (3.0, 0.747375) (4.0, 0.5011666666666666) (5.0, 0.326125) (6.0, 0.21720833333333334) (7.0, 0.143375) (8.0, 0.08825000000000001) (9.0, 0.055125) (10.0, 0.0325) };

    \end{axis}
    \end{tikzpicture}
    }
    \caption{\(\beta=\delta=2\), \(\mu=\gamma=0.7\)}
    \end{subfigure}
    \begin{subfigure}[t]{0.24\textwidth}
        \centering
        \resizebox{\textwidth}{!}{\begin{tikzpicture}
\begin{axis}[
    width=10cm,
    height=6cm,
    xlabel={$c$},
    ylabel={Max $d$ with $(c,d)$-violation},
    grid=both,
    legend style={at={(0.5,-0.2)},anchor=north,legend columns=-1},
    ymin=-0.1,
    xmin=-0.1,
 ymax=9.50675, xmax=10.1]
\addplot[draw=none, fill=gray, fill opacity=0.3] coordinates {
    (9.0, 3.305785123966942) (9.0, 9.50675) (10.1, 9.50675) (10.1, 3.305785123966942) (9.0, 3.305785123966942)
};

    \addplot[
        color=orange,
        legend entry=MW,
        mark=*,
        thick
    ] coordinates {
    (0.0, 8.6425) (1.0, 4.11125) (2.0, 2.3625) (3.0, 1.6316249999999999) (4.0, 1.157125) (5.0, 0.75575) (6.0, 0.44825000000000004) (7.0, 0.253) (8.0, 0.13949999999999999) (9.0, 0.07425) (10.0, 0.041625) };

    \addplot[
        color=green!50!black,
        legend entry=complex,
        mark=*,
        thick
    ] coordinates {
    (0.0, 1.7229999999999999) (1.0, 1.1283333333333334) (2.0, 0.7343333333333333) (3.0, 0.4877916666666667) (4.0, 0.32158333333333333) (5.0, 0.20579166666666665) (6.0, 0.14204166666666665) (7.0, 0.09725) (8.0, 0.063625) (9.0, 0.051500000000000004) (10.0, 0.0175) };

    \addplot[
        color=blue,
        legend entry=simple,
        mark=*,
        thick
    ] coordinates {
    (0.0, 5.44625) (1.0, 2.7452083333333337) (2.0, 1.6765416666666666) (3.0, 1.1505416666666668) (4.0, 0.7861249999999999) (5.0, 0.5268333333333334) (6.0, 0.34075000000000005) (7.0, 0.21241666666666667) (8.0, 0.144125) (9.0, 0.08825000000000001) (10.0, 0.0535) };

    \end{axis}
    \end{tikzpicture}
    }
        \caption{\(\beta=\delta=3\), \(\mu=\gamma=0.55\)}
    \end{subfigure}

    \caption{Analogous to \Cref{fig:approx-BJR} but with worst-case error model.}\label{app:worst}
\end{figure*}
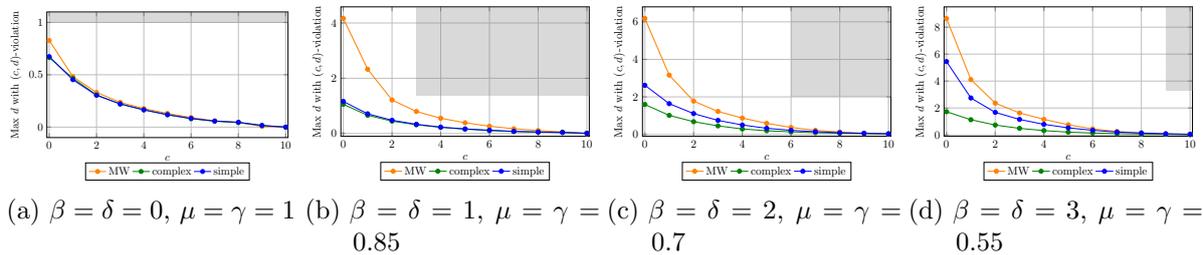

\begin{table}[t!]
\centering
\resizebox{\textwidth}{!}{\begin{tabular}{cccc|ccc|ccc|ccc}
\hline
\multicolumn{4}{c|}{Errors} & \multicolumn{3}{c|}{Average Utility} & \multicolumn{3}{c|}{Average 10th-Percentile Utility} & \multicolumn{3}{c}{\#instances w. cBJR violation} \\
$\beta$ & $\mu$ & $\gamma$ & $\delta$ & MW & Simple & Complex & MW & Simple & Complex & MW & Simple & Complex \\
\hline
0.0 & 1.0 & 1.0 & 0.0 & 4.56 & 4.49 & 4.49 & 1.33 & 1.43 & 1.51 & 31.0 & 0.0 & 0.0 \\
1.0 & 0.85 & 0.85 & 1.0 & 3.17 & 3.73 & 4.01 & 0.0 & 0.8 & 0.9 & 100.0 & 81.0 & 68.0 \\
2.0 & 0.7 & 0.7 & 2.0 & 1.81 & 1.97 & 2.73 & -0.28 & -0.21 & 0.36 & 100.0 & 100.0 & 100.0 \\
3.0 & 0.55 & 0.55 & 3.0 & 0.79 & 0.88 & 2.46 & -0.61 & -1.13 & 0.18 & 100.0 & 100.0 & 100.0 \\
\hline
\end{tabular}}
\caption{Analogous to \Cref{tab:metrics_overview_variant} but with worst-case error model.}\label{app:worst2}
\end{table}

\clearpage
\clearpage

\section{PROSE: Implementation Details}\label{app:queries}

\subsection{Implementation of Discriminative Query}

One key challenge in implementing discriminative queries in our setting is calibrating the evaluation of statements with varying lengths and levels of detail. Specifically, we observed that LLMs tend to rate short, generic statements more favorably. To address this, we compute two separate scores (both on a scale from 1 to 6):  $\mathtt{Agreement}$, which measures the extent to which the user agrees with the details of the statement, and $\mathtt{specificity}$, which measures the extent to which all details from the user’s description are reflected in the statement.
The final utility score is computed as:
$
\mathtt{agreement} - \mathtt{specificity\ coefficient} \cdot \nicefrac{(6 - \mathtt{specificity})}{5},
$
where $\mathtt{specificity\ coefficient}$ is a hyperparameter that allows users to influence the desired level of detail in the produced slate.\footnote{In our experiments, we set $\mathtt{specificity\ coefficient} = 1.$} 
To calculate the $\mathtt{agreement}$ and $\mathtt{specificity}$ scores, we pass the statement and user description, along with a carefully designed description of the scoring criteria, as part of a prompt to the LLM. The final scores are derived as weighted averages of the LLM’s log probabilities over its output tokens for this prompt.

\paragraph{System Prompt for Agreement}

\begin{Verbatim}[breaklines=True, breaksymbolleft=]
Your task is to determine how much a user would agree with some statement. You will be given information about the user's opinions. Respond with a number between 1 and 6.
\end{Verbatim}

\paragraph{Prompt for Agreement}

\begin{Verbatim}[breaklines=True, breaksymbolleft=]
Rating scale meaning:
{6: "Perfect match. User agrees 100
5: "Near-perfect match. User agrees with 95-99
4: "Substantial agreement. User agrees with about 70-80
3: "Moderate agreement. User agrees with roughly half of the statement\'s points.", 
2: "Minimal agreement. User agrees with only a small portion (about 20-30
1: "No agreement. Statement contradicts or is irrelevant to user's opinion."}

Statement:
<insert statement>

User information:
<insert user information>

Now it is time for you to give the most accurate numerical rating. Write a number between 1 and 6 and nothing else. Your estimated rating: 
\end{Verbatim}

\paragraph{System Prompt for Specificity}

\begin{Verbatim}[breaklines=True, breaksymbolleft=]
Your task is to determine how much detail a statement has. You will be given a statement, and information about the user's opinions. Your task is to determine how many *specific details* from the user's opinion are present in the statement. Respond with a number between 1 and 6.

\end{Verbatim}

\paragraph{Prompt for Specificity}

\begin{Verbatim}[breaklines=True, breaksymbolleft=]
Rating scale meaning:
{6: "Exhaustive. All specific details from user's opinion are present in the statement.",
5: "Nearly complete. Statement contains most specific details from user's opinion, with only minor omissions.",
4: "Partially detailed. About 70-80
3: "Moderately specific. Roughly half of the specific details in the user's opinion are mentioned.",
2: "Minimally specific. Only a few (20-30
1: "Non-specific. Statement is entirely general, without any specific details from user's opinion.",}

Statement:
<insert statement>

User information:
<insert user information>

Now it is time for you to give the most accurate numerical rating. Write a number between 1 and 6 and nothing else. Your estimated rating: 
\end{Verbatim}

\subsection{Implementation of Generative Query}
We employ multiple strategies to generate statements. We refer to each strategy as a generator. 
All generators consist of two stages. In the first stage, we identify groups of agents with similar opinions. 
In the second stage, for a computed group of similar agents, we pass the descriptions of all users in the group to the LLM and prompt it to generate a statement of at most a given length that accurately reflects the opinion of all users in the group.

\subsubsection{Embeddings}
We consider two types of embeddings. 
First, we embed each agent using their description via OpenAI's \texttt{embedding-3-large}.

The computation of the second embedding is slightly more involved. 
We first use LLM queries to create a list of brief statements covering all issues or opinions raised by any user. 
Then we subsample $50$ of these statements and apply a very cheap variant of the discriminative query to compute agents' agreement with the statements on the following scale: 
\begin{enumerate*}[label=$\cdot$]
   \item 1:  Strongly goes against user's opinion
    \item 2: Goes against user's opinion
    \item 3: Somewhat goes against user's opinion
    \item 4: Neutral / unknown 
    \item 5: Somewhat aligned with user's opinion 
    \item 6: Aligned with user's opinion
    \item 7: Strongly aligned with user's opinion
    \end{enumerate*}
The computed ratings will serve as our embedding. 

\subsubsection{Finding ``Cohesive'' Groups of Agents}
Given an embedding of the agents, we employ four different strategies to identify cohesive groups of agents given a call of the generative query $\Gen(S,\ell,x)$:
\begin{description}
    \item[TagNNGenerator] We select an agent at random from $S$ and compute the $\ceil{\frac{x\cdot n}{B}}$ agents from $S$ with the smallest Euclidean distance to the agent in the embedding. 

    \item[WeightedNNGenerator] For each agent from $S$, we compute the $\ceil{\frac{x\cdot n}{B}}$ agents from $S$ with the smallest Euclidean distance to the agent in the embedding. Then, we assign probabilities to these generated clusters so that each agent has a similar probability of being included in the sampled cluster. 
    We draw a sample with these probabilities and return the sampled cluster.

    \item[ClosestClusterGenerator]For each agent from $S$, we create a cluster of $\ceil{\frac{x\cdot n}{B}}$ agents by starting with the cluster only containing the agent and iteratively adding agents from $S$ to minimize the resulting summed Euclidean distance between agents in the cluster. 
    We sample a generated cluster with probability anti-proportional to the summed distance between agents in the cluster. 

    \item[PreviousBestGenerator] We go through all previously generated statements and compute for each of them the agents from $S$ approving the statement at level $\ell$. We return the biggest so-computed group. 
\end{description}

\subsubsection{Generating Statements}\label{app:genStat}
Once we have identified a group of agents, we prompt the LLM with their descriptions as part of the following prompt to generate a consensus statement for this group: 
\paragraph{System Prompt for Consensus Statement}
\begin{Verbatim}[breaklines=True, breaksymbolleft=]
You will be given information about a group of users and their thoughts on a topic. Your task is to write a short, strong opinion that reflects a single, clear stance reflecting the users' thoughts. The opinion should sound personal, direct, and conversational, as if written by someone expressing their own thoughts. Avoid summary-style language or listing multiple viewpoints. The level of detail included in the opinion you write should be VERY HIGH. Use at most <word budget> words. Write the opinion in XML tags <opinion>...</opinion>.
\end{Verbatim}

\paragraph{Prompt for Consensus Statement}

\begin{Verbatim}[breaklines=True, breaksymbolleft=]
User information:
<list of user information for all agents given as input to generative query>

Now write the opinion.
\end{Verbatim}

\subsection{Other Implementation Details of PROSE}

\paragraph{Generators}
For each cost value and approval level we iterate over, we use each of the four generators twice, where for TagNN, WeightedNN, and ClosestCluster we use each embedding once.

\paragraph{Cost Values}
For cost efficiency, and based on the strong performance of the \texttt{Fast} algorithm in \Cref{sec:synEnv}, as well as our observation that it produces similar results, the cost list  $C$ includes only a selected subset of possible cost values.

In particular, for the three drug review instances, we use $C=
[80, 70, 60, 50, 40, 36, 32, 28, 24, 20, 16, 12, 10, 8, 6, 4, 2]
$, while for bowlinggreen which has a different word budget per agent, we use $C=
[80, 60, 40, 36, 32, 28, 24, 20, 16, 12, 8, 4]$.

\paragraph{Approval Levels} We use $\ell=[5.5, 5, 4.5, 4, 3.5, 3, 2, 1, 0]$ for each of the instances.

\paragraph{Minimum Statement Lengths} To ensure that meaningful statements get added to the slate, we impose a minimum statement length of $10$ for the drugs dataset and of $8$ for the bowlinggreen dataset. The minimum statement length is waived once we reach $\ell=0$.

\subsection{Evaluation Chain-of-Thought Discriminative Queries}\label{app:CoT}

We use the following query in our evaluation of the generated slates to compute reliable utility values for agent, statement pairs: 

\paragraph{System Prompt}
\begin{Verbatim}[breaklines=True, breaksymbolleft=]
You will be provided with survey responses from a user, and a separate statement. Your task is to determine the extent to which the user would agree with that statement, on a scale from 1 to 6. Your response should be JSON containing your responses for each step in reasoning. 
\end{Verbatim}

\paragraph{Prompt}

\begin{Verbatim}[breaklines=True, breaksymbolleft=]
**User description:**
<insert user information>

**Statement:**
<statement>

**Instructions:**

To determine the extent to which the statement fully summarizes the user's opinion and reasoning behind it, follow these steps. Be very concise when addressing each step.

Step 1. Summarize the user's opinions and reasoning on the topic. Include a few concrete examples in your summary. 

Step 2. Explain which aspects of the user's opinion and reasoning the statement fails to touch on.

Step 3. Explain which aspects of the user's opinion and reasoning the statement actively contradicts.

Step 4. Weigh all of your considerations thoughtfully, to determine overall how much the statement summarizes the user's opinion and reasoning. Then select from one of the following 6 choices.
{
6: "the complete details of the user's opinion and reasoning are captured by the statement",
5: "almost all of the details of the user's opinion and reasoning are captured by the statement, with very minor contradictions",
4: "around two-thirds of the details of the user's opinion and reasoning are captured by the statement, with few contradictions",
3: "a majority of the details of the user's opinion and reasoning are captured by the statement, with some omissions / contradictions",
2: "the user would at best slightly agree with the statement, because the statement only partially captures their opinion and reasoning, or is missing something",
1: "the statement is either heavily incomplete, or contradicts / is orthogonal to the user's opinion and reasoning",
} 

**Output instructions:**

Respond in JSON as follows: 
{{
"step1" : <your response to step 1>,
"step2" : <your response to step 2>,
"step3" : <your response to step 3>,
"step4" : <your response to step 4>,
"score" : <your score, a number between 1 and 6>
}}
\end{Verbatim}

\subsection{Verification of Discriminative Queries} \label{app:Verification}

In this section, we validate the two discriminative queries used in our experiments: first, the discriminative query used in PROSE, and second, the chain-of-thought discriminative query used in the evaluation. 

For this validation experiment, we use publicly available up- and downvote data from the Polis \textit{BowlingGreen} dataset. (To ensure a sufficiently large sample size, we use the full dataset consisting of 2031 users, as opposed to the curated dataset of 41 users used in the rest of our experiments.) For each of the 51 users who placed at least 5 upvotes and 5 downvotes on comments they did not author, we select 5 upvotes and 5 downvotes uniformly at random. For each user-vote pair, we use a discriminative query to estimate how much the user approves the statement they voted on.

For both the discriminative query used in PROSE, and the chain-of-thought discriminative query, we observe that 84\% (CI: 72.5\% -- 92.1\%) of users have higher mean discriminative query scores for upvoted statements than downvoted statements. Thus, both queries can reliably predict user voting behavior from a user's written beliefs.

To verify the independence of the queries, for each discriminative query, we construct a vector of length 51, indexed by the users. Each user's entry is the difference between the mean discriminative query output on upvoted statements with the mean discriminative query output on downvoted statements. The Pearson correlation coefficient between these two vectors is 0.13---relatively low considering the high accuracy of both queries---demonstrating that the two discriminative queries are relatively independent, not just in their implementation, but also in their behavior.\footnote{The inter-rater reliability score (via Cohen's kappa) for the two implementations is 0.41 (fair to moderate agreement). For 39/51 users (76\%), both implementations return higher mean scores on upvoted statements than downvoted statements; for 8/51 users (16\%), one implementation returns higher mean scores and the other lower mean scores on upvoted statements (each implementation is correct 4/8 times).} 

\section{Experiments: Additional Details}

\subsection{Datasets}\label{app:dataset}

For the birth control dataset, we take all reviews on ``Ethinyl estradiol / norethindrone'' from the UCI ML Drug Review dataset \citep{grasser2018aspect}. 
With the help of an LLM, we discard all reviews that speak about a specific brand. 
Out of the remaining $1275$ reviews, we keep the reviews whose length is between the median and 75th percentile. 
For the uniform dataset, we sample $8$ reviews with rating $x$ for each $x\in [1,10]$. 
For the imbalanced dataset, we sample $20/10/20/10/20$ reviews with rating $1/2/5/9/10$. 

For the obesity dataset, we take all reviews on ``Contrave'' and apply the same procedure as above, except that we keep all reviews whose length is between the 25th and 75th percentile (as there are fewer reviews for this drug).

\subsection{Baselines}\label{app:baseline}

\subsubsection{Contextless Zero-Shot}
We query the LLM with the following prompt: 

\paragraph{System Prompt}
\begin{Verbatim}[breaklines=True, breaksymbolleft=]
Your task is to write a *proportional opinion slate* on a particular topic while staying within a word budget. A proportional opinion slate is a collection of opinions that, taken together, give an overview of the general population's opinions on that topic. Morover, the lengths of the opinions should correspond to the relative proportion of people that hold that opinion (hence "proportional" opinion slate). 

**Stylized example:** Suppose 70
- Salads are the best dinner option by far. They are healthy and tasty. They fulfill all major dietary requirements so anybody can have them. They are also flexible, since many different ingredients can be substituted.
- Burgers make the best dinner because they're tasty and filling.
- Soup makes the best dinner.

**Writing guidelines:** Each statement should be a short, strong opinion that reflects a single, clear stance reflecting a population segment's thoughts. The opinion should sound personal, direct, and conversational, as if written by someone expressing their own thoughts. Avoid summary-style language or listing multiple viewpoints. Finally, do not list word counts next to your statements -- just put the statements and nothing else.

The topic: {topic} 
The word budget: {word_budget} words. Do not write more than {word_budget} words under any circumstances!
\end{Verbatim}

\paragraph{Prompt}

\begin{Verbatim}[breaklines=True, breaksymbolleft=]
Write your proportional summary below. Write each statement on a new line, and use "- " as bullet points. Respect the word limit!
\end{Verbatim}

\subsubsection{Zero-Shot}
We use the same system prompt as for contextless zero-shot with the following prompt: 
\begin{Verbatim}[breaklines=True, breaksymbolleft=]
To improve the accuracy of your proportional opinion slate, below provided are representative opinions that people have on the topic. You should aim to construct a proportional opinion slate that matches the distribution of these opinions.

{user_opinions_list_str}

Write your proportional summary below. Write each statement on a new line, and use "- " as bullet points. Respect the word limit!

\end{Verbatim}

\subsubsection{Clustering}
For the clustering baseline, we
start by computing an embedding of the description of each agent using OpenAI's \texttt{embedding-3-large}. 
Subsequently, we apply PCA compressing the embeddings into five dimensions. 
Finally, we compute a clustering using Affinity Propagation. 
For each identified cluster of $\ell$ agents, we generate a statement for the cluster using the prompt from \Cref{app:genStat} with a word limit of $\frac{\ell B}{n}$.

\subsection{Generated Slates}\label{app:genSlates}

In this section, we present the slates generated by all methods we consider. 
Note that the slate constructed by PROSE is typically a bit fragmented with multiple statements covering similar opinions
One reason for this is the inner workings of the algorithm. 
The following pattern regularly emerges across  instances:

We start with a high approval level. At this stage, a consensus statement is generated that receives strong—but not unanimous—support among users with a similar opinion. The statement does not receive the required approvals to get added at this high level (but would get added at a slightly lower level).

As the budget decreases, a shorter statement targeted to a smaller subset of the user group with similar opinions is created, which is evaluated highly by those users and is thus added to the slate. 
(The success of these short statements is in part due to the very limited data we have available for each user; many users have only a few core points, which can be adequately captured within a succinct statement). 

After one or two such short statements have been added, the approval level is decreased. At this lower level, the previously generated ``long'' consensus statement would have sufficient support to be added in case all agents from the dataset were still present. 
However, at this point, some users that would be covered by the ``long'' statement have already been removed due to the prior selection of a short statement. As a result, despite the lowered approval threshold, the original  ``long'' consensus statement still does not have sufficient support to be added. This cycle then repeats itself across iterations and datasets.

Notably, this pattern does lead to agents having a high utility for their matched statement under our discriminative query. 

\subsubsection{Birth Control (Uniform)}

\paragraph{PROSE}
\begin{itemize}
    \item This birth control is a nightmare—severe side effects outweigh any benefits. Avoid it at all costs.
    \item This birth control is worth sticking with; initial side effects fade, and benefits are undeniable!
    \item This pill wreaks havoc on your body and emotions.
    \item This birth control wreaks havoc on emotions, skin, and weight.
    \item These pills cause more harm than good; I’m switching.
    \item This birth control is misery in a pill, avoid it.
    \item This pill is life-changing—clear skin, lighter periods, and boosted confidence outweigh mild, manageable side effects!
    \item This pill works well after initial side effects settle down.
    \item These pills mess with mood, weight, and overall well-being.
    \item This birth control works, but the side effects can be frustrating!
    \item This pill works well after initial side effects subside.
    \item This birth control causes unpredictable side effects; I’d think twice before committing to it long-term.
    \item This pill is not worth it; side effects outweigh any benefits.
    \item Disappointing.
\end{itemize}

\paragraph{PROSE-UnitCost}
\begin{itemize}
    \item This pill is an absolute nightmare. It wrecks your body with relentless side effects—crippling cramps, constant bleeding, unbearable headaches, mood swings that spiral into depression, and even hair loss. It doesn’t fix acne, it piles on weight, and the emotional toll is devastating. Sure, it might prevent pregnancy, but at what cost? No one should have to endure this kind of misery for birth control. I wouldn’t recommend it to anyone—ever.
    
    \item This birth control pill is absolutely worth sticking with despite the initial side effects. Yes, the first few weeks can be rough—nausea, mood swings, spotting, or even weight changes—but once your body adjusts, it’s a game-changer. It regulates periods, clears up skin, and doesn’t cause major issues like bloating or severe mood swings for most people. Taking it at the same time every day, preferably at night, seems to minimize side effects significantly. Honestly, the benefits far outweigh the temporary discomfort, and for many, it’s the best option they’ve tried. Just be patient and give it time—it’s worth it.
    
    \item This birth control is a rollercoaster of side effects that makes me question if it’s worth it. Sure, it prevents pregnancy, but at what cost? Between the weight gain, mood swings, acne flare-ups, constant spotting, and a complete nosedive in sex drive, it feels like trading one problem for a dozen others. Even the "positives" like bigger breasts or lighter periods come with strings attached—tenderness, irregularity, or worse. It’s exhausting to deal with unpredictable bleeding, emotional chaos, and physical discomfort, all while hoping my body will "adjust." Honestly, this pill feels more like a gamble than a solution, and I’m not sure it’s one I want to keep taking.
    
    \item This birth control is a mixed bag of extremes—either it works wonders or wreaks havoc. While it’s effective for preventing pregnancy and managing issues like ovarian cysts, the side effects are all over the place: mood swings that ruin relationships, unpredictable bleeding, migraines, weight changes, and even terrifying ER visits for seizures or vision loss. Sure, some people see improvements in acne or lighter periods, but the trade-offs—like constant spotting, emotional instability, or debilitating cramps—make it feel like a gamble. Honestly, it’s hard to trust something that can either fix your life or completely derail it.
    
    \item This pill is a mixed bag—while it works wonders for some, regulating periods, clearing skin, and even eliminating periods entirely, the side effects are no joke. From weight gain, mood swings, and breast tenderness to devastating consequences like blood clots, it’s clear this isn’t a one-size-fits-all solution. Personally, I’d be cautious and prioritize thorough discussions with a doctor before committing to it. The potential risks are too significant to ignore, no matter how convenient the benefits might seem.
\end{itemize}

\paragraph{Clustering}
\begin{itemize}
    \item Deadly drug.
    \item This medication ruined my health completely.
    \item This pill is a nightmare—weight gain, breakouts, and awful side effects!
    \item This birth control has been a nightmare for me—constant bleeding, horrible acne, mood swings, and unbearable cramps. It’s just not worth the pain and frustration.
    \item This birth control wreaked havoc on my body—weight gain, no libido, mood swings—never again.
    \item This pill is a nightmare. The severe cramping, irregular bleeding, mood swings, and other distressing effects make continuing it feel unbearable and unhealthy.
    \item This pill is life-changing when used correctly. Stick through the initial side effects, take it at the same time daily, and you'll see amazing results!
    \item Honestly, this pill is a game-changer—clear skin, no weight gain, no wild side effects, and barely any periods. Totally worth every penny!
    \item This pill can be tough at first, but sticking with it is worth the positive changes later!
    \item Adjusting takes time, but overall it's effective!
\end{itemize}

\paragraph{Zero-Shot}
\begin{itemize}
    \item This medication was devastating for me. It caused severe side effects like persistent acne, mood swings, weight gain, and intense cramps. I would never recommend it to anyone.
    \item I experienced horrible nausea, moodiness, and spotting in the early months, but things improved over time. It helped lighten and regulate my periods, and now I’m mostly fine on it.
    \item I’ve experienced some discomfort, but overall it’s effective at preventing pregnancy and hasn’t been unbearable. Side effects vary, but they’re manageable for me.
    \item This pill has been great! My periods are lighter, my acne has cleared, and I don’t deal with cramps anymore. I highly recommend it!
\end{itemize}

\paragraph{Contextless Zero-Shot}
\begin{itemize}
    \item I’ve had no major side effects, and it’s been a life-changing solution for me.
    \item It’s mostly fine, but I do experience mood swings and mild nausea occasionally. It’s effective overall, but the side effects can be a bit frustrating.
    \item I don’t like using this medication. It caused me severe headaches and weight gain, so I had to stop. It just did not work for me.
    \item Honestly, it didn’t change much for me. My cycles are still irregular, and I’m not sure if it’s worth continuing.
    \item I’ve had a great experience with this medication. It cleared up my acne and regulated my hormones, so I feel much better overall.
    \item The side effects like dizziness and spotting made it unbearable for me. I had to switch to something else.
\end{itemize}

\subsubsection{Birth Control (Imbalanced)}

\paragraph{PROSE}
\begin{itemize}
    \item This pill is a nightmare. It wrecked my skin, my emotions, and my body. The side effects are unbearable—please, stay far away from it.
    \item This pill is unpredictable and risky; while it helps some, the severe side effects make it not worth the gamble.
    \item This birth control is life-changing—light periods, minimal side effects, and incredible reliability. Absolutely love it!
    \item This pill causes too many side effects to be worth it.
    \item Be patient; benefits outweigh initial side effects over time.
    \item This birth control is effective with minimal, manageable side effects.
    \item This birth control causes too many unbearable side effects to be worth it.
    \item This pill works great with minimal, manageable side effects!
    \item Birth control shouldn’t rob you of your sex drive and emotional stability.
    \item This pill works well with manageable side effects—definitely worth trying!
    \item Birth control works, but side effects can be frustratingly unpredictable.
    \item This pill works, but side effects can be unpredictable and concerning.
\end{itemize}

\paragraph{PROSE-UnitCost}
\begin{itemize}
    \item This birth control might prevent pregnancy, but the side effects are absolutely not worth it. From relentless nausea, mood swings, and acne to weight gain, emotional instability, and even painful sex, it feels like trading one problem for a dozen others. It’s exhausting to deal with irregular periods, constant spotting, and feeling like your body is completely out of sync. Honestly, no one should have to endure this much just for contraception. I’m done with it.
    \item This birth control pill is a game-changer. It’s incredibly effective, lightens periods to the point of barely noticing them, and for many, clears up skin and eliminates cramps. Sure, the first couple of months can be rough with side effects like mood swings, nausea, or acne, but once your body adjusts, it’s smooth sailing. The key is consistency—taking it at the same time every day makes all the difference. It’s not perfect for everyone, but for those it works for, it’s life-changing. I’d recommend it in a heartbeat.
    \item This birth control is a gamble I wouldn’t take again. Sure, it prevents pregnancy, but at what cost? Between the nausea, weight gain, mood swings, loss of libido, and even life-threatening risks like blood clots and seizures, it feels like playing Russian roulette with your health. Some people might love it, but for me, the side effects are just too unpredictable and severe to justify sticking with it. No thanks.
    \item This pill is a mixed bag of extremes—while it’s effective at preventing pregnancy and can clear up acne or regulate periods for some, the side effects are no joke. From nausea, mood swings, and breast tenderness to terrifying experiences like vision loss, seizures, or hair falling out, it’s clear this isn’t a one-size-fits-all solution. Personally, I’d be hesitant to recommend it unless you’re prepared to endure a rollercoaster of side effects while hoping your body adjusts. It works, but at what cost?
    \item This birth control is a mixed bag—while it works effectively for many and even improves things like acne, lighter periods, or sex drive, the side effects can range from mild nausea and breast tenderness to terrifying health scares like seizures or even fatal blood clots. Honestly, it’s a gamble with your body, and while some swear by it, others have paid the ultimate price. Personally, I’d be too scared to risk it.
\end{itemize}

\paragraph{Clustering}
\begin{itemize}
    \item This drug is dangerous.
    \item This pill causes unbearable cramps and prolonged, unpredictable bleeding.
    \item Worst medication ever; caused unbearable side effects.
    \item This pill is dangerous, avoid it!
    \item This birth control has caused relentless acne, mood swings, and physical discomfort—definitely not worth the misery.
    \item This pill causes more harm than good.
    \item This birth control is a nightmare. The side effects are unbearable – from weight gain, mood swings, and nausea to painful headaches and zero sex drive. It’s simply not worth the toll it takes on your body and mind.
    \item Birth control ruined my moods, skin, and sex drive—definitely not worth it!
    \item This pill works great once your body adjusts. Sure, side effects happen, but the benefits outweigh them!
    \item I absolutely love this birth control! It cleared my skin, made my periods super light or nonexistent, and stabilized my mood. No major side effects for me—definitely worth the cost!
    \item This pill significantly reduces cramps and heavy periods effectively!
\end{itemize}

\paragraph{Contextless Zero-Shot}
\begin{itemize}
    \item My experience with Ethinyl estradiol / norethindrone has been overwhelmingly positive. It regulated my periods, cleared my acne, and gave me a sense of control over my body. I haven’t experienced any major side effects, and I feel confident using this medication as part of my routine.
    \item It works well enough for me, but I do experience mood swings and some nausea occasionally. While it’s effective for birth control and cycle regulation, these side effects make it less than perfect for me.
    \item The side effects have been too much for me to handle—constant bloating, headaches, and emotional ups and downs. I stopped taking it because I didn’t feel like myself anymore.
    \item It caused me significant issues like weight gain and unpredictable spotting. I’ve decided to try something else because this medication just isn’t for me.
    \item I didn’t notice much of an effect at all, positive or negative—it just wasn’t the right choice for me.
\end{itemize}

\paragraph{Zero-Shot}
\begin{itemize}
    \item This medication has been an absolute nightmare for me. My anxiety, mood swings, and depression have worsened dramatically since starting it. My periods have become irregular and extremely painful, and the side effects have made me feel like I’m losing control of myself. I cannot recommend this to anyone.
    \item I experienced terrible weight gain and breakouts after taking this pill. Despite following a strict diet and exercise plan, I haven’t been able to shed the weight even after stopping the pill. It’s completely thrown my body out of balance and I regret trying it.
    \item My experience has been awful. I’ve had severe nausea, constant bleeding, abdominal cramps, and intense headaches. My mental health has also been affected, making me irritable and emotional. I gave this pill a chance, but I wouldn’t wish it upon anyone else.
    \item The side effects of this pill made me feel terrible—headaches, nausea, bloating, and zero energy. My periods became unpredictable and heavy. While it may work for some, it just didn’t work for my body.
    \item This birth control has been effective in preventing pregnancy, although I’ve experienced minor side effects like nausea and acne. It’s been manageable overall and did help regulate my period after the first few months.
    \item I’ve had a great experience with this medication. It cleared up my acne, regulated my periods, and made my cycles lighter and more predictable. I’m happy with the results and would recommend it to others trying to manage similar issues.
    \item This pill has been a game-changer for me. Minimal side effects and no periods, which I love. I’ve had clearer skin, steady moods, and it’s been a reliable form of contraception. Definitely worth sticking with.
\end{itemize}

\subsubsection{Obesity}

\paragraph{PROSE}
\begin{itemize}
    \item This pill is pure misery—nausea, dizziness, and constant sickness!
    \item This medication curbs cravings, boosts energy, and transforms eating habits!
    \item Weight loss isn't worth these awful side effects and struggles.
    \item This medication is a game-changer! It curbs cravings, boosts control over eating, and delivers real weight loss results despite minor side effects. Totally worth it for serious lifestyle changes!
    \item This pill powerfully curbs hunger, but side effects can hit hard.
    \item This drug's side effects are unbearable and not worth it.
    \item This pill kills cravings, but effort and lifestyle changes matter!
    \item This pill works for weight loss, but the nausea and side effects make it hard to continue.
    \item This medication curbs cravings but demands effort and lifestyle changes.
    \item This journey is tough, but hope and persistence will bring results.
    \item This drug causes unsettling side effects; take it cautiously.
    \item This medication is frustratingly ineffective for weight loss, with minimal results and unpleasant side effects.
    \item Disappointed hope.
\end{itemize}

\paragraph{PROSE-UnitCost}
\begin{itemize}
    \item If you're serious about changing your life, this medication can be a game-changer, but it’s not a magic fix—you have to put in the work. The appetite suppression is real, cravings diminish, and for many, the weight starts to drop, but side effects like nausea, fatigue, dry mouth, and headaches can make the first weeks rough. If you’re ready to push through and commit to healthier choices, this can be the tool to help you get there. Just don’t expect results without effort.
    \item This medication is a rollercoaster—sure, it curbs appetite and helps with cravings, but the side effects are brutal. From extreme nausea, dizziness, and exhaustion to feeling completely out of it, it’s like trading one struggle for another. Honestly, it’s hard to justify sticking with something that makes you feel so awful, even if the scale moves a little. If your body can handle it, great, but for many, it’s just not worth the misery.
    \item This pill can be a game-changer if you’re ready to push through the initial side effects like nausea, headaches, or fatigue. It’s not a magic solution, but it genuinely curbs cravings, forces better food choices, and can lead to significant weight loss if you stick with it. However, it’s not for everyone—some people can’t tolerate the side effects, and results vary. If you’re serious about change and can handle the discomfort, it’s worth trying, but don’t expect miracles without effort.
    \item This medication is a mixed bag of frustration and faint hope. It’s riddled with side effects—anxiety, nausea, constipation, exhaustion, and even mental fog—but delivers little to no weight loss for most. Sure, it curbs hunger for some, but cravings persist, and the scale barely budges. A few users see minor benefits like pain relief or reduced snacking, but the trade-off is steep. Honestly, it feels like a gamble with your body for results that just don’t justify the misery. I wouldn’t recommend it.
    \item This medication is a complete mixed bag—while it can suppress appetite and lead to weight loss for some, the side effects are absolutely brutal and often outweigh the benefits. From crippling headaches, dizziness, and insomnia to anxiety, confusion, and even worsened heart conditions, it feels like trading one problem for a dozen others. Sure, a few people see results, but the majority seem stuck battling constant discomfort, unpredictable side effects, and minimal progress. Honestly, it’s not worth the gamble on your health.
\end{itemize}

\paragraph{Clustering}
\begin{itemize}
    \item Stay away; side effects are unbearable!
    \item Useless and dangerous—never taking it!
    \item This pill causes too many side effects; it’s not worth the discomfort.
    \item This medication's side effects are intolerable.
    \item The side effects are unbearable, making this pill not worth taking.
    \item Nausea and dizziness are tough, but this is worth trying.
    \item This med helps curb cravings and appetite, but side effects like dry mouth, tiredness, or nausea are common early on.
    \item This medication seems to mess with my mind and body too much for the minor benefits it provides.
    \item This pill works well to curb appetite and support weight loss, but dedication and lifestyle changes are essential.
    \item This medication isn’t worth the cost or side effects for inconsistent weight loss results.
    \item This journey proves commitment and effort bring amazing results.
    \item This medication works but the side effects can be tough.
    \item This medication works for weight loss, but the side effects can be overwhelming and frustrating.
\end{itemize}

\paragraph{Zero-Shot}
\begin{itemize}
    \item I had a terrible experience. The nausea, headaches, and dizziness were unbearable. I couldn’t eat, felt constantly sick, and had to stop within days. I wouldn’t recommend this medication to anyone.
    \item I’ve been on Contrave for weeks and haven’t lost weight. The side effects, like nausea and tiredness, make it hard to continue. I’m frustrated that it’s not working as I hoped.
    \item This medication suppressed my appetite, but the side effects like dry mouth, dizziness, and digestive issues made it hard to tolerate. I did lose some weight, but I’m unsure if I’ll keep taking it.
    \item Contrave is helping me lose weight, and I’ve noticed fewer cravings. The side effects were manageable and improved over time. I feel optimistic and healthier.
\end{itemize}

\paragraph{Contextless Zero-Shot}
\begin{itemize}
    \item Contrave has been a game-changer for me. It significantly reduced my cravings and helped me control my appetite. I’ve lost weight steadily and feel like I’m finally in control of my eating habits.
    \item I noticed some initial side effects, like nausea and dizziness, but they went away after the first couple of weeks. It’s been helpful for my weight loss journey overall.
    \item Honestly, Contrave didn’t work for me. I stuck with it for a couple of months, but the side effects were too much to handle, and I barely saw any weight loss.
    \item It’s okay, but I feel like the results are slower than I expected. It helps curb my appetite a bit, but I still need to work hard with diet and exercise.
    \item Contrave caused too many issues for me. I was constantly nauseous, couldn’t focus, and felt worse than before I started it. I had to stop taking it altogether.
\end{itemize}

\subsubsection{Bowling Green}

\paragraph{PROSE}
\begin{itemize}
    \item Refugees enrich communities and deserve warm welcomes.
    \item Warren County deserves countywide high-speed internet as a basic public utility!
    \item Traffic laws need strict enforcement for safer roads.
    \item Traffic needs fixing before any new developments.
    \item Every family deserves true school choice access!
    \item Animal cruelty laws must be strict and enforced.
    \item Immigrants need better integration and support systems.
    \item Sex ed must be comprehensive and medically accurate.
    \item Traffic flow improvements are urgently needed; congestion makes driving unsafe and frustrating.
    \item Traffic flow improvements and safer intersections are urgently needed in Bowling Green.
    \item We absolutely need faster, affordable internet expanded countywide to underserved areas.
    \item BG needs traffic cameras to improve safety now.
    \item School district boundaries urgently need major reform.
    \item Internet options must expand for residential access.
    \item Bowling Green desperately needs better traffic solutions.
    \item Bowling Green needs fair access to education.
    \item Bowling Green desperately needs more pet resources!
    \item Schools must address bullying immediately and effectively.
    \item Fund arts education now.
\end{itemize}

\paragraph{PROSE-UnitCost}
\begin{itemize}
    \item Bowling Green desperately needs to fix its traffic chaos—limit unnecessary lights, add turn lanes, block left turns on busy roads, and install traffic cameras to enforce order. It’s ridiculous how poorly planned and congested our roads are, especially during rush hours. And while we’re at it, let’s hold BGMU accountable for their rates, expand internet infrastructure, and finally give this city an ice rink and a proper bypass. These are basic improvements that would make life here so much better.
    \item Every child deserves access to quality education, whether through better-funded arts programs, equitable school choice, or addressing overcrowding in Warren County schools. It's time to prioritize students' needs over outdated district lines and ensure every family, regardless of income or location, has the opportunity to choose the best education for their kids.
    \item Stop wasting taxpayer money on frivolous projects like fountains and start addressing real community needs—fix the traffic nightmare, enforce parking laws, and invest in education and youth programs. Overdevelopment of rentals is killing our schools, and we need a stronger tax base to support them. Also, animal welfare laws must be enforced, and pet owners need to be held accountable. Build a homeless park with tiny homes to give people dignity, and make public spaces safer and more accessible for everyone, especially the elderly and disabled. Prioritize people over cars, enforce parking regulations, and create a city that works for its residents, not just developers or convenience. Enough is enough—it's time for real change!
    \item BGMU needs to stop hiding behind vague "averages" and actually deliver affordable, reliable internet to all residents—this is 2023, and people in surrounding counties have better options! Also, Bowling Green schools are failing students by ignoring bullying and refusing to implement real, comprehensive sex education. It’s time to stop pretending these issues aren’t affecting our community. And don’t even get me started on the traffic mess at Shive Lane or the lack of a proper bypass—fix it already!
    \item Bowling Green thrives because of its diverse refugee and immigrant communities, but the city desperately needs to hold landlords accountable for the overpriced, crumbling rentals they offer. It's time to demand better housing options, including pet-friendly spaces, and ensure new developments don’t outpace the infrastructure we rely on. Let’s fix the traffic mess on Fairview with a proper roundabout and stop approving projects that strain our roads. This city can do better, and it starts with prioritizing its people and their quality of life.
\end{itemize}

\paragraph{Clustering}
\begin{itemize}
    \item Fairview Ave desperately needs a traffic circle at Kereiakes Park to fix the chaotic flow and reduce reckless driving through red lights.
    \item Internet should be a publicly managed utility, with residential fiber access prioritized. BGMU must lead, ensuring equitable, high-speed internet for all Warren County residents.
    \item Parking laws need strict enforcement—cars clogging streets or ignoring rules create chaos and danger. Police should actively ticket violators and restore order.
    \item Comprehensive, accurate sex education is essential to combat teen pregnancy and STDs.
    \item The school district boundaries and overcrowding issues in Warren County need serious attention. Redistricting and building new schools could better balance student numbers and meet the demands of a growing population.
    \item Bowling Green desperately needs to address its traffic chaos; adding turn lanes, blocking left turns on busy roads, and improving traffic flow should be top priorities for the city!
    \item Landlords must ensure safe, affordable rentals; no more overpriced, run-down units!
    \item Refugees are welcome!
\end{itemize}

\paragraph{Zero-Shot}
\begin{itemize}
    \item We desperately need faster, more reliable internet across Bowling Green and Warren County, especially in underserved areas like Alvaton.
    \item Traffic in Bowling Green is a nightmare. The Bypass doesn't function like a real bypass, and Fairview Avenue near Kereiakes Park needs a roundabout to improve flow.
    \item Schools at all levels need daily physical activity to combat childhood obesity. It’s not optional—it’s a necessity for our kids.
    \item Refugees and immigrants enrich our city. We should implement programs to integrate them better into our community and celebrate their contributions.
    \item Build a homeless park with tiny homes and a central donation center. Our city has to care for its most vulnerable.
    \item Roads and infrastructure should improve alongside new developments. We can’t keep approving projects without a plan for traffic and utilities.
    \item More needs to be done to protect animals. Bowling Green should ban chaining pets and enforce spay/neuter programs through fines.
    \item More affordable, accessible transportation for seniors and disabled residents would make life better for everyone.
    \item Enforcing traffic laws like speeding, running red lights, and expired tags would improve safety and bring in more revenue for the city.
    \item We need more housing options for renters that allow pets, and landlords should be held accountable to maintain their properties.
\end{itemize}

\paragraph{Contextless Zero-Shot}
\begin{itemize}
    \item Bowling Green desperately needs better road infrastructure. The traffic is unbearable during peak hours, and the road conditions are poor. Expanding roads and improving maintenance should be a top priority.
    \item We need more activities and attractions for families and young people. Investing in parks, community centers, and events would help bring the community together. Bowling Green has potential, but it feels stagnant.
    \item Affordable housing should be a focus. Prices are climbing, and it's becoming harder for young families and low-income residents to stay here. More housing developments with reasonable prices are necessary.
    \item Public transportation here is almost nonexistent. We need better bus routes and other transport options for those who don’t drive, especially for seniors and students.
    \item The city could benefit from attracting more diverse businesses. Local jobs are great, but more opportunities in technology or creative industries would bring growth.
    \item Investing in education and local schools is key. Schools need better funding, resources, and attention.
    \item Bowling Green is fine as it is. People complain too much.
\end{itemize}

\section{Additional Discussion}\label{app:disc}
We featured in the discussion in the main body that more powerful queries would allow us to implement classic PB algorithms. 
Consider the setting with approval preferences (which corresponds to our setting when setting $r=2$).
We claim that we could simulate sequential-Phragmén given access to the discriminative query and the following generative query: 
Let $D=(d)_{i\in N}$ be a collection containing each agent's $i$'s current debt. 
    Agents can spend money on a statement but this will increase their debt accordingly. Given $D$, the generative query returns the statement that can be paid for by its approvers and minimizes the maximum debt of an approver. This will correspond to: $\argmin_{\alpha\in \mathcal{U}} \frac{c(\alpha)+\sum_{i\in N_\alpha} d_i}{|N_\alpha|},$ where $N_{\alpha}$ is the set of approvers of statement $\alpha$. 
This query is sufficient to implement the discrete formulation of sequential-Phragmén, e.g., as described by \citet[Definition 4]{rey2023computational}. 

Additionally, more complex queries would also allow us to aim for additional axiomatic guarantees. For instance, EJR+ up to any project for cost utilities as defined by \citet[Definition 16]{BP23} could be achieved given access to the following generative query: Let $\ell,x\in \mathbb{N}$ and $S$ be a group of agents. Generate the statement $\alpha$ that satisfies $\frac{\ell+c(\alpha)}{|\{i\in S\mid u_i(\alpha)=1\}|}\leq x$ and maximizes $|\{i\in S\mid u_i(\alpha)=1\}|$ if existent. 

\end{document}